\documentclass{article}


\usepackage[utf8]{inputenc}
\usepackage{amsmath}
\usepackage{amsthm}
\usepackage{amssymb}
\usepackage[backgroundcolor=gray!20, linecolor=gray, textsize=scriptsize]{todonotes}

\usepackage{hyperref}
\usepackage[capitalize, nameinlink]{cleveref}
\crefname{theorem}{Theorem}{Theorems}
\Crefname{lemma}{Lemma}{Lemmas}
\Crefname{conjecture}{Conjecture}{Conjectures}
\crefname{section}{Section}{Sections}

\usepackage{microtype}
\usepackage{graphicx}
\usepackage{booktabs} 
\usepackage{algorithm}
\usepackage{algorithmic}
\usepackage{thmtools}
\usepackage{thm-restate}
\usepackage{caption}
\usepackage{subcaption}
\usepackage[toc, page]{appendix}

\newtheorem{theorem}{Theorem}
\newtheorem{lemma}{Lemma}
\newtheorem{corollary}{Corollary}
\newtheorem{definition}{Definition}

\newtheorem{remark}{Remark}
\newtheorem{fact}{Fact}

\crefname{claim}{Claim}{Claims}
\crefname{fact}{Fact}{Facts}
\crefname{example}{Example}{Examples}

\DeclareMathOperator*{\argmin}{arg\,min}
\newcommand{\E}{\textrm{\textbf{E}}}

\newcommand{\e}{\textrm{{e}}}
\newcommand{\eps}{\varepsilon}

\newcommand{\OPT}{\textrm{OPT}}
\newcommand{\opt}{\textrm{OPT}}

\renewcommand{\phi}{\varphi}

\newcommand{\fA}{\mathcal{A}}

\newcommand{\fC}{\mathcal{C}}
\newcommand{\fD}{\mathcal{D}}

\newcommand{\fM}{\mathcal{M}}
\newcommand{\poly}{\mathrm{poly}}
\renewcommand{\lg}{\log}
\newcommand{\km}{$k$-means\,}

\newcommand{\kmpp}{\texttt{$k$-means$++$\,}}
\newcommand{\kmpipe}{\texttt{$k$-means$||$\,}}
\renewcommand{\P}{\textrm{P}}
\newcommand{\ain}{A_{\textrm{in}}}
\newcommand{\aout}{A_{\textrm{out}}}
\newcommand{\lsp}{\texttt{Local-search++ } }
\newcommand{\nlsp}{\texttt{Local-search++ } with outliers }
\newcommand\cost[2]{\tau(#1, #2)}
\newcommand{\cst}{\tau}

\newcommand{\ass}{\textrm{reassign}}

\newcommand{\out}{\textrm{outliers}}
\newcommand{\inn}{\textrm{in}}
\newcommand{\outt}{\textrm{out}}
\newcommand{\OPTG}{\textrm{OPT}^\textrm{guess}}
\newcommand{\Xin}{X_{in}}
\newcommand{\Xout}{X_{out}}
\newcommand{\kmc}{$\textrm{AFK-MC}^2$}

\tolerance=1
\emergencystretch=\maxdimen
\hyphenpenalty=10000
\hbadness=10000

 \author{Christoph Grunau\\ETH Zürich \and Václav Rozhoň\\ETH Zürich }
\title{Adapting $k$-means algorithms for outliers}
\date{March 2020}

\begin{document}

\maketitle

\begin{abstract}
 
 This paper shows how to adapt several simple and classical sampling-based algorithms for the $k$-means problem to the setting with outliers.

 Recently, Bhaskara et al. (NeurIPS 2019) showed how to adapt the classical $k$-means++ algorithm to the setting with outliers. However, their algorithm needs to output $O(\log (k) \cdot z)$ outliers, where $z$ is the number of true outliers, to match the $O(\log k)$-approximation guarantee of $k$-means++. In this paper, we build on their ideas and show how to adapt several sequential and distributed $k$-means algorithms to the setting with outliers, but with substantially stronger theoretical guarantees:
our algorithms output $(1+\eps)z$ outliers while achieving an $O(1 / \eps)$-approximation to the objective function.  
 In the sequential world, we achieve this by adapting a recent algorithm of Lattanzi and Sohler (ICML 2019). In the distributed setting, we adapt a simple algorithm of Guha et al. (IEEE Trans. Know. and Data Engineering 2003) and the popular $k$-means$\|$ of Bahmani et al. (PVLDB 2012).
 
 A theoretical application of our techniques is an algorithm with running time $\tilde{O}(nk^2/z)$ \footnote{$\tilde{O}(.)$ hides logarithmic factors in $n$. We assume that the minimum distance between any two points is $1$ and the maximum distance $\Delta$ between any two points is bounded by $\poly(n)$.} that achieves an $O(1)$-approximation to the objective function while outputting $O(z)$ outliers, assuming $k \ll z \ll n$. 
 This is complemented with a matching lower bound of $\Omega(nk^2/z)$ for this problem in the oracle model. 
 
\end{abstract}

\section{Introduction}
\label{sec:paper_intro}

Clustering is a fundamental tool in machine learning and data analysis. It aims to partition a given set of objects into \emph{clusters} in such a way that similar objects end up in the same cluster. 
The classical way of approaching the clustering problem is via the \km formulation. 
In this formulation, one works with a set $X$ consisting of $n$ points in the $d$-dimensional Euclidean space $\mathbb{R}^d$ and the objective is to output a set $C$ consisting of $k$ centers so as to minimize the sum of squared distances of points in $X$ to $C$, i.e.,
\[
\phi(X, C) = \sum_{x \in X} \min_{c \in C} \Vert x - c \Vert^2.
\]
Assigning each point to its closest cluster center naturally induces a partition of the points where nearby points tend to end up in the same partition.  

\paragraph{\km with Outliers} One major drawback of \km in practice is its sensitivity to outliers \cite{gupta2017local}. 
This motivates optimizing a more robust version of the \km objective.
Probably the simplest such formulation is the \emph{\km with outliers} formulation \cite{charikar2001algorithms}. In this formulation, one additionally receives a number $z \in \{0,1,\ldots,n\}$. The aim is to find a set $C$ consisting of $k$ cluster centers and additionally a set $\Xout \subseteq X$ consisting of $z$ \emph{outliers} so as to optimize the cost of \emph{inliers} $\Xin := X \setminus \Xout$. 
That is, we optimize 
$
\min_{\Xout, C} \phi(X \setminus \Xout, C). 
$ 

\paragraph{Known Results for \km with Outliers} It is known that the problem can be approximated up to a constant factor in polynomial time by rounding a certain linear program \cite{chen2008constant, Krishnaswamy2018}. However, the complexity of these known algorithms is a large polynomial. Moreover, as we observe in \cref{sec:paper_nk2z}, every constant factor approximation algorithm that outputs \emph{exactly} $z$ outliers needs to perform $\Omega(z^2)$ queries in the  query model.

This motivates to weaken our requirements and look for fast algorithms that are allowed to output slightly more than $z$ outliers. 
There is a line of work that makes progress toward such algorithms  \cite{charikar2001algorithms, meyerson2004k, gupta2017local, Guha_distributed, bhaskara2020, nkmeans}. However, to the best of our knowledge, known algorithms either need to output $A z$ outliers, for some large constant $A \gg 1$ to obtain a reasonable approximation guarantee, or they suffer from at least a $\Omega(z^2)$ running time and, hence, to be truly applicable even for large $z$ they need to be sped up by coreset constructions \cite{feldman2011unified, huang2018epsilon}. %

\paragraph{Our Contribution} In this work we aim to further close the gap between theory and practice. 
We show how to adapt a number of classical sampling-based $k$-means algorithms to the setting with outliers. The main idea of the adaptations is a reduction to the variant of the problem with penalties described below. This is a known approach \cite{charikar2001algorithms}, but previous reductions of sampling based algorithms \cite{bhaskara2020} necessarily need to output $Az$ outliers for some large constant $A \gg 1$, while  we obtain algorithms that need to output only $(1+\eps)z$ outliers. 
More concretely, our contribution is threefold. 

First, building on ideas of \cite{lattanzi2019better}, we design a simple sampling-based algorithm with running time $\tilde{O}(nk/\eps)$. It outputs an $O(1/\eps)$-approximate solution while declaring at most $(1+\eps)z$ points as outliers. 
This shows that sampling based algorithms that are known to be fast and have good practical performance can also achieve strong theoretical guarantees. 


Second, we devise distributed algorithms for $k$-means with outliers where each machine needs to send only $\tilde{O}(k/\eps)$ bits. Our construction achieves an $O(1)$-approximation while outputting $(1+\eps)z$ outliers. Moreover, each machine only needs to perform polynomial-time computation. This improves on \cite{guo2018} who achieve an $(1+\eps)$-approximation while outputting the same number of outliers as our algorithm, but their computation time is exponential. 

Third, we show that one can achieve an $O(1)$-approximation guarantee while discarding $O(z)$ outliers in time $\tilde{O}(k^2 \cdot n/z )$. 
This is done by speeding-up sampling with the Metropolis-Hastings algorithm as done in \cite{bachem2016approximate, bachem2016fast} together with additional ideas. 
This result is complemented by a matching lower bound of $\Omega(k^2 \cdot (n/z) )$ for $k$-means/$k$-median/$k$-center algorithms that work for an arbitrary metric space accessed by distance queries. 
This improves on \cite{meyerson2004k,huang2018epsilon} who give algorithms in this setting with a running time of $\tilde{O}\left(  k^2 \cdot \left( n / z \right)^2\right)$. 
This is a significant improvement for $z \ll n$. 

\paragraph{Roadmap.} We overview the related work in \cref{sec:paper_previous_work} and our approach in \cref{sec:paper_warmup}. 
Then, we present our contributions in \Cref{sec:paper_localsearch,sec:paper_distributed,sec:paper_nk2z} and our experiments in \cref{sec:paper_experiments}. Technical details are mostly deferred to appendices. 


\paragraph{Notation}
We define $\phi(x, C) = \min_{c \in C} \Vert x - c \Vert^2$
and set $\phi(X, C) = \sum_{x \in X} \phi(x, C)$. 
Similarly, we define $\tau_\Theta(x, C) = \min(\Theta, \phi(x, C))$ and $\tau_\Theta(X, C) = \sum_{x \in X} \tau_\Theta(x, C)$. 
We call an algorithm an $(\alpha, \beta)$-approximation if it outputs a set of $k$ centers $C$ and a set of $z$ outliers $X_{out}$ such that $\phi(X \setminus X_{out}, C) \le \alpha \OPT = \alpha \phi(X \setminus X_{out}^*, C^*)$, where $\OPT$ is the cost of a fixed optimal clustering $C^*$ with a set of outliers $X_{out}^*, |X_{out}^*| = z$. 
Moreover, we define $X_{in} := X \setminus X_{out}$ and $X^*_{in} := X \setminus X^*_{out}$.

\section{Previous Work}
\label{sec:paper_previous_work}

\paragraph{\km Problem}
It is well-known that finding the optimal solution is NP-hard \cite{aloise2009np, mahajan2009planar}. Currently the best known approximation ratio is roughly 6.36 \cite{ahmadian2019better} and an approximation ratio of $(1+\eps)$ can be achieved for fixed dimension $d$ \cite{friggstad2019local} or fixed $k$ \cite{kumar2004simple}. 
From the practical perspective, Lloyd's heuristic \cite{lloyd1982least, lloyd_survey} is the algorithm of choice. 
As Lloyd's algorithm converges only to a local optimum, it requires a careful seeding to achieve good performance. 
The most popular seeding choice is the \kmpp seeding \cite{arthur2007k, ostrovsky2013effectiveness}. 
In \kmpp seeding (\cref{alg:kmp_original}) one chooses the first center uniformly at random from the set of inputs points $X$. 
In each of the following $k-1$ steps, one samples a point in $x \in X$ as a new center with probability proportional to its current cost $\phi(x, C)$. The seeding works well in practice and a theoretical analysis shows that even without running Lloyd's algorithm subsequently it provides an expected $O(\log k)$-approximation to the \km objective \cite{arthur2007k}. 

\begin{algorithm}[]
	\caption{\kmpp seeding}
	\label{alg:kmp_original}
	Input: $X$, $k$
	\begin{algorithmic}[1]
		\STATE Uniformly sample $c \in X$ and set $C = \{ c \}$.
		\FOR{$i \leftarrow 2, 3, \dots, k$}
		\STATE Sample $c \in X$ with probability $\phi(c, C) / \phi(X, C)$ and add it to $C$.
		\ENDFOR
		\STATE \textbf{return} $C$
	\end{algorithmic}
\end{algorithm}



\paragraph{\km with Outliers}

There is a growing body of research related to the \km with outliers problem. 
On the practical side, Lloyd's algorithm is readily adapted to the noisy setting \cite{kmeans-}, but the output quality still remains dependent on the initial seeding. 
On the theoretical side, constant approximation algorithms based on the method of successive local search \cite{chen2008constant, Krishnaswamy2018} are known to provide a constant approximation guarantee. 
However, their running time is a large polynomial in $n$. 
We are interested in fast algorithms that are allowed to output slightly more than $z$ outliers. Several algorithms have been proposed \cite{charikar2001algorithms, meyerson2004k, gupta2017local, Guha_distributed, bhaskara2020, nkmeans}, but they either need at least $\Omega(z^2)$ time  or they need to output at least  $Cz$ outliers for some $C \gg 1$. 
The algorithms can in general be sped up by coreset constructions \cite{feldman2011unified, gupta2017local, huang2018epsilon, nkmeans}. 
However, there is still need for fast and simple algorithms with strong guarantees. 

\paragraph{\km with (Uniform) Penalties}

\km with penalties is a different way of handling outliers introduced in the seminal paper of Charikar et al.~\cite{charikar2001algorithms}. 
In the version of the problem with uniform penalties, we are given some positive number $\Theta$ and the goal is to output a set of centers $C = \{c_1, c_2, \dots, c_k\}$ so as to minimize the expression $\tau_\Theta(X, C) = \sum_{x\in X} \min\left( \Theta, \min_{c\in C} \Vert x - c \Vert^2 \right)$. 
That is, the cost of any point is bounded by a threshold $\Theta$. 

It turns out that it is usually much simpler to work with \km with penalties than \km with outliers. 
This is quite helpful because results for \km with penalties can be turned into results for \km with outliers \cite{charikar2001algorithms, guo2018, bhaskara2020, bhaskara2020online}. 
We also take this approach in this paper. We formalize the reduction in the next section and also present our improvement of it for sampling based algorithms. 



\section{Warmup: Reducing \km with Outliers to \km with Penalties}
\label{sec:paper_warmup}

Our approach is based on reducing the problem of $k$-means with outliers to the problem of $k$-means with penalties. 
In this section, we first review how one can reduce the problem of $k$-means with outliers to the problem of $k$-means with penalties. 
Then we review an instance of this reduction by \cite{bhaskara2020} and provide a refined result which is the starting point of our work. 
We note that this can be seen as an instantiation of a general principle in optimization where one replaces a constraint with a penalty function known as Lagrangian relaxation \cite{boyd2004convex}. 

\subsection{Review of the Previously Known Reduction} 
We start by giving the following lemma that formalizes how an $\alpha$-approximate solution to the \km with penalties objective can be used to obtain an algorithm providing an $(O(\alpha),O(\alpha))$-approximate solution to the \km with outliers objective. 

\begin{lemma}[\cite{charikar2001algorithms}]
	\label{lem:penalties_to_outliers}
	Let $C$ be an $\alpha$-approximate solution to the \km with penalties objective with penalty $\OPT/(2z) \le \Theta \le \OPT/z$. Let $X_{out}$ denote the set of points $x$ for which $\tau_\Theta(x, C) = \Theta$. 
	Then, it holds that $\phi(X \setminus X_{out},C) = O(\alpha \OPT)$ and $|X_{out}| = O(\alpha z)$. 
\end{lemma}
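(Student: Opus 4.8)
The plan is to relate the penalized cost $\tau_\Theta(X,C)$ to $\OPT$ in both directions, exploiting the two inequalities on $\Theta$ for two different purposes: the upper bound $\Theta \le \OPT/z$ controls how much the penalized optimum can exceed $\OPT$, while the lower bound $\OPT/(2z) \le \Theta$ controls how many points can be charged the full penalty $\Theta$.

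First I would bound the optimal penalized cost. Let $C^*$ be a fixed optimal $k$-means-with-outliers solution with outlier set $X^*_{out}$, $|X^*_{out}| = z$, so that $\phi(X \setminus X^*_{out}, C^*) = \OPT$. Viewing $C^*$ as a feasible solution to the $k$-means-with-penalties objective, each point $x \in X^*_{out}$ contributes at most $\Theta$ and each point $x \in X \setminus X^*_{out}$ contributes at most $\phi(x, C^*)$, so
\[
\tau_\Theta(X, C^*) \;\le\; \OPT + z\Theta \;\le\; 2\,\OPT ,
\]
using $\Theta \le \OPT/z$. Since $C$ is an $\alpha$-approximate solution to the penalized objective, $\tau_\Theta(X, C) \le \alpha\,\tau_\Theta(X, C^*) \le 2\alpha\,\OPT$. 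Next, every $x \in X_{out}$ contributes exactly $\Theta$ to $\tau_\Theta(X,C)$ by the definition of $X_{out}$, hence $|X_{out}|\cdot\Theta \le \tau_\Theta(X, C) \le 2\alpha\,\OPT$; dividing by $\Theta \ge \OPT/(2z)$ yields $|X_{out}| \le 4\alpha z = O(\alpha z)$. Finally, for $x \notin X_{out}$ we have $\phi(x,C) < \Theta$, so $\tau_\Theta(x,C) = \phi(x,C)$, and therefore
\[
\phi(X \setminus X_{out}, C) \;=\; \sum_{x \notin X_{out}} \tau_\Theta(x, C) \;\le\; \tau_\Theta(X, C) \;\le\; 2\alpha\,\OPT \;=\; O(\alpha\,\OPT).
\]

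The argument is short, so there is no serious obstacle; the only point requiring care is keeping straight which of the two bounds on $\Theta$ is used where, and checking the harmless edge case in which $\phi(x,C)$ equals $\Theta$ exactly (such an $x$ is placed in $X_{out}$, consistent with the stated definition). One could also remark that the constants $2$ and $4$ above are an artifact of the particular window $[\OPT/(2z), \OPT/z]$ and are not optimized.
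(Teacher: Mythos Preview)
Your proof is correct and follows essentially the same approach as the paper: bound the optimal penalized cost by $\OPT + z\Theta$, apply the $\alpha$-approximation guarantee to get $\tau_\Theta(X,C) \le \alpha(\OPT + z\Theta) = O(\alpha\,\OPT)$, then read off both conclusions from $\phi(X\setminus X_{out},C)\le \tau_\Theta(X,C)$ and $|X_{out}|\,\Theta \le \tau_\Theta(X,C)$. Your write-up is a bit more explicit about which inequality on $\Theta$ is used at each step, but there is no substantive difference.
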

\begin{proof} 
    Note that the optimal solution for \km with penalties has a cost upper bounded by $\phi(X \setminus X_{out}^*, C^*) + |X_{out}^*|\Theta = \OPT + z\Theta$. 
	As $C$ is an $\alpha$-approximate solution to the \km with penalties objective, we have 
	$\phi(X \setminus X_{out}, C) \leq \tau_\Theta(X, C) \leq \alpha \left(\OPT + z \Theta \right) = O(\alpha \OPT).$
	Moreover, paying $\Theta$ for every $x\in X_{out}$ implies  $|X_{out}| \leq \frac{\tau_\Theta(X, C)}{\Theta} \le  \frac{\alpha(\OPT + z \Theta) }{\Theta} = O(\alpha z).$
\end{proof}

We remark that the penalty $\Theta$ depends on $\OPT$, so the algorithm for $k$-means with outliers needs to try this reduction for all $O(\log (n\Delta^2)) = O(\log n)$ powers of $2$ between $1$ and $n\Delta^2$. 

This reduction is very helpful as it allows us to easily adapt  sampling-based algorithms like \kmpp to the setting with penalties since their analysis generalises to this setting. 
For the case of \kmpp, this was shown in \cite{bhaskara2020, li_penalties2020} in the following theorem. 

\begin{algorithm}[h]
	\caption{\kmpp (over)seeding with penalties}
	\label{alg:kmp}
	Input: A set of points $X$, $k$, $\ell$, threshold $\Theta$
	\begin{algorithmic}[1]
		\STATE Uniformly sample $c \in X$ and set $C = \{ c \}$.
		\FOR{$i \leftarrow 2, 3, \dots, \ell$}
		\STATE Sample $c \in X$ with probability $\tau_\Theta(c, C) / \tau_\Theta(X, C)$ and add it to $C$.
		\ENDFOR
		\STATE \textbf{return} $C$
	\end{algorithmic}
\end{algorithm}

\begin{restatable}{theorem}{kmeanspp}[\cite{arthur2007k, bhaskara2020, li_penalties2020}]
	\label{thm:kmpp}
	Suppose we run \cref{alg:kmp} for $\ell = k$ steps. 
	Then, the output set $C$ is an $O(\log k)$-approximation to the $k$-means with penalty $\Theta$ objective, in expectation.

\end{restatable}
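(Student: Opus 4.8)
The plan is to adapt the classical \kmpp analysis of Arthur and Vassilvitskii to the truncated cost $\tau_\Theta$. Fix an optimal solution $C^* = \{c_1^*,\dots,c_k^*\}$ for the $k$-means with penalty $\Theta$ objective, write $\OPT := \tau_\Theta(X,C^*)$ for its value, and let $A_1,\dots,A_k$ be the partition of $X$ obtained by assigning each point to its nearest center in $C^*$. Call a cluster $A_j$ \emph{covered} by the current center set $C$ if $A_j \cap C \neq \emptyset$ and \emph{uncovered} otherwise. The single observation that makes the whole proof go through is that $\tau_\Theta(x,C) = \min(\Theta,\phi(x,C))$ inherits the structural properties used in the original argument: (i) $\min(\Theta,\lambda t) \le \lambda \min(\Theta,t)$ for all $\lambda \ge 1$, and (ii) $\min(\Theta,a+b) \le \min(\Theta,a) + \min(\Theta,b)$. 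Combined with $\phi(x,C) \le 2\|x-y\|^2 + 2\phi(y,C)$, these yield a truncated approximate triangle inequality $\tau_\Theta(x,C) \le 2\|x-y\|^2 + 2\tau_\Theta(y,C)$ and the power-mean-type estimates needed below.

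First I would prove the two standard building blocks in truncated form. \textbf{(a) Uniform seeding of one cluster.} If $a_0$ is drawn uniformly from $A_j$, then $\E[\tau_\Theta(A_j,\{a_0\})] \le 4\,\tau_\Theta(A_j,C^*)$: expand $\|a-a_0\|^2 \le 2\|a-c_j^*\|^2 + 2\|a_0-c_j^*\|^2$ and apply (i) and (ii) termwise before averaging (the classical constant $2$ weakens to $4$ because the centroid variance identity is no longer available under truncation, but any constant suffices). \textbf{(b) $\tau_\Theta$-sampling from an uncovered cluster.} If $A_j$ is uncovered and a new center $a_0$ is sampled from $X$ with probability proportional to $\tau_\Theta(\cdot,C)$, then conditioned on $a_0 \in A_j$ one has $\E[\tau_\Theta(A_j,C\cup\{a_0\})] \le c_0\,\tau_\Theta(A_j,C^*)$ for an absolute constant $c_0$. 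Here the new cost of a point $a$ is $\min\!\big(\tau_\Theta(a,C),\,\min(\Theta,\|a-a_0\|^2)\big)$; one bounds the sampling weight $\tau_\Theta(a_0,C)$ by averaging the truncated triangle inequality over $a \in A_j$ and then applies the power-mean inequality exactly as in the original proof, invoking (i) and (ii) to keep every term capped at $\Theta$.

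Next comes the inductive lemma responsible for the $\log k$ factor. Writing $X_c$ and $X_u$ for the unions of the covered clusters and the $u$ uncovered clusters respectively, the claim is that after adding $t \le u$ more centers by $\tau_\Theta$-sampling, $\E[\tau_\Theta(X,C')] \le \big(\tau_\Theta(X_c,C) + c_0\,\tau_\Theta(X_u,C^*)\big)(1+H_t) + \tfrac{u-t}{u}\,\tau_\Theta(X_u,C)$, where $H_t$ is the $t$-th harmonic number. This is shown by induction on $t$, conditioning on whether the next sampled center lands in a covered cluster (costs do not increase; recurse with $t-1,u$) or in an uncovered one (apply (b); recurse with $t-1,u-1$), and using (ii) to recombine the contributions additively; monotonicity of every inequality used means the truncation never obstructs the argument. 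Finally I would instantiate this with the actual run of \cref{alg:kmp}: the first, uniform, center lands in $A_j$ with probability $|A_j|/|X|$, so by (a) the expected covered cost after step $1$ is at most $4\,\OPT$, with $u = k-1$ uncovered clusters remaining; then $t = \ell - 1 = k-1 = u$ further centers are added, the term $\tfrac{u-t}{u}$ vanishes, and $\E[\tau_\Theta(X,C)] \le (4\OPT + c_0\OPT)(1+H_{k-1}) = O(\log k)\,\OPT$.

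I expect the only real difficulty to be bookkeeping: verifying that each triangle-inequality and power-mean step in the Arthur–Vassilvitskii argument survives replacing $\phi$ by $\tau_\Theta$. Properties (i) and (ii) make this work, but care is needed in building block (b), where the per-point cost $\min\!\big(\tau_\Theta(a,C),\min(\Theta,\|a-a_0\|^2)\big)$ must be manipulated without ever reintroducing a quantity that has already been capped at $\Theta$; beyond these structural facts about $\min(\Theta,\cdot)$, no genuinely new idea appears necessary, which is why the statement is essentially that of the cited works.
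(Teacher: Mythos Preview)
Your approach is correct, but the paper's proof sketch takes a different and much shorter route. Instead of re-deriving each Arthur--Vassilvitskii inequality for $\tau_\Theta$, the paper observes that $d'(a,b) := \min(d(a,b),\sqrt{\Theta})$ is itself a metric (the triangle inequality is immediate by case analysis), so that $\tau_\Theta(a,b) = d'(a,b)^2$ is the squared distance in a genuine metric space. Since the original \kmpp analysis is stated for arbitrary metrics, it applies verbatim to $d'$ and yields the $O(\log k)$ guarantee as a black box---no building blocks (a), (b) or inductive lemma need to be reopened.

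Your step-by-step adaptation is precisely the approach attributed in the paper to \cite{li_penalties2020,bhaskara2020}, and the paper even supplies the analogues of your building blocks in the appendix (Lemmas for uniform and $\tau$-sampling). The trade-off is that the metric trick costs a constant factor (in a general metric the uniform-sampling lemma gives $4$ rather than $2$, because the centroid identity is unavailable), whereas the direct route can recover the original constants; in fact the paper shows the constant $2$ survives even for $\tau_\Theta$ in $\mathbb{R}^d$ via a split into $\inn_\Theta$ and $\outt_\Theta$, so your remark that ``$2$ weakens to $4$'' is more pessimistic than necessary. For an $O(\log k)$ statement none of this matters, and the one-line metric reduction is the cleaner argument.
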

\begin{proof}[Proof sketch]
The  analysis of \cite{arthur2007k} proves this guarantee for \cref{alg:kmp_original} for any $\phi(a,b) := d^2(a,b)$ such that $d$ is a metric. As the distance function $d'(a,b) = \min(d(a,b), \sqrt{\Theta})$ still defines a metric, one can thus directly use their analysis to prove \cref{thm:kmpp}.  
\end{proof}

More details are given in \cref{sec:appendix_lemmas}. 
By \cref{thm:kmpp}, plugging in \cref{alg:kmp} into \cref{lem:penalties_to_outliers} gives
an $(O(\log k),O(\log k))$-approximate algorithm for $\kmpp$ with outliers. Moreover, running \cref{alg:kmp} for $O(k)$ steps results in an $O(1)$-approximation to the $k$-means objective in metric spaces \cite{aggarwal2009adaptive,bhaskara2020}. 
This gives the following tri-criteria approximation via \cref{lem:penalties_to_outliers}: we get an $(O(1), O(1))$-approximation algorithm that needs to use  $O(k)$ centers.

\subsection{Our Improved Reduction}
The starting point of our work is the following improvement of the tri-criteria result from the previous subsection that enables us to get a constant factor approximation algorithm that outputs only $(1+\eps)z$ outliers, which is not possible by using \cref{lem:penalties_to_outliers} as a black box. The catch is that we need to use $O(k/\eps)$ centers. 


\begin{theorem}
\label{thm:aggarwal_informal}
 Running \cref{alg:kmp} for $\ell = O(k / \eps )$ iterations and $\OPT / (2\eps z) \le \Theta \le \OPT / (\eps z)$ results in a set $C$ with $\tau_\Theta(X^*_{in}, C) = 20 \OPT$, with positive constant probability. 
\end{theorem}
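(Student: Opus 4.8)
The plan is to adapt the analysis of the $k$-means++ algorithm with oversampling (the $\bigO(k)$-centers $\bigO(1)$-approximation result of Aggarwal, Deshpande and Kannan) to the penalized/truncated metric $d'(a,b) = \min(d(a,b),\sqrt{\Theta})$, and then to sharpen it so that the approximation is measured against the inliers $X^*_{in}$ only, which is exactly what lets us afford just $(1+\eps)z$ outliers later. As noted in the proof sketch of \cref{thm:kmpp}, $d'$ is again a metric and $\tau_\Theta(x,C) = d'(x,C)^2$, so running \cref{alg:kmp} is literally running $k$-means++ seeding in the metric $d'$. The key structural fact about the optimal solution in this truncated world is: partition $X^*_{in}$ into the optimal clusters $X_1^*,\dots,X_k^*$; then $\sum_i \tau_\Theta(X_i^*, \{c_i^*\}) \le \phi(X^*_{in},C^*) = \OPT$, since truncation only decreases cost. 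Thus in the $d'$-metric the optimal $k$-clustering of $X^*_{in}$ costs at most $\OPT$, while the $z$ true outliers each contribute at most $\Theta$, so the whole point set $X$ has a $k$-clustering of $d'$-cost at most $\OPT + z\Theta \le \OPT + \OPT/\eps = \bigO(\OPT/\eps)$.

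Next I would run the standard "covered vs.\ uncovered clusters" argument of Aggarwal et al. Call an optimal cluster $X_i^*$ \emph{covered} once the current center set $C$ contains a point from $X_i^*$ (or, in the more careful version, once a center has been sampled that is "close enough" to make $\tau_\Theta(X_i^*, C)$ comparable to $\tau_\Theta(X_i^*,\{c_i^*\})$ up to a constant); uncovered otherwise. One shows, by the same potential/martingale argument as in \cite{arthur2007k,aggarwal2009adaptive}, that in each sampling step, conditioned on the current $C$, the probability of sampling a point from some uncovered cluster is proportional to the total $\tau_\Theta$-mass sitting on uncovered clusters, and whenever this happens we "cover" a new optimal cluster while only increasing the accounted cost by a constant multiple of that cluster's optimal (truncated) cost. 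The number of steps needed to cover all $k$ optimal clusters with constant probability is $\bigO(k)$ in the vanilla analysis; to drive the failure probability down and to guarantee we are left with $\tau_\Theta(X^*_{in},C)$ small rather than just $\tau_\Theta(X,C)$ small, I run $\ell = \bigO(k/\eps)$ steps and track the cost restricted to $X^*_{in}$. The point is that the outliers contribute at most $z\Theta \le \OPT/\eps$ to $\tau_\Theta(X,C)$ at all times, so a bound of the form $\tau_\Theta(X,C) = \bigO(\OPT/\eps)$ does not by itself give $\tau_\Theta(X^*_{in},C) = \bigO(\OPT)$; one has to argue directly that the inlier clusters get covered, charging the sampling of an outlier as "wasted" but bounding the number of such wasted steps. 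With $\ell = \bigO(k/\eps)$ and constant probability, every optimal inlier cluster is covered and the residual truncated cost on $X^*_{in}$ is at most a constant times $\OPT$; tuning the constants gives the clean bound $\tau_\Theta(X^*_{in},C) \le 20\,\OPT$.

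The main obstacle I anticipate is precisely this last point: decoupling the inliers from the outliers inside the coupon-collector-style analysis. In plain $k$-means++ oversampling one is happy with a bound on the cost of the \emph{whole} input; here a constant-factor bound on $\tau_\Theta(X,C)$ is useless because the $z\Theta$ term it contains is $\Theta(\OPT/\eps)$, and if we passed that to \cref{lem:penalties_to_outliers}-style reasoning we would again be forced to discard $\Omega(z/\eps)$ points. So the technical heart is to show that outlier points act only as a "tax" on the number of useful sampling steps — each step samples an outlier with bounded probability and the number of outlier-samples before all inlier clusters are covered is $\bigO(k/\eps)$ with constant probability — rather than corrupting the final cost estimate on $X^*_{in}$. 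Handling the conditioning carefully (the events "cluster $i$ becomes covered at step $t$" are not independent) and the constant-probability amplification over $\bigO(k/\eps)$ steps are the parts that need the most care; the rest is a direct transcription of \cite{aggarwal2009adaptive,bhaskara2020} into the metric $d'$.
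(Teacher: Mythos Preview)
Your proposal is correct and matches the paper's approach essentially line for line: the paper defines a cluster $X_i^*$ as \emph{unsettled} when $\tau_\Theta(X_i^*,C) \ge 10\,\tau_\Theta(X_i^*,c_i^*)$, shows that while $\tau_\Theta(X^*_{in},C) \ge 20\,\OPT$ each sampling step lands in $X^*_{in}$ with probability at least $\eps/2$ (using $\tau_\Theta(X^*_{out},C)\le z\Theta\le \OPT/\eps$), then in an unsettled cluster with conditional probability at least $1/2$, and then settles it with probability at least $1/5$ via \cref{cor:10apx}, so each step is ``good'' with probability at least $\eps/20$ and a Chernoff bound over $O(k/\eps)$ steps finishes. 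Your covered/uncovered framing, the identification of outlier samples as a wasted ``tax'' that forces the $1/\eps$ blowup in the number of rounds, and the concentration wrap-up are exactly the paper's argument; the only cosmetic difference is that the paper uses the cost-threshold definition of settled from the outset rather than the ``contains a point from $X_i^*$'' version, which you already flag in your parenthetical.
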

\begin{proof}[Proof sketch (full proof in \cref{sec:appendix_tricriteria})]
For an optimal set of centers $C^* = \{c_1^*, \dots, c_k^*\}$, we define $X_i^* \subseteq X_{in}^*$ as the subset of points $x \in X_{in}^*$ with $c_i^* = \argmin_{c^* \in C^*} \phi(x, c^*)$, where ties are broken arbitrarily. 
Fix one iteration of \cref{alg:kmp} and let $C$ be its current set of centers. 
We refer to a cluster $X_i^*$ as \emph{unsettled} if $\tau_\Theta(X_i^*, C) \ge 10 \tau_\Theta(X_i^*, C^*)$. 

Suppose that $\tau_\Theta(X^*_{in}, C) \ge 20\OPT$, since otherwise we are already done. We sample a new point $c$ from $X^*_{in}$ with probability
\[
\frac{\tau_\Theta(X^*_{in}, C)}{\tau_\Theta(X^*_{in}, C) + \tau_\Theta(X^*_{out}, C)}  \ge \frac{20\OPT}{20\OPT + \tau_\Theta(X^*_{out}, C)}  \ge \frac{20\OPT}{20\OPT + \OPT/\eps} \ge \frac{\eps}{2},
\]
where we used $\tau_\Theta(X^*_{in}, C) \ge 20\OPT$,  $\tau_\Theta(X^*_{out}, C) \le |X^*_{out}|\Theta \le \OPT/\eps$, and that $\eps$ is small enough.  
Moreover, the cost of all settled clusters is bounded by $10 \OPT$, hence given that $c$ is sampled from $X^*_{in}$, the probability that it is sampled from an unsettled cluster is at least
\[
\frac{\tau_\Theta(X^*_{in}, C) - 10\OPT}{\tau_\Theta(X^*_{in}, C)} 
\ge \frac{20 \OPT - 10 \OPT}{20\OPT}
= \frac12,
\]
where we again used $\tau_\Theta(X^*_{in}, C) \ge 20\OPT$. 
Given that $c$ is sampled from an unsettled cluster according to the $\tau_\Theta$-distribution, \cref{cor:10apx} in \cref{sec:appendix_lemmas} tells us that the cluster becomes settled with probability at least $\frac15$. 
Hence, we make a new cluster settled with probability at least $(\eps/2) \cdot \frac12 \cdot \frac15 = \eps/20$. 

By a standard concentration argument, this implies that after $O(k/\eps)$ steps of the algorithm, either all clusters are settled with positive constant probability, and hence we have $\tau_\Theta(X_{in}^*, C) \le 10\OPT$ and we are done, or during the course of the algorithm, the condition $\tau_\Theta(X_{in}^*, C) \ge 20 \OPT$ stopped being true, in which case we are again done. 
\end{proof}

Note that setting $\eps = 1$ results in the same tri-criteria result we discussed in the previous subsection: This holds as $\tau_
\Theta(X^*_{in}, C) = O(\OPT)$  implies that $\tau_\Theta(X, C) \le \tau_\Theta(X^*_{in}, C) + z\Theta = O(\OPT/\eps)$. 

However, we can get more out of \cref{thm:aggarwal_informal}: note that as $\tau_\Theta(X^*_{in}, C) = O(\OPT)$, for the set $X_{out}$ defined as those $x \in X$ with $\tau_\Theta(x, C) = \Theta$, we have that $|X_{out} \cap X_{in}^*| = \frac{O(\OPT)}{\Theta} = O(\eps z)$. 
Hence, setting $X_{out}$ as our output set of outliers gives on one hand
\[
|X_{out}| \le |X_{out}^*| + |X_{out} \cap X_{in}^*| = z + O(\eps z). 
\]
On the other hand, we can bound
\[
\phi(X  \setminus X_{out}, C) 
= \tau_\Theta(X  \setminus X_{out}, C) 
\le \tau_\Theta(X , C) =  O(\OPT + z \Theta) = O(\OPT / \eps).
\]

Hence, we obtain an $O(1/\eps)$ approximation guarantee while outputting just $(1+\eps)z$ outliers.\footnote{In this particular case, we can even get an $O(1)$-approximation guarantee by labelling the furthest $(1+O(\eps))z$ points as outliers, since the set $X_{out}\cup X_{out}^*$ has size at most $(1+O(\eps))z$ and $\phi(X \setminus (X_{out}\cup X_{out}^*), C) \le \tau_\Theta(X_{in}^*, C) = O(\OPT)$. }
By itself, \cref{thm:aggarwal_informal} is still not satisfactory, as it requires us to oversample the number of centers by a factor of $O(1/\eps)$. 
However, we next show three directions of improvement that lead to more interesting results.

\section{Fast Sequential Algorithm}
\label{sec:paper_localsearch}

In this section, we present a simple sequential sampling-based algorithm for $k$-means with outliers that achieves an $O(1/\eps)$ approximation and outputs $(1+\eps)z$ outliers: 
\begin{theorem}
\label{thm:lattanzi_informal}
For every $ 0 < \eps < 1$, there exists an $(O(1/\eps),1+\eps)$-approximation algorithm for $k$-means with outliers with running time $\tilde{O}(nk/\eps)$.
\end{theorem}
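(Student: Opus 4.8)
The plan is to combine Theorem~\ref{thm:aggarwal_informal} (which, via the \kmpp oversampling algorithm, already yields an $(O(1/\eps), 1+\eps)$-approximation using $O(k/\eps)$ centers) with a local-search post-processing step that prunes the center set back down to $k$ centers without losing more than a constant factor. Concretely, I would first run \cref{alg:kmp} with $\ell = O(k/\eps)$ iterations and the penalty $\Theta$ set to each of the $O(\log n)$ powers of two in the range $[1, n\Delta^2]$, so that for the correct guess we have $\OPT/(2\eps z) \le \Theta \le \OPT/(\eps z)$; by Theorem~\ref{thm:aggarwal_informal} this gives, for that guess, an oversampled center set $B$ of size $O(k/\eps)$ with $\tau_\Theta(X_{in}^*, B) = O(\OPT)$. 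The point of passing through the penalty formulation is that it lets us work with the bounded cost function $\cst_\Theta$ throughout, which keeps individual outlier contributions under control.

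The main work is the second phase: given the oversampled set $B$ with $\tau_\Theta(X,B) = O(\OPT/\eps)$, I would run a \lsp-style local search (in the spirit of Lattanzi--Sohler~\cite{lattanzi2019better}, adapted to the $\cst_\Theta$ objective) that maintains a set $C$ of exactly $k$ centers, initialized by \kmpp seeding on $X$ with cost function $\cst_\Theta$, and then performs $O(k/\eps)$ rounds of the following swap: sample a candidate point $c$ with probability proportional to $\cst_\Theta(c,C)/\cst_\Theta(X,C)$, and if swapping $c$ in for the center of $C$ whose removal increases the cost the least yields a multiplicative improvement, perform the swap. The key structural lemma to prove is that as long as $\cst_\Theta(X,C)$ is more than a constant times $\tau_\Theta(X,B) = O(\OPT/\eps)$, a single such randomized swap decreases $\cst_\Theta(X,C)$ by a $(1 - \Omega(1/k))$ factor in expectation — this is exactly the Lattanzi--Sohler argument, where one charges the improvement to the "best" center among the reference set $B$ using the fact that $\cst_\Theta$ (being $\min(\Theta,\cdot)$ of a squared metric) still enjoys the approximate-triangle-inequality / reassignment bounds that their analysis needs. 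Iterating, after $O(k/\eps)$ swaps we reach $\cst_\Theta(X,C) = O(\tau_\Theta(X,B)) = O(\OPT/\eps)$ with constant probability.

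From there, the reduction back to outliers is the same argument sketched after Theorem~\ref{thm:aggarwal_informal}: let $X_{out}$ be the set of $x$ with $\cst_\Theta(x,C) = \Theta$. Since $\cst_\Theta(X_{in}^*, C) \le \cst_\Theta(X, C) = O(\OPT/\eps)$ is not quite strong enough by itself, I would instead argue directly — either re-run the settling analysis of Theorem~\ref{thm:aggarwal_informal} for the local-search variant to get the sharper $\cst_\Theta(X_{in}^*, C) = O(\OPT)$ bound, or simply note that $\cst_\Theta(X,C) = O(\OPT/\eps)$ together with $\Theta = \Omega(\OPT/(\eps z))$ already gives $|X_{out}| \le \cst_\Theta(X,C)/\Theta = O(z)$, and then sharpen to $(1+\eps)z$ by the same split $X_{out} \subseteq X_{out}^* \cup (X_{out}\cap X_{in}^*)$ with $|X_{out}\cap X_{in}^*| \le \cst_\Theta(X_{in}^*,C)/\Theta = O(\eps z)$ once the $O(\OPT)$ bound on inliers is in hand. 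Finally, $\phi(X\setminus X_{out}, C) = \cst_\Theta(X\setminus X_{out}, C) \le \cst_\Theta(X,C) = O(\OPT/\eps)$, giving the claimed $(O(1/\eps), 1+\eps)$ guarantee; running over all $O(\log n)$ guesses of $\Theta$ and all $O(\log n)$ guesses implicit in the seeding and returning the best feasible solution costs only logarithmic overhead, so the total running time is $O(k/\eps)$ rounds each costing $\tilde O(n)$ (maintaining nearest-center distances), i.e. $\tilde O(nk/\eps)$.

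The step I expect to be the main obstacle is making the Lattanzi--Sohler single-swap improvement lemma go through cleanly for the truncated cost $\cst_\Theta$ rather than for $\phi$ on a genuine metric: their analysis relies on reassigning points from a removed center to nearby reference centers and bounding the incurred cost via triangle-type inequalities, and one must check that truncation at $\Theta$ never hurts these bounds (it should only help, since $\cst_\Theta \le \phi$ and $\sqrt{\min(d^2,\Theta)}$ is still a metric as noted in the proof sketch of Theorem~\ref{thm:kmpp}), and that the reference set against which we charge is the oversampled $B$ of size $O(k/\eps)$ rather than a size-$k$ optimum — which is why we only get convergence to $O(\tau_\Theta(X,B))$ and need the factor $O(1/\eps)$ more swaps. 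Secondarily, one must be careful that the $O(\log n)$ guessing of $\Theta$ interacts correctly with the "return the best solution" step, since the quantity we can actually evaluate is $\phi(X\setminus X_{out},C)$ with $|X_{out}|$ slightly more than $z$, and we need the correct guess to dominate.
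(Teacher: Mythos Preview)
Your high-level plan --- run \lsp on the penalized objective $\tau_\Theta$ with $\Theta \approx \OPT/(\eps z)$, guess $\Theta$ by doubling --- matches the paper. What is missing is precisely the step you flag as ``the main obstacle'', and your proposed resolutions do not work.

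Adapting the Lattanzi--Sohler swap lemma verbatim to $\tau_\Theta$ (which, as you say, is easy since $d' = \min(d,\sqrt{\Theta})$ is a metric) gives only $\tau_\Theta(X,C) = O(\OPT/\eps)$, hence $|X_{out}| \le \tau_\Theta(X,C)/\Theta = O(z)$. To reach $(1+\eps)z$ you need the sharper control $|X_{out}\cap X_{in}^*| = O(\eps z)$, i.e.\ $\tau_\Theta(X_{in}^*,C) = O(\OPT)$, and your two suggested routes both fail. ``Re-running the settling analysis of \cref{thm:aggarwal_informal}'' does not transfer: that analysis works because centers are only ever \emph{added}, so a settled cluster stays settled; in local search each step also \emph{removes} a center, which can unsettle clusters, and there is no monotone potential of the form ``number of settled clusters''. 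Your other suggestion --- charge the swap against the oversampled set $B$ of size $O(k/\eps)$ --- does not fit the Lattanzi--Sohler machinery either: their argument rests on a bijection between the $k$ current centers and the $k$ optimal clusters (matched/lonely/popular), and there is no such bijection when the reference set has $\Theta(k/\eps)$ points.

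The paper's proof supplies exactly the missing piece with a genuinely new argument. After reaching $\tau_\Theta(X,C) = O(\OPT/\eps)$ via the straightforward adaptation (their ``Phase~1''), they show a second swap lemma (\cref{lemma:onestephard}): as long as more than $(1+\eps)z$ points have squared distance $\ge 10\Theta$ to $C$, a single swap decreases the shifted potential $\tau_\Theta(X,C) - z\Theta$ by a $(1-\Omega(1/k))$ factor with probability $\Omega(\eps)$. This requires a refined reassignment-cost bound that explicitly carries the terms $\Theta|\out(c_i)| - \tau(\out(c_i),c_i)$ for the outliers currently served by each center, together with an ``ignored cluster'' argument (\cref{lemma:ignored}) explaining why the $10\Theta$ threshold forces entire optimal clusters to be either fully covered or fully ignored. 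Iterating for $O((k/\eps)\log(1/\eps))$ steps and remembering the best intermediate solution then yields a step at which $|\outt_{10\Theta}(X,C)| \le (1+\eps)z$ while $\tau_\Theta(X,C)$ is still $O(\OPT/\eps)$, and those become the declared outliers. This second phase is the heart of the proof and is not a routine consequence of either \cref{thm:aggarwal_informal} or the black-box Lattanzi--Sohler analysis.
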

The algorithm is based on ideas of a recent paper of Lattanzi and Sohler \cite{lattanzi2019better} who proposed augmenting \kmpp with $\tilde{O}(k)$ \emph{local search} steps \cite{kanungo2004local} as follows: 
their algorithm first invokes \kmpp to obtain an initial set of $k$ centers. 
Afterwards, in each local search step, the algorithm samples a $(k+1)$-th point from the same distribution as \kmpp.   
After sampling that point, the algorithm iterates over all $k+1$ current centers and takes out the one whose deletion raises the cost the least (see \cref{alg:localsearch}). 
Running \kmpp followed by $O( k)$ steps of \lsp is known to yield an $O(1)$-approximation for the cost function $\phi = \tau_\infty$ \cite{choo2020kmeans}, with a positive constant probability.

\begin{algorithm}[h]
	\caption{One step of \lsp}
	\label{alg:localsearch}
	{\bfseries Input:} $X$, $C$, threshold $\Theta$
	\begin{algorithmic}[1]
		\STATE Sample $c \in X$ with probability $\tau_\Theta(c, C) / \tau_\Theta(X, C)$
		\STATE $c' \leftarrow \argmin_{d \in C \cup \{c\}} \tau_\Theta(X, C \setminus \{ d \} \cup \{ c \})$
		\STATE \textbf{return} $C \setminus \{ c' \} \cup \{ c \}$
	\end{algorithmic}
\end{algorithm}

We get \cref{thm:lattanzi_informal} by proving that $\tilde{O}(k/\eps)$ iterations of \cref{alg:localsearch} with $\OPT/(2\eps z) \le \Theta \le \OPT/(\eps z)$ result in an $(O(1/\eps), 1+\eps)$-approximation guarantee. 
The analysis deals with new technical challenges and is deferred to \cref{sec:appendix_localsearch}. 
We note that with minor changes to the original analysis of Lattanzi and Sohler, one can show that their result generalizes for arbitrary $\tau_\Theta$, similarly as \cref{thm:kmpp}, and, hence, by \cref{lem:penalties_to_outliers} one obtains an $(O(1),O(1))$-approximation algorithm. 
Our refined analysis, crucially, does \emph{not} use \cref{lem:penalties_to_outliers} as a black box. Instead, we use the idea of the lemma as a building stone for the rather intricate analysis of the algorithm. 

The intuition behind our proof of \cref{thm:lattanzi_informal} comes from \cref{thm:aggarwal_informal}: the difference between running $k$-means++ for $O(k/\eps)$ steps and the local search algorithm we described above is that the local search algorithm additionally  removes one point after each sampling step. 
One can still hope that the increase in cost due to the removals is dominated by the decrease in cost due to the newly sampled center making an unsettled cluster settled, as we have seen in the analysis of \cref{thm:aggarwal_informal}. This is indeed the case. 
We note that a local search based algorithm was also considered by \cite{gupta2017local}, though with substantially weaker guarantees than our algorithm.

\section{Distributed Algorithms for $k$-means with Outliers}
\label{sec:paper_distributed}

To model the distributed setting, we consider the coordinator model where the input data $X = X_1 \sqcup X_2 \ldots \sqcup X_m$ is split across $m$ machines. 
Each machine can first perform some local computation. Then, each machine sends a message to a global coordinator who computes the final set of cluster centers $C$. 
The main complexity measure is the total number of bits each machine needs to send to the coordinator. 
An informal version of our main distributed result states the following. 

\begin{theorem}
\label{thm:distributed_main}
There exists an $(O(1),1+\eps)$-approximate distributed algorithm in the coordinator model such that each machine sends at most $\tilde{O}(k/\eps)$ many bits. Moreover, each machine only needs to perform polynomial-time computations.
\end{theorem}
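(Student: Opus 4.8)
The plan is to run the classical ``outliers-to-penalties'' reduction (\Cref{lem:penalties_to_outliers}, and, crucially, the refined argument behind \Cref{thm:aggarwal_informal}) inside a one-round Guha-style protocol, and then to re-extract the outliers from the cost profile of the final centers exactly as in the footnote following \Cref{thm:aggarwal_informal}. Since the coordinator does not know $\OPT$, the whole protocol is run once for each of the $O(\log(n\Delta))$ guesses $G=2^i$ and the best candidate is kept; this multiplies communication by only an $\tilde{O}(1)$ factor, and a second, $\tilde{O}(1/\eps)$-bit round in which each machine reports a histogram of its local costs lets the coordinator compare candidates and fix the outlier threshold. Fix one guess and set $\Theta\in[G/(2\eps z),\,G/(\eps z)]$, the ``$\eps$-regime'' threshold used in \Cref{thm:aggarwal_informal}.

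In the local step each machine $j$ runs \Cref{alg:kmp} (equivalently \kmpipe) on $X_j$ with penalty $\Theta$ for $\ell=\tilde{O}(k/\eps)$ steps, producing a candidate set $S_j$ with $|S_j|=\ell$; it then computes, for each $s\in S_j$, the integer $w_s$ equal to the number of points of $X_j$ assigned to $s$, and sends the $\ell$ indices of $S_j$ together with the weights. This is $\tilde{O}(k/\eps)$ bits and $\poly$-time local work. The coordinator forms the weighted instance $(W,w)$ with $W=\bigcup_j S_j$, $|W|=O(mk/\eps)$, runs a polynomial-time constant-factor algorithm for weighted \km with penalty $\Theta$ on $(W,w)$ (e.g. LP rounding, or the local-search machinery behind \Cref{thm:lattanzi_informal}), and outputs the resulting $k$ centers $C$; the reported outlier set is the $(1+O(\eps))z$ points of largest $\phi(\cdot,C)$.

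The analysis rests on three pieces. (i) A per-machine version of the argument behind \Cref{thm:aggarwal_informal}: with probability $1-\delta/m$ one gets $\tau_\Theta(X_j\cap X^*_{in},S_j)=O(\sum_i\tau_\Theta(X^*_i\cap X_j,C^*)+z_j\Theta)$ with $z_j:=|X_j\cap X^*_{out}|$, because the sampling step draws a true inlier with probability $\Omega(1)$ as long as the current inlier mass exceeds $\max(20\sum_i\tau_\Theta(X^*_i\cap X_j,C^*),\,z_j\Theta)$; a union bound then gives $\sum_j\tau_\Theta(X_j\cap X^*_{in},S_j)=O(\OPT+z\Theta)$ and likewise $\sum_j\tau_\Theta(X_j,S_j)=O(\OPT+z\Theta)$. (ii) A composition inequality: since $\|p-c\|^2\le 2\|p-s\|^2+2\|s-c\|^2$ and $\min(\Theta,a+b)\le\min(\Theta,a)+\min(\Theta,b)$, one has $\tau_\Theta(p,C)\le 2\tau_\Theta(p,s)+2\tau_\Theta(s,C)$ for $s$ the nearest center of $p$ in its machine's $S_j$, so every $k$-set $C$ satisfies $\tau_\Theta(X,C)\le 2\sum_j\tau_\Theta(X_j,S_j)+2\sum_{s\in W}w_s\tau_\Theta(s,C)$, and the same holds with the left side replaced by $\tau_\Theta(X^*_{in},C)$ and the weights on the right replaced by the inlier-weights $|P_s\cap X^*_{in}|$; together with $\sum_{s\in W}w_s\tau_\Theta(s,C^*)=O(\OPT+z\Theta)$ and the constant-factor guarantee of the centralised solver this controls $\tau_\Theta(X^*_{in},C)$. (iii) Outlier re-extraction: once $\tau_\Theta(X^*_{in},C)=O(\OPT)$, at most $O(\OPT/\Theta)=O(\eps z)$ inliers have $\phi(\cdot,C)\ge\Theta$, so $X_{out}\cup X^*_{out}$ has size $(1+O(\eps))z$ and removing the $(1+O(\eps))z$ costliest points leaves cost at most $\tau_\Theta(X^*_{in},C)=O(\OPT)$, exactly as in the footnote after \Cref{thm:aggarwal_informal}.

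The main obstacle is that the naive combination of (i)--(iii) yields only an $(O(1/\eps),O(1))$-approximation rather than $(O(1),1+\eps)$: the additive $z_j\Theta$ in piece (i) sums to $z\Theta=\Theta(\OPT/\eps)$, and this is genuinely unavoidable for independent per-machine runs, since a machine holding many of the $z$ true outliers but little inlier mass cannot ``burn through'' its outliers in only $\tilde{O}(k/\eps)$ sampling steps and so may misallocate up to $\Theta(z_j\Theta)$ of penalty cost onto true inliers. The crux of the proof is therefore to show that this misallocated cost sits on only $O(\eps z)$ \emph{points} in total (using $\sum_j\sum_i\tau_\Theta(X^*_i\cap X_j,C^*)/\Theta=\OPT/\Theta=O(\eps z)$), that each machine may hence safely discard its own high-cost points before communicating --- transmitting only their count, not the points --- and that the interplay between this local filtering and the outlier budget the coordinator still needs for the \emph{un-filtered} true outliers keeps the total number of declared outliers at $(1+\eps)z$ while keeping the cost at $O(\OPT)$. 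This bookkeeping is delicate precisely because the locally filtered points are invisible to the coordinator and because one must avoid charging the $z$ true outliers twice, once locally and once at the coordinator; I expect it, rather than any single inequality, to be the hard part.
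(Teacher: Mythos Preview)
Your outline is essentially the paper's \emph{simpler} construction (the Guha-style aggregation followed by a centralised solve), and that construction only yields $(O(1/\eps),1+\eps)$, not $(O(1),1+\eps)$. The gap is real, but it is not where you locate it.

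First, your bound in (i) is too weak. Running \cref{alg:kmp} for $\tilde O(k/\eps)$ steps on machine $j$ gives, by the very argument behind \cref{thm:aggarwal_informal}, $\tau_\Theta(X_j\cap X^*_{in},S_j)=O\big(\tau_\Theta(X_j\cap X^*_{in},C^*)+\eps\, z_j\Theta\big)$, with an extra $\eps$ in front of $z_j\Theta$; this is the whole point of spending $k/\eps$ rather than $k$ samples. Summing over $j$ therefore yields $O(\OPT+\eps z\Theta)=O(\OPT)$, not $O(\OPT/\eps)$. So the ``misallocated inlier mass'' you worry about is already only $O(\OPT)$, and local filtering removes at most $O(\eps z)$ true inliers, exactly as the paper states.

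The actual $1/\eps$ loss sits in step (ii). After filtering, up to $z$ true outliers survive because they happen to lie within $\sqrt\Theta$ of some $s\in S_j$; each such outlier gets collapsed onto $s$ with movement cost up to $\Theta$, so the weighted instance $(W,w)$ differs from $X$ by up to $z\Theta=\OPT/\eps$. Consequently the penalty optimum of $(W,w)$ is only $O(\OPT/\eps)$, the coordinator's $C$ satisfies merely $\sum_s w_s\tau_\Theta(s,C)=O(\OPT/\eps)$, and you cannot conclude $\tau_\Theta(X^*_{in},C)=O(\OPT)$, which your step (iii) needs. Sending just the counts $w_s$ loses the information that would let the coordinator separate these collapsed outliers from the inliers sitting at the same center.

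The paper's fix is different from your suggested bookkeeping: each machine additionally sends, for every $y\in S_j$ and every scale $t\in[\log\Delta]$, the number of (non-filtered) points at distance in $[2^t,2^{t+1})$ from $y$. The coordinator then builds an \emph{almost-metric} instance containing, for each $y$, copies $y^t$ at distance $2^t$ from $y$, weighted by these histogram counts, and runs a polynomial-time $(O(1),1)$-approximation algorithm for $k$-means \emph{with outliers} (not penalties) on it with roughly $z-|X_{out}|+O(\eps z)$ outliers. The key inequality is that for $x\in B_t(y)$ one has $\phi(x,f(c))\le 4\phi'(y^t,c)$ for the almost-metric cost $\phi'$, so mapping back loses only a constant factor per point---the surviving true outliers are now correctly priced at roughly their true distance rather than at zero, and the coordinator's outlier budget can discard them without incurring the $z\Theta$ movement charge. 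This, together with the correct $O(\OPT)$ bound from (i), gives $\phi^{-(1+O(\eps))z}(X,C)=O(\OPT)$.
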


This improves on a recent result of \cite{guo2018}. They give an algorithm with a $(1+\eps, 1+\eps)$-approximation guarantee, but the required local running time is exponential. Other results similar to ours include \cite{Guha_distributed, chen2018practical}. 
Below, we sketch the high-level idea of the constructions and leave details to \cref{sec:appendix_distributed}. 

\paragraph{Simpler Construction}
In this paragraph we explain the high-level intuition behind a weaker result that provides an $(O(1/\eps), 1+\eps)$-approximation. 
This generalizes a classical construction of \cite{guha2003streaming} that works in the \km setting. 
For simplicity, we assume here that the machines know the value $\OPT$ and we define $\Theta := \OPT/(\eps z)$. This assumption can be lifted at the cost of a logarithmic overhead in the time - and message complexity.

Each machine starts by running \cref{alg:kmp} for $\tilde{O}(k/\eps)$ steps with input $X_j$ to obtain a set of centers $Y_j$. The sets of centers satisfy the property $\sum_j \tau_\Theta(X_j \cap \Xin^*, Y_j) = O(\OPT)$. 
This follows from an argument very similar to \cref{thm:aggarwal_informal} and will be important later on.
Now, let $X_{out,j}$ denote all points in $X_j$ that have a squared distance of at least $\Theta$ to the closest point in $Y_j$. 
All points in $X_{out,j}$ are declared as outliers and the remaining points in $X_j \setminus X_{out,j}$ are moved to the closest point in $Y_j$ which creates a new, weighted instance $X'_j$. 
Each machine sends its weighted instance together with the number of declared outliers to the coordinator. The coordinator combines the weighted instances and finds an $(O(1/\eps),1+\eps)$-approximate clustering using the algorithm guaranteed by \cref{thm:localsearch} on the instance $X'$, but with the number of outliers equal to $z' = z - |X_{out}| + O(\eps z)$, where $\Xout := \bigcup_j X_{out,j}$. 
Let $C$ denote the corresponding set of cluster centers and $\Xout'$ denote the set consisting of the $z'$ outliers. 
In \cref{thm:coreset_formal} in \cref{sec:appendix_distributed} we prove that $(C, \Xout \cup \Xout')$ is an $(O(1/\eps), 1+\eps)$-approximation. 

The (simple) analysis follows from two observations. 
First, the number of inliers incorrectly labelled as outliers in the first step, i.e., $|\bigcup_j (X_{out, j} \cap X^*_{in})|$, is bounded by $O(\eps z)$. This follows from $\sum_j \tau_\Theta(X_j \cap \Xin^*, Y_j) = O(\OPT)$ and $\tau_\Theta(x, Y_j) = \Theta$ for every $x \in X_{out,j}$. 
This ensures that in the end we output only $O(\eps z)$ more outliers than if the coordinator ran \cref{alg:noisylocalsearch} on the full dataset with original parameter $z$. 

Second, the total movement cost of changing the instance $X$ to instance $X'$ is bounded by $\sum_j \phi(X_j \setminus X_{out, j}, Y_j) \le \sum_j \tau_\Theta(X_j, Y_j) 
\le z\Theta + \sum_j \tau_\Theta(X_j \cap \Xin^*, Y_j) = O(\OPT / \eps)$. 
Hence, the total cost changes additively by $O(\OPT / \eps)$ compared to the case where the coordinator runs \cref{alg:noisylocalsearch} on the full dataset.

\paragraph{Refined Version}
To improve the approximation factor from $O(1/\eps)$ down to $O(1)$ in \cref{thm:distributed_main}, we need to perform two changes. First, the coordinator runs the polynomial-time $(O(1), 1)$-approximation algorithm of \cite{Krishnaswamy2018} on the weighted instance instead of \cref{alg:noisylocalsearch}. 
Second, each machine records not only  the number of points in $X_j \setminus X_{out,j}$ closest to  each point $y \in Y_j$, but it additionally sends for each integer $k \in O(\log \Delta)$, how many of those points have a distance between $2^k$ and $2^{k+1}$ to $y$. The coordinator then constructs an instance based on this refined information. 

\paragraph{$k$-means$\|$}
Adapting the popular $k$-means$\|$ algorithm to the setting with outliers \cite{bahmani2012scalable} can also be accomplished with a construction similar to the one explained above. 
The only difference is that instead of getting the weighted instance as a union of weighted instances from each machine, it is constructed in $O(\log n)$ sampling rounds by using the \kmpipe idea: in each round we sample from the same distribution as in \cref{alg:kmp}, but we sample $\tilde{O}(k/\eps)$ points instead of just one point.

\section{Tight Bounds for $(O(1), O(1))$-approximation Algorithms}
\label{sec:paper_nk2z}

In this section, we discuss tight bounds on the complexity of finding an $(O(1), O(1))$-approximation for the $k$-means with outliers objective. 
In \cref{sec:appendix_metropolis} we prove the following result. 

\begin{restatable}{theorem}{thmmetropolis}
\label{thm:fast_algorithm}
There is an $(O(1), O(1))$-approximation algorithm for $k$-means with outliers that runs in time $\tilde{O}(nk \cdot \min(1, k/z))$ and succeeds with positive constant probability. 
\end{restatable}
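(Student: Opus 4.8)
The plan is to prove \cref{thm:fast_algorithm} by combining the oversampling result of \cref{thm:aggarwal_informal} with a Metropolis--Hastings sampler that approximately implements the $\tau_\Theta$-distribution without paying the full $O(n)$ cost per sampled center. The target is an $(O(1),O(1))$-approximation with running time $\tilde O(nk\cdot \min(1,k/z))$; note that when $z \le k$ this is just $\tilde O(nk)$, which \cref{alg:kmp} already achieves via \cref{thm:aggarwal_informal} plus the $O(\log n)$ guesses of $\OPT$, so the interesting regime is $z \gg k$, where the target becomes $\tilde O(nk^2/z)$.

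First I would set up the reduction. Fix a guess $\OPTG$ of $\OPT$ (there are $O(\log(n\Delta^2)) = O(\log n)$ of them) and set $\Theta = \OPTG/(2\eps z)$ for a small constant $\eps$, say $\eps = 1/100$; by \cref{thm:aggarwal_informal}, running \cref{alg:kmp} for $\ell = O(k/\eps) = O(k)$ iterations produces $C$ with $\tau_\Theta(X^*_{in},C) = O(\OPT)$ with constant probability, from which (exactly as in the discussion after that theorem, using the footnote trick of labelling the furthest $(1+O(\eps))z$ points as outliers) we extract an $(O(1),O(1))$-approximation. So the whole problem reduces to performing $O(k)$ sampling steps from distributions proportional to $\tau_\Theta(\cdot,C)$, each in amortized time $\tilde O(n\cdot\min(1,k/z)) = \tilde O(\min(n, nk/z))$. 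The key structural observation I would exploit is that after we have correctly ``settled'' the instance, only $O(z)$ points can have $\tau_\Theta(x,C) = \Theta$ while the inliers' total is $O(\OPT)$; more usefully, the sampling distribution is dominated by the uniform-ish mass on the (at most $z + O(\eps z)$) high-cost points, so one Metropolis--Hastings chain with a cheap-to-evaluate proposal --- uniform over $X$, or a two-component mixture of ``uniform'' and ``previously-high-cost points'' --- mixes in $\tilde O(1)$ steps, and each step only needs $\phi(x, C)$ for a single $x$, costing $O(k)$ time. This is the idea of \cite{bachem2016approximate, bachem2016fast} adapted to the truncated/penalized objective.

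The steps in order: (1) state and prove a sampling lemma --- using an MH chain of length $\tilde O(1)$ with an appropriate proposal distribution, one can draw a sample whose distribution is within total variation $\poly(1/n)$ of the true $\tau_\Theta(\cdot,C)$-distribution, in time $\tilde O(k)$ per sample, after an $O(n)$ preprocessing step done once (to compute an initial distribution / normalizing quantities); (2) argue that replacing exact samples by these approximate samples in \cref{alg:kmp} changes the success probability of \cref{thm:aggarwal_informal} by at most $o(1)$, by a union bound over the $O(k)$ steps and $O(\log n)$ guesses; (3) bound the total running time as $O(n)$ (preprocessing, once per guess, or once globally) $+\ O(\log n)\cdot O(k)\cdot \tilde O(k) = \tilde O(nk/z \cdot \dots)$ --- here I need to be careful: the $O(n)$ preprocessing is the bottleneck unless $z \gg k$, which is exactly why the bound is $nk\cdot\min(1,k/z)$ and not smaller, and when $z\gg k$ the preprocessing must be made $\tilde O(nk/z)$ as well, presumably by subsampling a set of $\tilde O(nk/z)$ points to build the proposal, which is valid precisely because with $z$ outliers any constant-factor-useful information is ``dense'' at rate $\sim k/z$; (4) re-run the furthest-point post-processing to convert the penalty solution to an outlier solution, and output. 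Amplify the constant success probability to high probability by $O(\log n)$ independent repetitions, taking the best, which only costs another logarithmic factor.

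The main obstacle I expect is step (1) together with the subsampling subtlety in step (3): making the Metropolis--Hastings proposal distribution both cheap enough (so that each step is $O(k)$ and the preprocessing is $\tilde O(nk/z)$ when $z \gg k$) and good enough (so the spectral gap / mixing time stays $\tilde O(1)$ uniformly over all the intermediate center sets $C$ that arise during the $O(k)$ steps, including adversarial-looking ones where the $\tau_\Theta$-distribution is very peaked). The natural fix is to use a proposal that is a mixture of the uniform distribution on (a subsample of) $X$ and the distribution proportional to $\phi(x, C_{\text{prev}})$ from the previous iteration, so that the distribution only changes by the addition of one center per step --- mirroring the ``assumption-free'' MCMC seeding analyses of \cite{bachem2016fast} --- and to show a $\tilde O(1)$ bound on the mixing time holds because the acceptance ratio is bounded below once we condition on the high-probability event that $\OPTG$ is a correct guess and $\tau_\Theta(X_{in}^*, C) = O(\OPT)$. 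I would also need a short argument (Chernoff over which points are ``outlier-like'') that the $\tilde O(nk/z)$-size subsample suffices to build a proposal whose uniform component still covers enough of the $\tau_\Theta$-mass; this is where the $\min(1,k/z)$ in the running time genuinely comes from, and getting the subsample size exactly right is the delicate part of the bookkeeping.
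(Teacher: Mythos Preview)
Your Metropolis--Hastings analysis is more complicated than necessary and the $\tilde O(1)$ mixing-time claim is not justified. The paper takes a much simpler route: it uses the \emph{uniform} proposal $q(x)=1/n$ (no preprocessing, no mixture, no subsampled proposal) and observes that since $\pi(x)=\tau_\Theta(x,C)/\tau_\Theta(X,C)\le \Theta/\tau_\Theta(X,C)$ and $\Theta=\Theta(\OPT/z)$ while $\tau_\Theta(X,C)\ge\Omega(\OPT)$ whenever we are not yet done, one has $q(x)\ge (z/n)\,\pi(x)$ for every $x$. A one-line coupling lemma then gives that an MH chain of length $T=O(n/z)$ dominates $\pi/2$. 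So each of the $O(k)$ samples costs $O(n/z)$ MH steps times $O(k)$ per step (to evaluate $\tau_\Theta(x,C)$), i.e., $O(nk/z)$, and the total is $O(nk^2/z)$ with no preprocessing at all. Your attempt to push the chain length down to $\tilde O(1)$ and compensate with an $O(n)$ (then subsampled) preprocessing step is both unnecessary and, as you yourself flag, the hardest part of your plan; drop it.

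There are also two genuine gaps you do not address. First, running \cref{alg:kmp} for $\ell=O(k/\eps)=O(k)$ steps outputs $O(k)$ centers, not $k$; the ``footnote trick'' you cite concerns the number of outliers, not the number of centers, so at the end of your step~(4) you still have a tri-criteria solution. The paper closes this gap with a separate coreset-style reduction (its \cref{thm:fast_coreset}): subsample $\tilde O(nk/z)$ points uniformly, estimate the weight of each of the $O(k)$ centers, and run an $O(1)$-approximate \km-with-penalties algorithm on the resulting weighted instance of size $O(k)$ to collapse to $k$ centers, all in time $\tilde O(nk^2/z)$. Second, ``taking the best'' among the $O(\log n)$ guesses of $\OPT$ requires evaluating $\phi^{-O(z)}(X,C)$ for each candidate $C$, which naively costs $\Omega(nk)$ and blows the budget. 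The paper builds a fast estimator: uniformly sample $N=\tilde O(n/z)$ points, compute $\phi^{-O(zN/n)}$ on the sample in time $\tilde O(Nk)=\tilde O(nk/z)$, and rescale; a bucketing-by-distance plus Chernoff argument shows this estimator is within a constant factor with high probability. Both of these pieces are essential to get a valid $(O(1),O(1))$-approximation within the stated time.
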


This result is based on \cref{thm:aggarwal_informal}, but to speed up the algorithm from $\tilde{O}(nk)$ to $\tilde{O}(nk^2/z)$, we use the Metropolis-Hastings algorithm (with uniform proposal distribution) which was used in \cite{bachem2016approximate, bachem2016fast} for $k$-means. 
Here, the idea is that in one sampling step of \cref{alg:kmp}, instead of computing $\tau_\Theta(x, C)$ for all the points, we first subsample $\tilde{O}(n/z)$ points uniformly at random and only from those points we then sample roughly proportional to their current $\tau_\Theta$ cost by defining a certain Markov chain. 
After speedup, we keep the guarantees of \cref{thm:aggarwal_informal}, but now the running time is $\tilde{O}(nk^2/z)$. 

Note that instead of using Metropolis-Hastings algorithm, we could also just take a uniform subsample and run \cref{alg:kmp} on it. The result of \cite{huang2018epsilon} would give that uniform subsampling the number of points to $\tilde{O}(k(n/z)^2)$ would lose only a constant factor in approximation guarantees. This leads to a $\tilde{O}(k^2(n/z)^2)$ running time and an algorithm with similar running time also based on uniform subsampling for $k$-median was given by \cite{meyerson2004k}. 


\paragraph{Lower Bound}
Next, we provide a matching lower bound. 
To that end, we restrict ourselves to the class of algorithms that work in an arbitrary metric space $\fM$ and access the distances of the space only by asking an \textit{oracle} that upon getting queried on two points $x,y \in \fM$ returns their distance. 

Virtually all algorithms with guarantees we know of are of this type, possibly up to a constant loss in their approximation guarantee. In \cref{sec:appendix_lemmas} we verify that this is the case also for algorithms that we consider here. 

For the classical problems of \km, $k$-median, and $k$-center, there is an $\Omega(nk)$ lower bound (i.e., showing that so many queries to the oracle are necessary) for metric space algorithms \cite{mettu2004optimal, mettu2002approximation}. This essentially matches the complexity of \kmpp \cite{arthur2007k} or the \lsp algorithm of Lattanzi and Sohler \cite{lattanzi2019better}. 
The lower bound holds also in the setting with outliers, if the output of the algorithm is an assignment of each of the $n$ points to an optimal cluster, together with the list of the outliers. 
On the other hand, \cref{thm:fast_algorithm} gives an $(O(1), O(1))$ algorithm with complexity $\tilde{O}(nk^2/z)$. 
The catch is that the output of this algorithm is just a set of centers and it does not compute for all the points their respective assignment to a closest center. 
Also, it does not label all the outliers. 
For this type of algorithms that only output the set of centers, we prove a matching lower bound of $\Omega(nk^2/z)$ in the metric space query model. The following theorem is proved in \cref{sec:appendix_lower_bounds}. 

\begin{theorem}
\label{thm:lower_bound_informal}
Any randomized algorithm for the $k$-means/$k$-median/$k$-center problem with outliers in the setting $k \ge C$, $z \ge Ck \log k$, and $n \ge Cz$ for an absolute constant $C$ that with probability at least $0.5$ gives an $(O(1), O(1))$-approximation in the general metric space query model, needs $\Omega(nk^2/z)$ queries.  
\end{theorem}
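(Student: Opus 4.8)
The plan is to use Yao's principle: I will construct a distribution over hard instances on which every deterministic algorithm making $o(nk^2/z)$ oracle queries fails to output an $(O(1),O(1))$-approximate set of centers with probability more than $1/2$. The metric space will be built from $k$ "clusters" of points together with a pool of $n - \Theta(zk)$ far-away dummy points that are irrelevant; the real action takes place among roughly $zk$ points partitioned into $k$ groups $G_1,\dots,G_k$, each of size $\Theta(z)$. Within each group $G_i$, I plant one of two sub-configurations, chosen independently and uniformly: either all $\Theta(z)$ points of $G_i$ are mutually close (a genuine cluster of radius $1$), or the points of $G_i$ split into $\Theta(z)$ singletons that are all mutually far apart (distance $\gg$ the diameter a cheap solution could tolerate). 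The number of outliers $z$ is exactly the total number of "singleton-type" points one must throw away if, say, half the groups are of the scattered type — the parameters are tuned so that the optimal cost with $z$ outliers is $O(z)$ (open $k/2$ centers on the dense groups, discard the scattered points), while any solution that places its $k$ centers on the wrong subset of groups, or fails to identify a scattered group, pays $\Omega(z \cdot \Delta)$ for a huge $\Delta$, violating the $O(1)$-approximation. Distinguishing the two sub-configurations of a single group $G_i$ requires, in the oracle model, roughly $\Omega(|G_i|) = \Omega(z)$ distance queries, because a near-uniform point from $G_i$ reveals nothing until it is queried against another point in the same sub-configuration.

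The key steps, in order: (1) Fix the instance distribution as above, with $|G_i| = \Theta(z/k)$ actually — I need to recount so that $\sum_i |G_i| = \Theta(z k / k)$... let me instead take $k$ groups each of size $\Theta(z/k)\cdot$(something); the right calibration is $k$ groups, a constant fraction scattered, total scattered points $= z$, so each group has $\Theta(z/k)$ points and there are $\Theta(k)$ scattered ones contributing $\Theta(k \cdot z/k) = \Theta(z)$ outliers — wait, that gives each group size $\Theta(z/k)$, hence total query cost $\Omega(k \cdot z/k) = \Omega(z)$, which is too weak. The correct construction (matching $nk^2/z$) must be more refined: I will instead have $\Theta(k^2/z \cdot n / k) $... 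The honest plan: embed $k$ independent "needle-in-haystack" sub-problems, each on $\Theta(n/k)$ points, where in sub-problem $i$ the algorithm must locate a special sub-structure (the genuine cluster center among $\Theta(n/k)$ candidates), and locating it among $m$ candidates costs $\Omega(m \cdot k/z \cdot$ something$)$ queries because the planted cluster has size $\Theta(z/k)$ and one needs to hit two of its points. A query is "informative" only if both endpoints land in the same planted cluster; a uniformly random query among $m$ points, of which $\Theta(z/k)$ form the target cluster, is informative with probability $\Theta((z/k)^2/m^2)$, so $\Omega(m^2 k / (z^2) \cdot (z/k))= \Omega(m^2/(zk))$ queries per sub-problem, times $k$ sub-problems, times... (2) Prove the optimum has cost $O(1)\cdot$OPT-normalizer and any algorithm that hasn't identified the planted structure in sub-problem $i$ must place a center badly, incurring cost blow-up $\Delta^2 = \poly(n)$. (3) Invoke a query-complexity / anti-concentration argument: conditioned on the transcript of $q$ queries, if $q$ is too small then for at least one sub-problem the posterior over the planted location is still near-uniform, so the algorithm's fixed output set $C$ misses it with constant probability. (4) Sum up and apply Yao.

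The main obstacle I expect is getting the parameters to multiply out to exactly $\Theta(nk^2/z)$ rather than something like $\Theta(n k / z)$ or $\Theta(n k)$ — i.e., correctly designing the haystack sizes and the planted-cluster sizes so that (a) the total point count is $n$, (b) the total number of outliers in the hard case is $\Theta(z)$, (c) the approximation gap between "found the structure" and "didn't" is super-constant, and (d) the per-sub-problem query lower bound, multiplied by the number of sub-problems, is $\Omega(nk^2/z)$. This bookkeeping, together with the hypotheses $z \ge Ck\log k$ (needed so that a union bound over $k$ sub-problems and concentration of the "informative query" count both go through) and $n \ge Cz$ (needed so the haystacks are genuinely large relative to the planted clusters), is where the real care lies. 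The information-theoretic core — that a random point from one of two indistinguishable sub-configurations stays indistinguishable until queried against a sibling — is standard (an adversary-argument / indistinguishability lemma in the oracle model, analogous to the $\Omega(nk)$ bounds of \cite{mettu2002approximation, mettu2004optimal}), and I would state it as a clean lemma and reduce the theorem to it.
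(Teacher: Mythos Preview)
Your high-level framework (Yao's principle, a hard input distribution, an ``informative query'' counting argument) is right, but you have not landed on a working construction, and the ones you sketch do not give $\Omega(nk^2/z)$. Your $k$-independent-sub-problems idea with $n/k$ points per sub-problem and a planted cluster of size $z/k$ yields a birthday-style bound of roughly $(n/k)^2/(z/k)^2 = n^2/z^2$ queries per sub-problem, hence $kn^2/z^2$ in total, which is the wrong exponent. More fundamentally, if the sub-problems live in disjoint regions of the metric, a single query per point reveals which sub-problem it belongs to, so the ``independence'' is illusory; and if they are not disjoint, your analysis collapses.

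The paper's construction is simpler and qualitatively different. There are no sub-problems: take $k$ abstract locations $c_1,\dots,c_k$ at pairwise distance $1$, declare $c_1,\dots,c_{k/2}$ \emph{small} and the rest \emph{big}, and generate each of the $n$ input points i.i.d.\ as a clone of some $c_j$, with total mass $\Theta(z/n)$ on the small locations and the remaining mass spread evenly over the big ones. The optimum has cost $0$ (open all $k$ locations), and any $(O(1),O(1))$-approximation must place centers on $\Omega(k)$ of the small locations, since each small location carries $\Theta(z/k)$ points and you may discard only $O(z)$ outliers. The lower bound now comes from two multiplicative factors of $k$ and one factor of $n/z$: (i) a uniformly chosen point is a small-clone with probability only $\Theta(z/n)$; (ii) to certify that a given point is a big-clone (and hence safe to discard as a candidate), you must query it against other points until you hit a clone of the same location, which takes $\Omega(k)$ queries since each big location holds only a $\Theta(1/k)$ fraction of all points; (iii) you must succeed $\Omega(k)$ times. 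The formal argument then bounds $\E[|C_{\fA}\cap\text{small}|]$ by splitting the output centers according to whether they were queried fewer than $0.1k$ times (then a Bayes calculation shows each is small with probability $O(z/n)$), or at least $0.1k$ times (then there are at most $O(T/k)$ such points, each small with probability $O(z/n)$). Summing gives $\E[|C_{\fA}\cap\text{small}|] = O(kz/n + Tz/(kn))$, which is $o(k)$ unless $T=\Omega(nk^2/z)$. The hypothesis $z\ge Ck\log k$ is used only to guarantee (via Chernoff) that every small location actually receives $\Omega(z/k)$ clones.
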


Let us provide a brief intuition of the proof. The construction that yields \cref{thm:lower_bound_informal} is the following: we have $k/2$ large clusters and $k/2$ small clusters. All clusters are well separated and small clusters together contain $\Theta(z)$ points. 
As the algorithm can output only $\Theta(z)$ outliers, it needs to ``find'' $\Omega(k)$ small clusters. 
However, a point chosen from the input set uniformly at random is from a small cluster with probability $O(z/n)$ and we expect to need $\Omega(k)$ queries until we find out whether a point is from a small or a big cluster. This leads to the lower bound of  $\Omega(k \cdot \frac{n}{z} \cdot k) = \Omega(nk^2/z)$ rounds.  

\paragraph{Why it makes sense to consider bicriteria approximation}
Let us also observe a different lower bound against $(O(1), 1)$-approximation algorithms that motivates why we are concerned with algorithms that can output $(1+\eps)z$ outliers. 
The following construction is, e.g., in \cite{indyk1999sublinear}. 
Consider an input metric space with $z = n-1$ and $k=1$, where any two points have a distance of $1$, up to a single pair $x_1, x_2$ whose distance is $0$ (or some small $\eps$). Any approximation algorithm (even a randomized one) that outputs exactly $z$ outliers needs to ``find'' the pair $x_1, x_2$ and $\Omega(z^2)$ queries are needed for this.  
The example shows that there is a fundamental limit to the speed of $(O(1), 1)$-approximation algorithms: for $z$ linear in $n$ they even need $\Omega(n^2)$ time. 
This should be contrasted with the $(O(1), 1+\eps)$-approximation algorithms that only need $\poly(k/\eps)$ time for $z = \Theta(n)$ that follows from \cite{huang2018epsilon}.  


\section{Experiments}
\label{sec:paper_experiments}

We tested the following algorithms on the datasets \emph{kdd} (KDD Cup 1999) subsampled to $10\,000$ points with $38$ dimensions and \emph{spam} (Spambase) with $4601$ points in $58$ dimensions \cite{Dua:2019}. 
We set the number of outliers $z$ to be $10$ percent of the dataset. 

\begin{itemize}
    \item \texttt{Lloyd}: Variant of Lloyd's algorithm \cite{lloyd1982least, kmeans-} that handles outliers with random initialization ($10$ iterations);
    \item \texttt{$k$-means++}: $k$-means++ seeding\cite{arthur2007k};
    \item \texttt{$k$-means++ with penalties}: $k$-means++ with penalties\cite{bhaskara2020,li_penalties2020};
    \item \texttt{Metropolized $k$-means++ with penalties}: $k$-means++ with penalties, sped up by Metropolis-Hastings algorithm with $100$ steps (see \cref{sec:appendix_metropolis});
    \item \texttt{Distributed $k$-means++ with penalties}:  simplified variant of \cref{alg:streaming} -- input is partitioned in $10$ subinputs, $k$-means++ with penalties is run on each of them with $\ell = 2k$, and weighted instances are sent to coordinator who runs $k$-means++ with penalties again to obtain the final $k$ centers);
    \item \texttt{Sped up local search}: Variant of \cref{alg:noisylocalsearch} -- $k$-means++ with penalties followed by $k$ additional local search steps (for the objective with penalties). 
\end{itemize}

To guess the value of $\Theta$ in all except the first two algorithms, we tried $10$ values from $1$ to $10^{10}$, exponentially separated. 
The best solution was then picked and we followed by running $10$ Lloyd iterations on it with the number of outliers for these iterations set to $z$ (the same for the second $k$-means++ algorithm). The results for this setup for $k\in \{5, 10, \dots, 50\}$ are in \cref{fig:kdd,fig:spam}.

\begin{figure}[h!]
    \centering
        \centering
        \includegraphics[width=.95\textwidth]{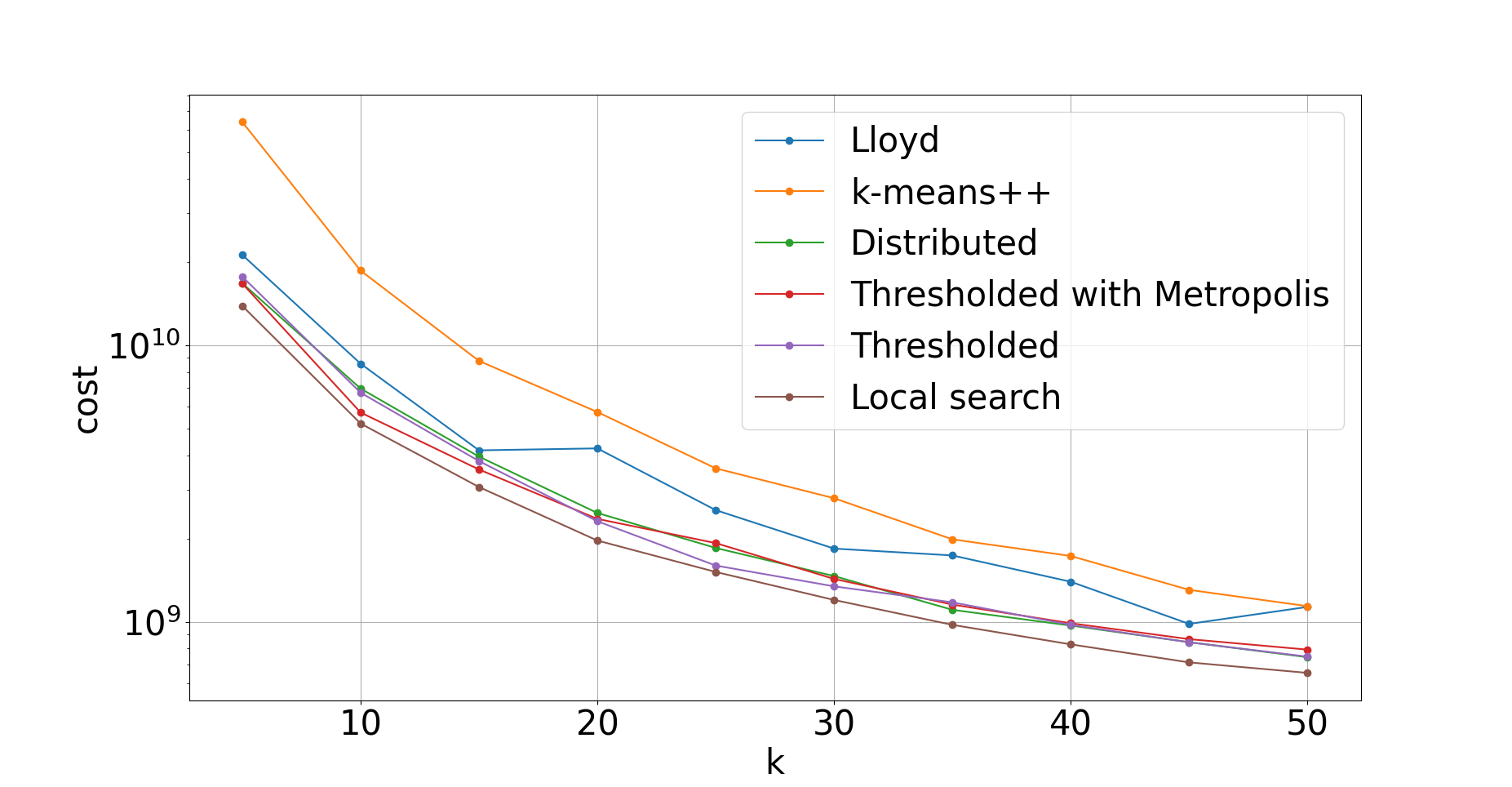}
        \caption{Experiments on kdd}
        \label{fig:kdd}
        \includegraphics[width=.95\textwidth]{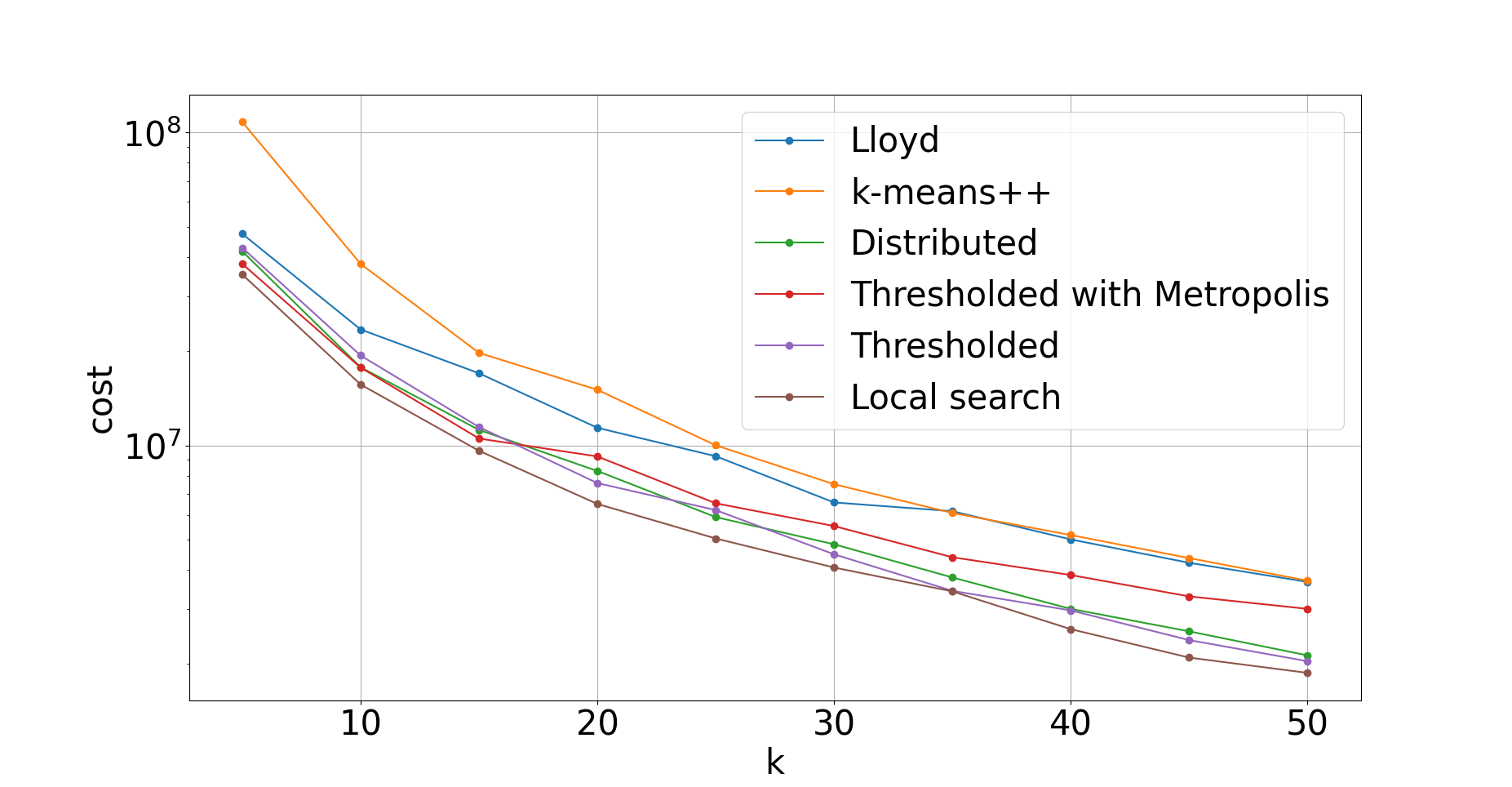}
        \caption{Experiments on spam}
        \label{fig:spam}
    \label{fig:experiments}
\end{figure}

\texttt{$k$-means with penalties} outperforms the first two baseline algorithms on average by around $40\%$ in both datasets. Surprisingly, $k$-means++ seeding leads to consistently worse solutions than random initialization. We believe this indicates that the datasets indeed contain outliers that $k$-means++ picks preferably due to their large distance from other points. 
Distributed and metropolized variants are on par with \texttt{$k$-means++ with penalties}, except for the metropolized variant on the \emph{spam} dataset. 
\texttt{Sped up local search} consistently outperforms \texttt{$k$-means++ with penalties} by around $12\%$ in both datasets. It is also significantly slower, but we implemented only its simple $\tilde{O}(nk^2)$ implementation instead of the best possible $\tilde{O}(nk)$.

\section{Conclusion}

We have shown that several simple sampling-based algorithms for $k$-means can be adapted to handle outliers and retain strong theoretical guarantees, while still being similarly simple to implement. 
As a theoretical application, we settled the complexity of finding an $(O(1), O(1))$-approximation for $k$-means with outliers to $\tilde{\Theta}(nk^2/z)$ in the query model. 

\section*{Acknowledgment}

We thank Davin Choo, Mohsen Ghaffari, Saeed Ilchi, Andreas Krause, and Julian Portmann for engaging discussions. In particular, we thank Mohsen for his numerous helpful remarks and Davin and Saeed for sharing their code with us. The authors were supported by the European Research Council (ERC) under the European Unions Horizon 2020 research and innovation programme (grant agreement No. 853109).

\bibliography{ref}
\bibliographystyle{alpha}


\newpage

\begin{appendices}
\crefalias{section}{appendix}
\section{Preparatory lemmas}
\label{sec:appendix_lemmas}

\paragraph{More notation} 
We start by setting up some more notation. 
Note that the optimal clustering $C^*$ splits the dataset $X$ into two parts: the set of \emph{inliers} $\Xin^*$ with $\phi(\Xin^*, C^*) = \OPT$ and the set of \emph{outliers} $\Xout^*$ with $|\Xout^*| = z$. 
Moreover, the set $\Xin^*$ is naturally split in $k$ sets $X_1^*, X_2^*, \dots, X_k^*$, where $X_j^*$ is the set of points $x \in X_{in}^*$ for which $\phi(x, C) = \phi(x, c_j^*)$. 
When using the notation $\tau_\Theta(X, C)$, we often drop the subscript $\Theta$ when it is clear from context and write just $\tau(X, C)$. 
For a single point $c$, we drop parentheses and write $\phi(X,c)$ instead of $\phi(X,\{c\})$ and similarly for $\tau$. 
We define $\phi^{-z}(X, C) = \min_{X_{out}, |X_{out}|=z} \phi(X \setminus X_{out}, C)$. 
We use $\textrm{out}_\Theta(X, C)$ to denote the subset of points $x \in X$ such that $\tau_\Theta(x, C) = \Theta$ and  $\textrm{in}_\Theta(X, C)$ to denote the subset of points $x \in X$ such that $\tau_\Theta(x, C) < \Theta$. 
We assume that the minimum distance between any two points in $X$ is lower bounded by $1$ and their maximum distance is upper bounded by $\Delta$. 

In this section we collect several useful lemmas that are either well-known or are variations of well-known results. First, we recall the well-known Chernoff bounds and then, for completeness, prove a generalization of sampling lemmas that were used to analyze \kmpp \cite{arthur2007k}. 
These were, in slightly different forms, also proved in \cite{li_penalties2020} or \cite{bhaskara2020}. 

\begin{theorem}[Chernoff bounds]
\label{thm:chernoff}
Suppose $X_1, \dots, X_n$ are independent random variables taking values in $\{0, 1\}$. Let $X$ denote their sum. Then for any $0 \le \delta \le 1$ we have
\[
\P(X \le (1-\delta)\E[X]) \le \e^{-\E[X]\delta^2 / 2}
\]
and
\[
\P(X \ge (1+\delta)\E[X]) \le \e^{-\E[X]\delta^2 / 3}.
\]
Moreover, for any $\delta \ge 1$ we have
\[
\P(X \ge (1+\delta)\E[X]) \le \e^{-\E[X]\delta / 3}.
\]
\end{theorem}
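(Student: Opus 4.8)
The plan is to use the classical exponential-moment (Bernstein--Chernoff) method. Write $\mu := \E[X] = \sum_{i=1}^n p_i$ with $p_i := \P(X_i = 1)$, and start from the elementary inequality $\E[e^{t X_i}] = 1 + p_i(e^t - 1) \le \exp\!\big(p_i(e^t-1)\big)$, which holds for every real $t$ by $1+x \le e^x$. Independence then gives the single clean estimate $\E[e^{tX}] = \prod_i \E[e^{t X_i}] \le \exp\!\big(\mu(e^t-1)\big)$, and everything else is Markov's inequality plus a choice of $t$.

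For the upper tail, fix $t>0$ and apply Markov to the nonnegative variable $e^{tX}$:
\[
\P\big(X \ge (1+\delta)\mu\big) = \P\big(e^{tX} \ge e^{t(1+\delta)\mu}\big) \le \exp\!\big(\mu(e^t-1) - t(1+\delta)\mu\big).
\]
Optimizing over $t$ means taking $t = \ln(1+\delta) > 0$, which yields
\[
\P\big(X \ge (1+\delta)\mu\big) \le \left(\frac{e^\delta}{(1+\delta)^{1+\delta}}\right)^{\!\mu} = \exp\!\Big(\mu\big(\delta - (1+\delta)\ln(1+\delta)\big)\Big).
\]
It then remains to establish the two scalar inequalities $\delta - (1+\delta)\ln(1+\delta) \le -\delta^2/3$ on $[0,1]$ and $\delta - (1+\delta)\ln(1+\delta) \le -\delta/3$ on $[1,\infty)$. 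The first follows by letting $f(\delta) := \delta - (1+\delta)\ln(1+\delta) + \delta^2/3$, observing $f(0)=0$, and checking $f'(\delta) = \tfrac{2\delta}{3} - \ln(1+\delta) \le 0$ on $[0,1]$ (since $\ln(1+\delta) \ge \tfrac{2\delta}{3}$ there, as one sees by comparing derivatives at $\delta = 1/2$); the second follows because $g(\delta) := \delta - (1+\delta)\ln(1+\delta) + \delta/3$ has $g(1) = \tfrac43 - 2\ln 2 < 0$ and $g'(\delta) = \tfrac13 - \ln(1+\delta) < 0$ for $\delta \ge 1$.

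For the lower tail, fix $t>0$ and apply Markov to $e^{-tX}$:
\[
\P\big(X \le (1-\delta)\mu\big) = \P\big(e^{-tX} \ge e^{-t(1-\delta)\mu}\big) \le \exp\!\big(\mu(e^{-t}-1) + t(1-\delta)\mu\big).
\]
Choosing $t = -\ln(1-\delta) > 0$ (the boundary case $\delta = 1$ being handled by a limit, or trivially) gives
\[
\P\big(X \le (1-\delta)\mu\big) \le \left(\frac{e^{-\delta}}{(1-\delta)^{1-\delta}}\right)^{\!\mu} = \exp\!\Big(\mu\big(-\delta - (1-\delta)\ln(1-\delta)\big)\Big),
\]
and one finishes with the identity $-\delta - (1-\delta)\ln(1-\delta) = -\sum_{k\ge 2}\frac{\delta^k}{k(k-1)} \le -\delta^2/2$, obtained by expanding $\ln(1-\delta) = -\sum_{k\ge1}\delta^k/k$. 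The only ``obstacle'' here is the bookkeeping of these one-variable calculus inequalities; there is no conceptual content, and since the statement is entirely classical it would be equally legitimate simply to cite a standard reference — I would include the short self-contained derivation above for completeness.
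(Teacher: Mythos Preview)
The paper does not prove this theorem at all: it is stated in the ``Preparatory lemmas'' appendix as a well-known fact to be recalled, with no argument given. Your derivation via the exponential-moment method is the standard one and is correct; the only minor quibble is that the justification ``as one sees by comparing derivatives at $\delta = 1/2$'' for $\ln(1+\delta) \ge \tfrac{2\delta}{3}$ on $[0,1]$ is a bit cryptic --- cleaner is to note that $h(\delta) := \ln(1+\delta) - \tfrac{2\delta}{3}$ is concave with $h(0)=0$ and $h(1)=\ln 2 - \tfrac{2}{3} > 0$, so its minimum on $[0,1]$ is at an endpoint and hence nonnegative.
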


Next, we state basic facts about metric spaces and $\mathbb{R}^d$ used to prove the sampling lemmas below. 

\begin{fact}[cf. \cite{arthur2007k}]
\label{fact:mean_minimizes}
Let $A$ be a subset of $\mathbb{R}^d$. Then 
\[
\inf_{\mu \in \mathbb{R}^d} \phi(A, \mu) = \phi(A, \mu(A))
\]
where $\mu(A) = \sum_{x \in A}x/|A|$ is the mean of $A$. 
\end{fact}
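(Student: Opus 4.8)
The plan is to use the standard bias–variance (parallel axis) decomposition of the sum of squared Euclidean distances. Writing $c := \mu(A)$ for the centroid and recalling that $\phi(A,\mu) = \sum_{x \in A}\|x-\mu\|^2$, I would fix an arbitrary $\mu \in \mathbb{R}^d$ and expand each summand as $\|x-\mu\|^2 = \|(x-c)+(c-\mu)\|^2 = \|x-c\|^2 + 2\langle x-c,\, c-\mu\rangle + \|c-\mu\|^2$.

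Summing over $x \in A$, the first group of terms gives $\phi(A,c)$, the last group gives $|A|\cdot\|c-\mu\|^2$, and the middle contributes $2\langle \sum_{x\in A}(x-c),\, c-\mu\rangle$. The only non-bookkeeping step is to observe that $\sum_{x\in A}(x-c) = \bigl(\sum_{x\in A}x\bigr) - |A|\,c = 0$ exactly by the definition $c = \mu(A) = \sum_{x\in A}x/|A|$, so the cross term vanishes identically.

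This yields the identity $\phi(A,\mu) = \phi(A,c) + |A|\cdot\|c-\mu\|^2$ valid for every $\mu \in \mathbb{R}^d$, from which the claim is immediate: the right-hand side is minimized, with the infimum attained, precisely when $\mu = c$, giving $\inf_{\mu \in \mathbb{R}^d}\phi(A,\mu) = \phi(A,\mu(A))$. (For $A = \emptyset$ both sides are $0$ and the statement is trivial; I would implicitly restrict to finite nonempty $A$, which is the only case used in the paper, so that $\mu(A)$ is well-defined.)

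I do not anticipate a genuine obstacle here: the entire argument reduces to the one-line cancellation of the cross term above. The only points requiring a modicum of care are writing $\phi$ in its explicit sum-of-squared-distances form and noting that the centroid is well-defined for finite nonempty $A$.
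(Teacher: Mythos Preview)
Your proof is correct and is the standard bias--variance (parallel axis) argument. The paper does not actually prove this fact: it is stated without proof and attributed to \cite{arthur2007k}, so there is nothing to compare against beyond noting that your argument is exactly the classical one the citation points to.
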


\begin{fact}[cf. \cite{arthur2007k}]
\label{fact:generalised_triangle_inequality}
Let $a, b, c \in M$ be three points in an arbitrary metric space. Then 
\[
\phi(a, c) \le 2(\phi(a, b) + \phi(b, c))
\]	
and, more generally, for $C \subseteq M$ and $\eps > 0$ we have 
\[
\phi(a,C) - \phi(b,C) \leq \eps \cdot \phi(b,C) + \left(1 + \frac{1}{\eps} \right)\phi(a,b),
\]
where the former inequality is a special case for $\eps = 1$ and $C = \{c\}$. 
\end{fact}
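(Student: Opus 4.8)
\textbf{Plan of proof for Fact~\ref{fact:generalised_triangle_inequality}.}

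The plan is to prove the two inequalities separately, deriving the first as a special case of the general setup. For the first inequality, $\phi(a,c) \le 2(\phi(a,b) + \phi(b,c))$: recall that $\phi(x,y) = d(x,y)^2$ where $d$ is the metric, so this is $d(a,c)^2 \le 2(d(a,b)^2 + d(b,c)^2)$. First I would apply the triangle inequality $d(a,c) \le d(a,b) + d(b,c)$, square both sides to get $d(a,c)^2 \le d(a,b)^2 + 2d(a,b)d(b,c) + d(b,c)^2$, and then absorb the cross term using the AM-GM inequality $2d(a,b)d(b,c) \le d(a,b)^2 + d(b,c)^2$. Adding these gives the claimed factor of $2$. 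This is the standard ``relaxed triangle inequality'' for squared distances.

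For the general inequality, fix $c \in C$ to be a point achieving $\phi(b,C) = \phi(b,c)$ (such a $c$ exists when $C$ is finite; if one wants to allow arbitrary $C$ one works with an infimum and an $\eps'$-approximate minimizer, but the finite case suffices for our applications). Then $\phi(a,C) \le \phi(a,c)$, so it is enough to bound $\phi(a,c) - \phi(b,c) = d(a,c)^2 - d(b,c)^2$. Using $d(a,c) \le d(a,b) + d(b,c)$ and squaring, $d(a,c)^2 \le d(a,b)^2 + 2d(a,b)d(b,c) + d(b,c)^2$, hence
\[
\phi(a,c) - \phi(b,c) \le d(a,b)^2 + 2 d(a,b) d(b,c).
\]
Now I would apply Young's inequality in the form $2 d(a,b) d(b,c) \le \eps\, d(b,c)^2 + \frac{1}{\eps} d(a,b)^2$, valid for any $\eps > 0$. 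Substituting and recalling $d(b,c)^2 = \phi(b,c) = \phi(b,C)$ and $d(a,b)^2 = \phi(a,b)$ yields
\[
\phi(a,C) - \phi(b,C) \le \eps\, \phi(b,C) + \left(1 + \tfrac{1}{\eps}\right)\phi(a,b),
\]
which is exactly the stated bound. Setting $\eps = 1$ and $C = \{c\}$ recovers the first inequality (with the roles making $\phi(a,c) - \phi(b,c) \le \phi(b,c) + 2\phi(a,b)$, i.e. $\phi(a,c) \le 2\phi(b,c) + 2\phi(a,b)$), confirming consistency.

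There is no real obstacle here; the only mild subtlety is whether $C$ is finite or whether one must handle an infimum for infinite $C$, and whether one insists on strict inequality versus $\le$. Since all uses of this fact in the paper involve finite center sets, I would simply state it for finite $C$ (or pass to an $\eps'$-approximate minimizer and let $\eps' \to 0$), and note that both claimed inequalities are non-strict. The whole argument is two applications of the triangle inequality followed by two applications of the elementary inequality $2xy \le \lambda x^2 + \lambda^{-1} y^2$.
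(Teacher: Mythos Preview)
Your proposal is correct and follows essentially the same route as the paper: pick $c\in C$ realizing $\phi(b,C)$, apply the triangle inequality $d(a,c)\le d(a,b)+d(b,c)$, square, and bound the cross term via $2xy\le \eps y^2 + \tfrac{1}{\eps}x^2$. Your write-up is in fact slightly more careful than the paper's (you correctly write $\phi(a,C)\le\phi(a,c)$ rather than an equality, and you flag the finite-$C$ vs.\ infimum issue), but the argument is the same.
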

\begin{proof}
Let $d$ denote the distance function of $M$. For $c \in C$ such that $d(b,C) = d(b,c)$ we have
\begin{align*}
&\phi(a, C) = d^2(a, C) = d^2(a, c)
\le (d(a,b) + d(b,c))^2\\
&= (d(a,b) + d(b,C))^2
= d^2(a,b) + d^2(b,C) + 2d(a,b)d(b,C)\\
&\le d^2(a,b) + d^2(b,C) + \frac{1}{\eps} d^2(a,b) + \eps d^2(b,C)\\
&= \phi(a,b) + \phi(b,C) + \frac{1}{\eps} \phi(a,b) + \eps \phi(b,C)
\end{align*}
which is the the needed inequality, up to rearrangement. We used that
\[
2d(a,b)d(b,C) \le \frac{1}{\eps} d^2(a,b) + \eps d^2(b,C)
\]
is equivalent to
\[
(\sqrt{\eps} d(b,C) - d(a,b)/\sqrt{\eps})^2 \ge 0. 
\]

\end{proof}

\begin{fact}
\label{fact:thresholded_triangle_inequality}
Let $M$ be an arbitrary metric space with distance function $d$. 
Then for any constant $T > 0$, the function $d': d'(a, b) = \min(d(a, b), T)$ is also a distance function. 
\end{fact}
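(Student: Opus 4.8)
The plan is to prove Fact~\ref{fact:thresholded_triangle_inequality}: given a metric $d$ on $M$ and a constant $T > 0$, the truncated function $d'(a,b) = \min(d(a,b), T)$ is again a metric. First I would dispatch the easy axioms: $d'$ is clearly nonnegative (it is a minimum of two nonnegative quantities, since $d \ge 0$ and $T > 0$), $d'(a,b) = 0$ iff $d(a,b) = 0$ iff $a = b$ (because $T > 0$, truncation never forces a positive distance to $0$), and $d'(a,b) = d'(b,a)$ since $d$ is symmetric. So the only real content is the triangle inequality: $d'(a,c) \le d'(a,b) + d'(b,c)$ for all $a,b,c \in M$.

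For the triangle inequality I would argue by cases on whether the individual distances exceed $T$. If either $d(a,b) \ge T$ or $d(b,c) \ge T$, then the right-hand side $d'(a,b) + d'(b,c)$ is at least $T$, while the left-hand side $d'(a,c) \le T$ by definition, so the inequality holds. In the remaining case $d(a,b) < T$ and $d(b,c) < T$, so $d'(a,b) = d(a,b)$ and $d'(b,c) = d(b,c)$; then $d'(a,c) = \min(d(a,c), T) \le d(a,c) \le d(a,b) + d(b,c) = d'(a,b) + d'(b,c)$, using the triangle inequality for $d$. This exhausts all cases. A cleaner phrasing that avoids case analysis: one checks that $x \mapsto \min(x, T)$ is nondecreasing and subadditive on $[0,\infty)$ (subadditivity: $\min(x+y, T) \le \min(x,T) + \min(y,T)$, which is immediate since if $x + y \le T$ both sides equal $x+y$, and otherwise the right side is at least $\min(x,T)$ or the full $T$), and a nondecreasing subadditive function applied to a metric yields a metric.

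There is really no hard part here; the statement is elementary and the case analysis is short. The only thing to be slightly careful about is not to forget the identity-of-indiscernibles axiom, which does rely on $T > 0$ (it would fail for $T = 0$), and to state the subadditivity of $\min(\cdot, T)$ correctly. I would present the case-based proof of the triangle inequality as the main line since it is the most transparent, and perhaps remark that the general principle (monotone subadditive composition preserves the metric property) is what is really going on.
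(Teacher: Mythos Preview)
Your proposal is correct and follows essentially the same approach as the paper: the paper also proves the triangle inequality by the same two-case split (either some $d(a,b)$ or $d(b,c)$ is at least $T$, or both are below $T$). The paper omits the other metric axioms you check, so your write-up is slightly more complete, but the core argument is identical.
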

\begin{proof}
For any $a, b, c \in M$, we need to prove $d'(a, c) \le d'(a, b) + d'(b, c)$. If $d(a,b) \ge T$ or $d(b,c) \ge T$, respectively, we have $d'(a,b) = T$ or $d'(b,c) = T$, respectively, and, hence, $d'(a, b) + d'(b, c) \ge T \ge d'(a,c)$, as needed. Otherwise, we have $d'(a,c) \le d(a,c) \le d(a,b) + d(b,c) = d'(a,b) + d'(b,c)$, as needed. 
\end{proof}
\cref{fact:thresholded_triangle_inequality} for example implies that in \cref{fact:generalised_triangle_inequality} we can replace $\phi$ by $\tau_\Theta$, as $\tau_\Theta(a,b) = \min(d(a,b), \sqrt{\Theta})^2$. 
It also implies \cref{thm:kmpp}, as the analysis of \cite{arthur2007k} holds for an arbitrary metric space. 
However, this loses a factor of $2$ in approximation guarantee of \cref{thm:kmpp} as we explain next. 
Proofs of \cref{thm:kmpp} in \cite{li_penalties2020, bhaskara2020} instead generalize Lemmas 3.1 and 3.2 in \cite{arthur2007k} for $\tau$ costs to get \cref{thm:kmpp}; this can recover the same constants as in \cite{arthur2007k}. 
For completeness, we also reprove these lemmas here as \cref{lem:2apx,lem:8apx} since we refer to them later in the proofs. Also, we prove a version of \cref{lem:2apx} for metric spaces as \cref{lem:2apx_metric}. This is the reason why arguments for arbitrary metric spaces lose additional factor of $2$ in approximation guarantees. The relevance of \cref{lem:2apx_metric} comes from the fact that it implies our algorithms work in arbitrary metric spaces, which is important for applications in  \cref{sec:paper_nk2z}. 

\begin{lemma}[cf. Lemma 3.1 in \cite{arthur2007k}]
\label{lem:2apx}
For any cluster $A \subseteq \mathbb{R}^d$ we have
\[
\frac{1}{|A|} \sum_{c \in A} \tau_\Theta(A, c) \le 2 \inf_{\mu \in \mathbb{R}^d} \tau_\Theta(A, \mu).
\]
\end{lemma}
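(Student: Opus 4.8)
The plan is to mimic the classical argument of Arthur and Vassilvitskii (Lemma 3.1 in \cite{arthur2007k}), but working with the thresholded cost $\tau_\Theta$ instead of $\phi$. Write $\mu = \mu(A)$ for the mean of $A$. Since $\tau_\Theta(A,c) = \sum_{a \in A} \tau_\Theta(a,c)$, the left-hand side is $\frac{1}{|A|}\sum_{c \in A}\sum_{a \in A}\tau_\Theta(a,c)$. The first step is the pointwise bound $\tau_\Theta(a,c) = \min(\Theta, \|a-c\|^2) \le \min(\Theta, 2\|a-\mu\|^2 + 2\|c-\mu\|^2) \le \min(\Theta,2\|a-\mu\|^2) + \min(\Theta,2\|c-\mu\|^2)$, where the first inequality is the relaxed triangle inequality for squared Euclidean distance and the second uses that $\min(\Theta,x+y)\le \min(\Theta,x)+\min(\Theta,y)$ for $x,y\ge 0$. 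Note $\min(\Theta,2t)\le 2\min(\Theta,t)$, so $\tau_\Theta(a,c)\le 2\tau_\Theta(a,\mu) + 2\tau_\Theta(c,\mu)$.

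Next I would sum this over all $a,c\in A$ and divide by $|A|$. The term $\sum_{a,c} 2\tau_\Theta(a,\mu)$ contributes $2|A|\sum_{a\in A}\tau_\Theta(a,\mu)$, and likewise $\sum_{a,c}2\tau_\Theta(c,\mu)$ contributes $2|A|\sum_{c\in A}\tau_\Theta(c,\mu)$, so dividing by $|A|$ gives an upper bound of $4\sum_{a\in A}\tau_\Theta(a,\mu) = 4\,\tau_\Theta(A,\mu)$. This only yields a factor $4$, not $2$, so the naive split is too lossy; the classical proof avoids this by not applying the triangle inequality twice.

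To recover the factor $2$, I would instead use the standard bias-variance style identity. For the pure squared-distance cost, $\frac{1}{|A|}\sum_{c\in A}\sum_{a\in A}\|a-c\|^2 = 2|A|\sum_{a\in A}\|a-\mu\|^2$ exactly (this is the parallelogram/variance identity, and it is where \cref{fact:mean_minimizes} enters). The point is that $\sum_{a,c}\|a-c\|^2 = \sum_{a,c}(\|a-\mu\|^2 + \|c-\mu\|^2 - 2\langle a-\mu,c-\mu\rangle)$ and the cross term vanishes because $\sum_a (a-\mu) = 0$. For $\tau_\Theta$ I would argue as follows: expand $\tau_\Theta(a,c) \le \|a-\mu\|^2 + \|c-\mu\|^2 - 2\langle a-\mu,c-\mu\rangle$ whenever the right side is $\le \Theta$ and use $\tau_\Theta(a,c)\le \Theta$ otherwise; in either case one checks $\tau_\Theta(a,c) \le \min(\Theta,\|a-\mu\|^2) + \min(\Theta,\|c-\mu\|^2) - 2\langle a-\mu, c-\mu\rangle$ is not quite what we want since the inner product term has no sign. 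The cleaner route, which I expect to be the actual intended one, is: bound $\tau_\Theta(a,c)\le \tau_\Theta(a,\mu) + \tau_\Theta(c,\mu) + (\text{cross-type terms})$ more carefully by splitting $A$ into $A_{\text{near}} = \{a : \|a-\mu\|^2 < \Theta\}$ and its complement, handling the pairs where both endpoints are near by the exact Euclidean identity and bounding the rest trivially by $\Theta$.

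The main obstacle is precisely controlling the cross term $\langle a-\mu, c-\mu\rangle$ when the truncation is active for some points: truncation destroys the clean cancellation $\sum_a(a-\mu)=0$ that the $\phi$-proof relies on. My strategy to handle this is to observe that replacing each point $a$ with $\|a-\mu\|^2 \ge \Theta$ by its truncated "surrogate" does not increase $\frac{1}{|A|}\sum_c\tau_\Theta(A,c)$ by more than it decreases $\tau_\Theta(A,\mu)$ in the corresponding comparison — essentially a monotonicity argument reducing to the untruncated case on $A_{\text{near}}$ plus a disjoint contribution of capped terms. If a fully clean factor-$2$ proof proves elusive, one can always fall back on \cref{fact:thresholded_triangle_inequality}: since $d'(x,y)=\min(d(x,y),\sqrt\Theta)$ is a metric, the metric-space version of the lemma applies and gives a (possibly weaker) constant, and indeed the excerpt notes this route loses a factor of $2$. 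So I would present the $\mathbb{R}^d$-specific argument for the sharp constant $2$ and remark that the metric version (\cref{lem:2apx_metric}) follows with constant $4$ via \cref{fact:thresholded_triangle_inequality}.
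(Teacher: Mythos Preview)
You correctly identify the right decomposition --- split $A$ into $A_{\text{near}}=\{a:\|a-\mu\|^2<\Theta\}$ and $A_{\text{far}}$, use the exact Euclidean identity on near--near pairs, and bound every pair touching $A_{\text{far}}$ trivially by $\Theta$ --- and this is exactly what the paper does. But you then stall on a nonexistent obstacle. The ``cross term'' issue you describe arises only because you are trying to apply the variance identity on $A_{\text{near}}$ while centering at $\mu=\mu(A)$. The paper instead centers at $\mu(A_{\text{near}})$: the identity $\sum_{a,c\in A_{\text{near}}}\|a-c\|^2 = 2|A_{\text{near}}|\,\phi(A_{\text{near}},\mu(A_{\text{near}}))$ holds exactly (no truncation, no leftover inner product), and then one simply uses $\phi(A_{\text{near}},\mu(A_{\text{near}}))\le \phi(A_{\text{near}},\mu)$ from \cref{fact:mean_minimizes}. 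Since every $a\in A_{\text{near}}$ has $\phi(a,\mu)<\Theta$, this equals $\tau_\Theta(A_{\text{near}},\mu)$.

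Concretely, the paper's calculation is
\[
\frac{1}{|A|}\sum_{c\in A}\tau_\Theta(A,c)
\le \frac{1}{|A|}\Bigl(2|A_{\text{near}}|\,\phi(A_{\text{near}},\mu(A_{\text{near}})) + |A_{\text{near}}||A_{\text{far}}|\Theta + |A_{\text{far}}||A|\Theta\Bigr)
\le 2\phi(A_{\text{near}},\mu(A_{\text{near}})) + 2|A_{\text{far}}|\Theta,
\]
and the right-hand side is at most $2\phi(A_{\text{near}},\mu)+2|A_{\text{far}}|\Theta = 2\tau_\Theta(A,\mu)$. So your outlined ``cleaner route'' actually works immediately once you center at the mean of $A_{\text{near}}$; there is no need for surrogate points, monotonicity arguments, or a fallback to the factor-$4$ metric bound.
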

\begin{proof}
Recall that for the cost function $\phi$ we have $\frac{1}{|A|}\sum_{c \in A}\phi(A, c) = 2\phi(A, \mu(A))$ by Lemma 3.1 in \cite{arthur2007k}. 
Let us fix an arbitrary $\mu\in \mathbb{R}^d$ and denote $\ain = \inn_\Theta(A, \mu)$ and $\aout = \outt_\Theta(A, \mu)$. 
Note that $\tau_\Theta(A, \mu(\ain)) \le \tau_\Theta(A, \mu)$ due to \cref{fact:mean_minimizes}. 
We have 
\begin{align*}
\frac{1}{|A|} \sum_{c \in A} \tau_\Theta(A, c) 
&= \frac{1}{|A|} ( \sum_{c \in \ain}\tau_\Theta(\ain, c) + \sum_{c \in \ain}\tau_\Theta(\aout, c) +  \sum_{c \in \aout}\tau_\Theta(A, c) )\\
&\le \frac{1}{|A|} ( \sum_{c \in \ain}\phi(\ain, c) + |\ain||\aout|\Theta + |\aout||A|\Theta )\\
&= \frac{1}{|A|} ( 2|\ain|\phi(\ain, \mu(\ain)) + |\ain||\aout|\Theta + |\aout||A|\Theta )&&\text{Lemma 3.1 in \cite{arthur2007k}}\\
&\le 2\phi(\ain, \mu(\ain)) + 2|\aout|\Theta\\
&\le 2\tau_\Theta(A, \mu(\ain))
\le 2\tau_\Theta(A, \mu).
\end{align*}
\end{proof}


\begin{lemma}
\label{lem:2apx_metric}
For any metric space $M$ and subset of its points $A$ we have
\[
\frac{1}{|A|}\sum_{c \in A}  \phi(A, c) \le 4 \inf_{\mu \in M} \phi(A, \mu).
\]
\end{lemma}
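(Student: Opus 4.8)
The plan is to mimic the proof of \cref{lem:2apx} but replace the use of \cref{fact:mean_minimizes} and Lemma 3.1 of \cite{arthur2007k} (which are Euclidean facts) with their metric-space analogues, each of which costs an extra factor of $2$ via the triangle inequality. Concretely, let $A$ be a subset of a metric space $M$ with distance function $d$, and let $\mu \in M$ be arbitrary. The first observation I would establish is the metric-space version of the "mean is a 2-approximation" fact: for any $a \in A$,
\[
\frac{1}{|A|}\sum_{c \in A} \phi(A, c) \le \frac{2}{|A|} \sum_{c \in A}\phi(A, a),
\]
no — more precisely, I would pick a near-optimal center $\mu$ and show that some \emph{data point} $a \in A$ is a $2$-approximation to $\mu$, i.e.\ $\phi(A, a) \le 2\phi(A, \mu)$. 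This follows by an averaging argument: $\frac{1}{|A|}\sum_{a \in A}\phi(A, a) \le \frac{1}{|A|}\sum_{a \in A} 2(\phi(A,\mu) + |A|\phi(a,\mu))$? Let me instead use the cleaner route below.

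The key steps, in order, are: First, for every $a \in A$ and every $c \in A$, apply \cref{fact:generalised_triangle_inequality} with $\eps = 1$ to get $\phi(a, c) \le 2\phi(a, \mu) + 2\phi(\mu, c)$; summing over $c \in A$ yields $\phi(A, a) \le 2\phi(A,\mu) + 2|A|\phi(a,\mu)$. Second, average this over $a \in A$: since $\sum_{a \in A}\phi(a,\mu) = \phi(A,\mu)$, we obtain
\[
\frac{1}{|A|}\sum_{a \in A}\phi(A, a) \le 2\phi(A,\mu) + \frac{2}{|A|} \cdot |A| \cdot \phi(A,\mu) = 4\phi(A,\mu).
\]
Third, this holds for every $\mu \in M$, so taking the infimum over $\mu$ gives exactly the claimed bound $\frac{1}{|A|}\sum_{c \in A}\phi(A,c) \le 4\inf_{\mu \in M}\phi(A,\mu)$. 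Wait — I need to double-check the averaging in step two: $\frac{1}{|A|}\sum_{a\in A}\big(2\phi(A,\mu) + 2|A|\phi(a,\mu)\big) = 2\phi(A,\mu) + \frac{2|A|}{|A|}\sum_{a \in A}\phi(a,\mu) = 2\phi(A,\mu) + 2\phi(A,\mu) = 4\phi(A,\mu)$, so the constant $4$ is correct and matches the statement.

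I do not anticipate a serious obstacle; this is a routine adaptation. The only subtlety worth flagging is that the factor $4$ (versus the factor $2$ of \cref{lem:2apx}) is genuine and not an artifact of slack: in a general metric space one pays one factor of $2$ to pass from the optimal center $\mu$ to a near-optimal data point via the (generalized) triangle inequality, which is exactly the phenomenon described in the surrounding text ("arguments for arbitrary metric spaces lose additional factor of $2$"). One could alternatively phrase the proof by first exhibiting a single good data point $a^\star \in A$ with $\phi(A, a^\star) \le 4\inf_\mu \phi(A,\mu)$ via Markov's inequality on the averaged bound, but the averaging form above is cleaner and directly gives the stated inequality. If one additionally wanted the thresholded version (replacing $\phi$ by $\tau_\Theta$), the same argument goes through verbatim using \cref{fact:thresholded_triangle_inequality}, which guarantees $\tau_\Theta$ satisfies the same generalized triangle inequality.
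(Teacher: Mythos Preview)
Your proof is correct and follows essentially the same route as the paper: apply the generalized triangle inequality $\phi(d,c)\le 2\phi(d,\mu)+2\phi(\mu,c)$ pointwise, then sum over both indices $c,d\in A$ and divide by $|A|$ to obtain $4\phi(A,\mu)$. The only difference is cosmetic---the paper carries out the double sum in one line, whereas you first sum over $c$ and then average over $a$---so the arguments are the same.
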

\begin{proof}
For any $\mu \in M$, we can write
\begin{align*}
& \frac{1}{|A|} \sum_{c \in A} \phi(A, c)
= \frac{1}{|A|} \sum_{c \in A} \sum_{d \in A} \phi(d, c)\\
&\le \frac{2}{|A|} \sum_{c \in A}  \sum_{d \in A}  \phi(d, \mu) + \phi(\mu, c)&&\text{\cref{fact:generalised_triangle_inequality}}\\
&= 4\phi(A, \mu),
\end{align*}
as needed. 
\end{proof}

The second sampling lemma states that sampling a point from a cluster $A$ proportional to its $\tau$-cost with respect to the current clustering $C$ results in an $8$-approximation of the cost of $A$. 

\begin{lemma}[cf. Lemma 3.2 in \cite{arthur2007k}, Lemma 4 in \cite{li_penalties2020} and Lemma 6 in \cite{bhaskara2020}]
\label{lem:8apx}
For any cluster $A \subseteq \mathbb{R}^d$ and an arbitrary point set $C$, we have 
\[
\sum_{c \in A} \frac{\tau_\Theta(c, C)}{\tau_\Theta(A, C)} \tau_\Theta(A, C \cup \{c\}) \le 8 \inf_{\mu \in \mathbb{R}^d} \tau_\Theta(A, \mu).  
\]
\end{lemma}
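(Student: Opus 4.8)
The plan is to mimic the classical argument for Lemma 3.2 in \cite{arthur2007k}, but to carefully handle the truncation by $\Theta$ in the same spirit as the proof of \cref{lem:2apx}. Write $T = \sqrt{\Theta}$, so that $\tau_\Theta(a,b) = \min(d(a,b), T)^2$, and recall from \cref{fact:thresholded_triangle_inequality} that $d'(a,b) := \min(d(a,b),T)$ is itself a metric; moreover, for points of $\mathbb{R}^d$, $\tau_\Theta$-costs to an arbitrary point $\mu$ are minimized (or nearly minimized) by means of appropriate subclusters via \cref{fact:mean_minimizes}. Fix an arbitrary $\mu \in \mathbb{R}^d$ achieving (up to $\eps$) the infimum on the right-hand side; it suffices to bound the left-hand side by $8\,\tau_\Theta(A,\mu)$.

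First I would split $A$ according to $\mu$: let $\ain = \inn_\Theta(A,\mu)$ be the points of $A$ whose squared distance to $\mu$ is below $\Theta$, and $\aout = \outt_\Theta(A,\mu)$ the rest, so $\tau_\Theta(A,\mu) = \phi(\ain,\mu) + |\aout|\Theta$. When the sampled center $c$ lies in $\aout$, we bound $\tau_\Theta(A, C\cup\{c\}) \le |A|\Theta$ trivially, and since $\sum_{c\in\aout} \tau_\Theta(c,C)/\tau_\Theta(A,C) \le 1$, such terms contribute at most $|A|\Theta$ — wait, that is too weak; instead I would note $\sum_{c\in\aout}\frac{\tau_\Theta(c,C)}{\tau_\Theta(A,C)}\tau_\Theta(A,C\cup\{c\}) \le \sum_{c\in\aout}\frac{\tau_\Theta(c,C)}{\tau_\Theta(A,C)}\tau_\Theta(A,c)$, and bound $\tau_\Theta(A,c)\le |A|\Theta$ only after a more careful accounting, or better, bound $\tau_\Theta(A,C\cup\{c\})\le\tau_\Theta(A,c)\le\sum_{a\in\ain}\tau_\Theta(a,c)+|\aout|\Theta$ and observe $\sum_{a\in\ain}\tau_\Theta(a,c)\le\sum_{a\in\ain}\phi(a,c)$, which over $c$ ranging in $A$ with the sampling weights is controlled using Lemma 3.2 of \cite{arthur2007k} applied to $\ain$; the excess $|\aout|\Theta$ terms sum to at most $2|\aout|\Theta \le 2\tau_\Theta(A,\mu)$. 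For $c\in\ain$, I would run the standard argument: $\tau_\Theta(A, C\cup\{c\}) \le \sum_{a\in A}\tau_\Theta(a,c)$ restricted appropriately, and by \cref{fact:generalised_triangle_inequality} for $\tau_\Theta$ (valid since $d'$ is a metric) together with the power-mean / Cauchy–Schwarz step exactly as in \cite{arthur2007k}, the $\ain$-contribution is at most $8\phi(\ain,\mu(\ain)) \le 8\,\tau_\Theta(A,\mu)$.

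Combining the two cases and letting $\eps\to 0$ in the choice of $\mu$ gives the claimed bound $8\inf_{\mu}\tau_\Theta(A,\mu)$; the constant can be made exactly $8$ by being slightly more economical in the $\aout$ case (e.g.\ using that when $c\in\aout$ the cost $\tau_\Theta(A,C\cup\{c\})$ is bounded both by $\tau_\Theta(A,C)$, making those weighted terms telescope, and by a $\mu(\ain)$-based estimate). The main obstacle I expect is precisely the bookkeeping for the $\aout$ terms: the clean proof in $\mathbb{R}^d$ crucially uses that the mean minimizes the sum of squared distances, but once costs are truncated this optimality only holds after restricting to the in-part $\ain$, so one must route every inequality through $\ain$ and separately pay $O(1)\cdot|\aout|\Theta$ for the truncated points, checking that this extra charge is absorbed into $\tau_\Theta(A,\mu)$ rather than blowing up the constant. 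A secondary subtlety is that $C\cup\{c\}$ rather than $\{c\}$ appears inside $\tau_\Theta$, but since adding points only decreases $\tau_\Theta$, replacing $\tau_\Theta(A,C\cup\{c\})$ by $\tau_\Theta(A,c)$ (or by $\min(\tau_\Theta(A,C),\tau_\Theta(A,c))$ when that helps the $\aout$ case) is harmless and is exactly the reduction used in \cite{arthur2007k}.
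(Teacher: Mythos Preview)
Your sketch has a real gap and is also working much harder than necessary. The $\ain/\aout$ case split you propose mirrors the proof of \cref{lem:2apx}, but for \cref{lem:8apx} it is not needed, and in your $\aout$ case the details do not close: you want to control $\sum_{c}\frac{\tau_\Theta(c,C)}{\tau_\Theta(A,C)}\sum_{a\in\ain}\tau_\Theta(a,c)$ via ``Lemma 3.2 of \cite{arthur2007k} applied to $\ain$'', but the sampling weights $\tau_\Theta(c,C)/\tau_\Theta(A,C)$ are over all of $A$, not over $\ain$, and the classical lemma does not apply to that mismatched situation. Your fallback suggestions (``telescope'', ``$\mu(\ain)$-based estimate'') are not spelled out, and your own remark that the constant ``can be made exactly $8$'' by unspecified further economies confirms the argument is not actually complete.

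The paper's proof avoids all of this by observing that \cref{fact:thresholded_triangle_inequality} makes $d'(a,b)=\min(d(a,b),\sqrt{\Theta})$ a genuine metric, so \cref{fact:generalised_triangle_inequality} holds for $\tau_\Theta$ itself. Then the standard \cite{arthur2007k} computation runs verbatim with $\tau_\Theta$ in place of $\phi$: from $\tau_\Theta(c,C)\le 2(\tau_\Theta(c,d)+\tau_\Theta(d,C))$ one averages over $d\in A$ to get $\tau_\Theta(c,C)\le \frac{2}{|A|}(\tau_\Theta(A,c)+\tau_\Theta(A,C))$, plugs this into the left-hand side, uses $\tau_\Theta(A,C\cup\{c\})=\sum_{d}\min(\Theta,\phi(d,C),\phi(d,c))$ and splits the $\min$ two ways, arriving at $\frac{4}{|A|}\sum_{c\in A}\tau_\Theta(A,c)$. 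Only at this final step is \cref{lem:2apx} invoked to get the bound $8\,\tau_\Theta(A,\mu)$. In other words, the $\ain/\aout$ bookkeeping is entirely encapsulated inside \cref{lem:2apx}; there is no need to reopen it here. You already noticed that $d'$ is a metric---the missing idea is simply to trust that observation globally rather than only on $\ain$.
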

\begin{proof}
Fix an arbitrary $\mu \in \mathbb{R}^d$ and $c, d \in A$. We have
\[
\tau_\Theta(c, C) 
\le  2( \tau_\Theta(c, d) + \tau_\Theta(d, C))
\]
by \cref{fact:generalised_triangle_inequality}. 
For a given $c \in A$ we can then average over all $d \in A$ to get
\begin{align}
\label{eq:blue_effect}
\tau_\Theta(c, C) 
\le \frac{2}{|A|} \sum_{d\in A} (\tau_\Theta(c, d) + \tau_\Theta(d, C) )
= \frac{2}{|A|} ( \tau_\Theta(A, c) + \tau_\Theta(A, C) ).
\end{align}
This implies
\begin{align*}
&\sum_{c \in A} \frac{\tau_\Theta(c, C)}{\tau_\Theta(A, C)} \tau_\Theta(A, C \cup \{c\}) \\ 
&\le \sum_{c \in A} \frac{\frac{2}{|A|} ( \tau_\Theta(A, c) + \tau_\Theta(A, C) )}{\tau_\Theta(A, C)} \sum_{d\in A} \min( \Theta, \phi(d, C), \phi(d, c)) && \text{\cref{eq:blue_effect}} \\
&\le \sum_{c \in A} \frac{\frac{2}{|A|} \tau_\Theta(A, c) }{\tau_\Theta(A, C)} \sum_{d\in A} \min( \Theta, \phi(d, C)) 
+ \sum_{c \in A} \frac{\frac{2}{|A|} \tau_\Theta(A, C)}{ \tau_\Theta(A, C)} \sum_{d\in A} \min( \Theta, \phi(d, c)) \\
&= \sum_{c \in A} \frac{2}{|A|} \tau_\Theta(A, c)
+ \frac{2}{|A|} \sum_{c \in A} \tau_\Theta(A, c) \\
&= \frac{4}{|A|} \sum_{c \in A} \tau_\Theta(A, c)
\le 8 \tau_\Theta(A, \mu). && \text{\cref{lem:2apx}}
\end{align*}

\end{proof}

Both \cref{lem:2apx} and \cref{lem:8apx} were proven in the same way as corresponding lemmas in \cite{arthur2007k}, and  \cref{thm:kmpp} follows from these lemmas in the same way as Lemma 3.3 in \cite{arthur2007k} follows from Lemmas 3.1 and 3.2 in \cite{arthur2007k} (cf. \cite{bhaskara2020, li_penalties2020}). 
Finally, we will use a simple corollary of \cref{lem:8apx}. 

\begin{corollary}
\label{cor:10apx}
For any cluster $A \subseteq \mathbb{R}^d$ and an arbitrary point set $C$, sampling a point $c \in A$ proportional to $\tau_\Theta(c, C)$ results in  
\[
\tau_\Theta(A, C \cup \{c\}) \le 10 \inf_{\mu \in \mathbb{R}^d} \tau_\Theta(A, \mu),  
\]
with probability at least $\frac{1}{5}$. 
\end{corollary}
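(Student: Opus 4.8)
The plan is to derive \cref{cor:10apx} from \cref{lem:8apx} by a straightforward application of Markov's inequality to a suitable nonnegative random variable. First I would fix the cluster $A \subseteq \mathbb{R}^d$, the point set $C$, and let $\mu^* \in \mathbb{R}^d$ attain (up to an arbitrarily small slack, since \cref{lem:8apx} is stated with an infimum) the value $\inf_{\mu} \tau_\Theta(A,\mu)$; write $\OPT_A := \tau_\Theta(A, \mu^*)$ for brevity in the sketch. Let $c$ be the random point sampled from $A$ with probability $\tau_\Theta(c, C)/\tau_\Theta(A, C)$, and define the random variable $Y := \tau_\Theta(A, C \cup \{c\})$. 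Note $Y \ge 0$ always, since $\tau_\Theta$ is a sum of nonnegative terms. The content of \cref{lem:8apx} is exactly that $\E[Y] = \sum_{c \in A} \frac{\tau_\Theta(c,C)}{\tau_\Theta(A,C)} \tau_\Theta(A, C \cup \{c\}) \le 8\,\OPT_A$.

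The key step is then Markov's inequality applied to $Y$: for the threshold $t = 10\,\OPT_A$ we get
\[
\P\bigl(Y > 10\,\OPT_A\bigr) \le \frac{\E[Y]}{10\,\OPT_A} \le \frac{8\,\OPT_A}{10\,\OPT_A} = \frac{4}{5},
\]
hence $\P(Y \le 10\,\OPT_A) \ge \tfrac15$, which is precisely the claimed statement. One small point of care: if $\OPT_A = 0$ (all points of $A$ coincide), then $\tau_\Theta(A, \mu^*) = 0$ and, since sampling $c \in A$ places a center exactly on the common location of all points of $A$, we also have $\tau_\Theta(A, C\cup\{c\}) = 0$ with probability $1$, so the conclusion holds trivially; otherwise $\OPT_A > 0$ and the division above is legitimate. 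If one wants to avoid the infimum-versus-minimum subtlety entirely, one can instead pick $\mu'$ with $\tau_\Theta(A,\mu') \le (1+\delta)\inf_\mu \tau_\Theta(A,\mu)$ for tiny $\delta$, run the same Markov argument to get probability at least $1 - \frac{8(1+\delta)}{10}$, and let $\delta \to 0$; since the event $\{Y \le 10 \inf_\mu \tau_\Theta(A,\mu)\}$ does not depend on $\delta$, its probability is at least $\tfrac15$.

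I do not anticipate a real obstacle here — this is a textbook Markov-inequality corollary of the expectation bound in \cref{lem:8apx}, and the constants ($8 \to 10$, probability $\tfrac15$) are chosen exactly so that $8/10 = 4/5$ leaves a $\tfrac15$ slack. The only things worth stating explicitly in the written proof are (i) identifying $\E[Y]$ with the left-hand side of \cref{lem:8apx}, (ii) the nonnegativity of $Y$ so Markov applies, and (iii) the degenerate $\inf = 0$ case. Everything else is immediate.
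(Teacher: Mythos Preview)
Your proposal is correct and matches the paper's approach exactly: the paper's proof is the single line ``Follows from \cref{lem:8apx} by Markov inequality,'' and you have spelled out precisely that argument. Your additional care about the degenerate case $\inf_\mu \tau_\Theta(A,\mu)=0$ and the infimum-versus-minimum issue is more thorough than the paper but not strictly necessary here.
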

\begin{proof}
Follows from \cref{lem:8apx} by Markov inequality. 
\end{proof}

\section{Basic Oversampling Result}
\label{sec:appendix_tricriteria}
In this section we prove a formal and more general version of \cref{thm:aggarwal_informal}. The full generality of the theorem is needed at a later point in time. To obtain \cref{thm:aggarwal_informal}, plug in some arbitrary $\Theta  \in  [\OPT/(2\eps z),\OPT/(\eps z)]$ and $\delta = 0.5$ in the following theorem. 

\begin{theorem}[cf. \cite{aggarwal2009adaptive}]
\label{thm:aggarwal}
Let $\delta, \eps \in (0,1]$ be arbitrary and suppose we run \cref{alg:kmp} for $\ell = O(k/\eps \cdot \log 1/\delta)$ steps  to get output $C$. Then, with probability at least $1 - \delta$ we have
\[
\tau_\Theta(X^*_{in}, C) = O(\OPT + \eps z \Theta). 
\]
\end{theorem}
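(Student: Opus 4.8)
The plan is to follow the proof sketch given for \cref{thm:aggarwal_informal}, but carry it out with explicit constants and for a general number of steps $\ell = O(k/\eps \cdot \log(1/\delta))$, so that the success probability $1-\delta$ comes out. As in the sketch, fix the optimal centers $C^* = \{c_1^*,\dots,c_k^*\}$ and the induced partition $X_1^*,\dots,X_k^*$ of $X^*_{in}$. At an intermediate step of \cref{alg:kmp} with current center set $C$, call a cluster $X_i^*$ \emph{settled} if $\tau_\Theta(X_i^*, C) < 10\,\tau_\Theta(X_i^*, C^*)$ and \emph{unsettled} otherwise. Since $\sum_i \tau_\Theta(X_i^*, C^*) \le \sum_i \phi(X_i^*, C^*) = \OPT$, the total $\tau_\Theta$-cost of all settled clusters with respect to $C$ is at most $10\,\OPT$; hence if $\tau_\Theta(X^*_{in}, C) \ge 20\,\OPT + \eps z\Theta$ (the "bad" regime we wish to exit), then the unsettled clusters contribute at least half of $\tau_\Theta(X^*_{in}, C)$.

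First I would lower-bound, for a single step taken while in the bad regime, the probability that some currently unsettled cluster becomes settled. Sampling the new center $c$ proportionally to $\tau_\Theta(\cdot, C)$ over all of $X$, the probability $c$ lands in $X^*_{in}$ is $\tau_\Theta(X^*_{in},C)/\big(\tau_\Theta(X^*_{in},C)+\tau_\Theta(X^*_{out},C)\big)$; using $\tau_\Theta(X^*_{out},C)\le |X^*_{out}|\Theta = z\Theta$ together with $\tau_\Theta(X^*_{in},C)\ge \OPT$ and the regime assumption, this is $\Omega(\eps)$ after adjusting constants in $\Theta$ and $\ell$ — here the hypotheses $\Theta \le \OPT/(\eps z)$ (so $z\Theta \le \OPT/\eps$) in \cref{thm:aggarwal_informal}, and more generally just $z\Theta$ being controlled, make this go through. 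Conditioned on $c \in X^*_{in}$, the probability it falls in an unsettled cluster is at least $1/2$ by the cost split above. Conditioned on that, and conditioned further on which unsettled cluster $X_i^*$ it came from, $c$ is distributed within $X_i^*$ proportionally to $\tau_\Theta(\cdot, C)$, so \cref{cor:10apx} applies: with probability at least $1/5$ we get $\tau_\Theta(X_i^*, C\cup\{c\}) \le 10\inf_\mu \tau_\Theta(X_i^*,\mu) \le 10\,\tau_\Theta(X_i^*, C^*)$, i.e. the cluster becomes settled and stays settled (adding centers only decreases $\tau_\Theta$). Multiplying, one step while in the bad regime settles a new cluster with probability $\Omega(\eps)$.

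Next I would run the standard concentration argument. Over $\ell = O(k/\eps \cdot \log(1/\delta))$ steps, as long as we remain in the bad regime each step is an independent Bernoulli-$\Omega(\eps)$ trial for "settle a new cluster", and there are only $k$ clusters to settle, so by a Chernoff bound (\cref{thm:chernoff}) the number of successes exceeds $k$ with probability at least $1-\delta$ once the constant in $\ell$ is chosen appropriately. But more than $k$ clusters cannot be settled, so with probability $\ge 1-\delta$ the algorithm must have left the bad regime at some step $t \le \ell$; since adding further centers only decreases $\tau_\Theta(X^*_{in}, \cdot)$, the final output $C$ also satisfies $\tau_\Theta(X^*_{in}, C) < 20\,\OPT + \eps z\Theta = O(\OPT + \eps z\Theta)$, which is the claim. (If instead the regime is never bad, i.e. all $\ell$ steps already have $\tau_\Theta(X^*_{in},C) < 20\OPT+\eps z\Theta$, the conclusion is immediate.)

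The main obstacle I anticipate is purely bookkeeping: making the three probabilities $\Omega(\eps)$, $1/2$, $1/5$ compose into a clean $\Omega(\eps)$ while keeping track of how the exact relation between $\Theta$, $\eps$, $z$ and $\OPT$ enters — in particular ensuring the $\Omega(\eps)$ bound on $\Pr[c\in X^*_{in}]$ is robust to the slack in the $\eps z\Theta$ additive term (one wants $\tau_\Theta(X^*_{out},C) = O(\OPT/\eps)$, which holds since $z\Theta = O(\OPT/\eps)$ under the theorem's hypothesis on $\Theta$). There is also a subtlety in that "unsettled" is defined relative to the \emph{current} $C$, so a cluster could in principle oscillate; this is handled by the monotonicity of $\tau_\Theta$ in the center set, which guarantees that once a cluster is settled it remains settled, so the count of settled clusters is nondecreasing — I would state this monotonicity explicitly as the first step of the formal proof.
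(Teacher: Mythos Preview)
Your approach is essentially the paper's, and the overall structure (settled/unsettled clusters, single-step success probability $\Omega(\eps)$, Chernoff over $\ell$ steps, monotonicity of $\tau_\Theta$ in the center set) is correct. There is, however, one point of confusion to fix before writing this up: \cref{thm:aggarwal} carries \emph{no} hypothesis on $\Theta$; the conclusion $\tau_\Theta(X^*_{in}, C) = O(\OPT + \eps z\Theta)$ is asserted for arbitrary $\Theta$. In particular, you cannot invoke $z\Theta \le \OPT/\eps$ to control $\tau_\Theta(X^*_{out}, C)$ --- that relation was specific to \cref{thm:aggarwal_informal}. The correct bookkeeping (which the paper does, taking the bad regime to be $\tau_\Theta(X^*_{in}, C) \ge 20(\OPT + \eps z\Theta)$) uses only the regime assumption itself:
\[
\frac{\tau_\Theta(X^*_{in}, C)}{\tau_\Theta(X^*_{in}, C) + \tau_\Theta(X^*_{out}, C)} \;\ge\; \frac{20\,\eps z\Theta}{20\,\eps z\Theta + z\Theta} \;\ge\; \frac{\eps}{2},
\]
since $\tau_\Theta(X^*_{out}, C) \le z\Theta$ and $\tau_\Theta(X^*_{in}, C) \ge 20\,\eps z\Theta$; the $\Theta$'s cancel and no relation between $\Theta$ and $\OPT$ is ever used. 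Your version of the bad regime, $\tau_\Theta(X^*_{in}, C) \ge 20\,\OPT + \eps z\Theta$, also gives the same cancellation (with $\eps/(1+\eps) \ge \eps/2$), so the only error is in attributing the $\Omega(\eps)$ bound to a nonexistent hypothesis on $\Theta$ rather than to the $\eps z\Theta$ term built into the target. Once this is clarified, the rest of your plan goes through verbatim.
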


\begin{proof}

We refer to a cluster $X_i^*$ as unsettled if $\tau(X_i^*, C) \ge 10 \tau(X_i^*, \mu(X_i^*))$. 
Fix one iteration of \cref{alg:kmp} and let $C$ be its current set of centers. 
Suppose that $\tau(X^*_{in}, C) \ge 20(\OPT + \eps z \Theta)$, since otherwise we are already done. We sample a new point $c$ from $X^*_{in}$ with probability
\[
\frac{\tau(X^*_{in}, C)}{\tau(X^*_{in}, C) + \tau(X^*_{out}, C)}  \ge \frac{20\eps z \Theta}{20\eps z \Theta + z \Theta}  \ge \eps/2,
\]
where we used $\tau(X^*_{in}, C) \ge 20\eps z \Theta$, $\tau(X^*_{out}, C) \le |X^*_{out}|\Theta = z\Theta$, and $\eps \leq 1$. 

Moreover, given that $c$ is sampled from $X^*_{in}$, the probability that it is sampled from an unsettled cluster is at least
\[
\frac{\tau(X^*_{in}, C) - 10\OPT}{\tau(X^*_{in}, C)} 
\ge \frac{20 \OPT - 10 \OPT}{20\OPT}
\ge \frac12,
\]
where we used $\tau(X^*_{in}, C) \ge 20\OPT$. 
Given that $c$ is sampled from an unsettled cluster according to the $\tau$-distribution, \cref{cor:10apx} tells us that the cluster becomes settled with probability at least $\frac15$. 
Hence, we make a new cluster settled with probability at least $(\eps/2) \cdot \frac12 \cdot \frac15 = \eps/20$. 

Finally, consider all $O(k/\eps \cdot \log1/\delta)$ iterations. 
We call each iteration \emph{good}, if either $\tau(X^*_{in}, C) \le 20(\OPT + \eps z \Theta)$, or we make a new cluster settled. 
As each iteration is good with probability at least $\eps/20$, the expected number of good iterations is $\Omega(k \log 1/\delta)$. 
By \cref{thm:chernoff}, at least $k$ iterations are good with probability at least $1 - \e^{-\Omega(\log1/\delta)} \ge 1- \delta$. 
This implies that in the end either $\tau(X^*_{in}, C) \le 20(\OPT + \eps z \Theta)$ and we are done, or all clusters are settled. But this implies $\tau(X_{in}^*, C) \le 10 \sum_i \tau(X_i^*, \mu(X_i^*)) \le 10 \sum_i \phi(X_i^*, \mu(X_i^*)) = 10 \OPT$, so we are again done.  
\end{proof}

\section{Proof of  \cref{thm:localsearch}}
\label{sec:appendix_localsearch}

In this section we explain in more detail how to adapt the local-search algorithm of \cite{lattanzi2019better} in order to obtain an $(O(1/\eps),1+\eps)$-approximation algorithm for \km with outliers.

As explained in the main part of the paper, the main idea is to run the local-search algorithm with penalties. That is, each point is sampled as a new cluster center proportional to its $\tau_\Theta$-cost, where $\Theta = \Theta(\frac{\OPT}{\eps z})$. In total, we run $O(k \log \log k + k\log(1/\eps)/\eps)$ local-search steps. Hence, the algorithm can be implemented in time $\tilde{O}(nk/\eps)$ (cf. \cite{choo2020kmeans}, Section 3.3). 
After the first $O(k \log \log k)$ search steps, we show that the current cost is with constant probability at most $O(\OPT/\eps)$. The analysis is quite similar to the analysis of \cite{lattanzi2019better}. Afterwards, within the next $O(k\log(1/\eps)/\eps)$ steps, we show that the number of points with a squared distance of at least $10\Theta$ (The extra factor of $10$ is needed in the analysis for technical reasons) drops below $(1+\eps)z$ with constant probability. 
Hence, we can declare all those points as outliers and the $\phi$-cost of each remaining point is then only at most $10$ times larger than its $\tau_\Theta$ cost. 
Therefore, $\phi(X, C)$ can still be bounded by $O(\OPT/\eps)$, as needed. 
The formal algorithm description is given below as \cref{alg:noisylocalsearch}. 
Recall that $\Theta$ depends on $\OPT$ and as $\OPT$ is unknown to the algorithm, it needs to be "guessed". Also note that due to technicalities in the analysis, the algorithm does not simply output the solution one obtains after running all the local-search steps, but is also considers all "intermediate" solutions as final solutions. The reason is that, unlike the total cost, the number of points with a squared distance of at least $10\Theta$ is not necessarily decreasing monotonically. 

\begin{algorithm}[]
	\caption{\nlsp}
	\label{alg:noisylocalsearch}
	Input: $X$, $k$, $z$, $\eps$, $\Delta$ \\ Assumptions: $k + z < |X|$, $\eps \in (0,1]$
	\begin{algorithmic}[1]
	    \STATE $C \leftarrow \emptyset$ 
	    
	    \STATE
	    $\Xout \leftarrow \emptyset$
		\FOR{$i = 0$ to $\lceil \log(n\Delta^2)\rceil$}
		\STATE $OPT_{guess} \leftarrow 2^i$, 
		$\beta \leftarrow 300/\eps$, 
		$\Theta  \leftarrow \frac{\beta OPT_{guess}}{z}$
        \STATE $C_{i,0} \leftarrow \kmpp(X, k, \Theta)$  (\cref{alg:kmp})
        \FOR{$j \leftarrow 1, 2, \dots, O(k \log \log k  + k \cdot \frac{\log(1/\eps)}{\eps})$}
        \STATE $C_{i,j} \leftarrow \lsp\textrm{step}(X, C_{i,j-1}, \Theta)$ (\cref{alg:localsearch})
        \STATE $\Xout^{i,j} \leftarrow \outt_{10\Theta}(X,C)$
        \IF{$|\Xout^{i,j}| \leq (1+\eps)z$ and $\phi(X \setminus \Xout^{i,j},C_{i,j}) < \phi(X \setminus \Xout,C) $}
        \STATE $C \leftarrow C_{i,j}$ \STATE $\Xout \leftarrow \Xout^{i,j}$
        \ENDIF
        \ENDFOR
		\ENDFOR
		\RETURN $(C,\Xout)$
	\end{algorithmic}
\end{algorithm}

\begin{theorem}[Formal version of \cref{thm:lattanzi_informal}]
\label{thm:localsearch}
Let $\eps \in (0,1]$ be arbitrary and $X \subset \mathbb{R}^d$ be a set of $n$ points. 
Furthermore, let $k \in \mathbb{N}$ and $z \in \mathbb{N}_0$ be such that $k + z < n$ (otherwise, the cost is $0$). 
Then, with positive constant probability, \cref{alg:noisylocalsearch} returns a set $\Xout$ containing at most $(1+\eps)z$ points and a set of $k$ cluster centers $C$ such that
$$\phi(X \setminus \Xout,C) = O(\OPT/\eps) $$
for $\OPT = \min_{C',\Xout' \colon |C'| = k,|\Xout'| = z} \phi(X \setminus \Xout', C')$. 
\end{theorem}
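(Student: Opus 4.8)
The plan is to analyze \cref{alg:noisylocalsearch} in two phases, following the high-level structure sketched after \cref{thm:lattanzi_informal} and mirroring the two-proposition approach of \cite{lattanzi2019better,choo2020kmeans} adapted to the penalized cost $\tau_\Theta$. Throughout, condition on the "correct" guess $OPT_{guess} \in [\OPT, 2\OPT]$, so that $\Theta = \beta OPT_{guess}/z \in [\beta\OPT/z, 2\beta\OPT/z]$ with $\beta = 300/\eps$; since the algorithm tries all $O(\log(n\Delta^2))$ powers of two and keeps the best valid intermediate solution, it suffices to establish the guarantee for this single guess. The first task is a \emph{descent lemma}: there is a constant $c_1$ such that whenever the current center set $C$ satisfies $\tau_\Theta(X, C) \ge c_1 \beta \OPT$, one \lsp step produces $C'$ with $\tau_\Theta(X, C') \le (1 - 1/(c_2 k))\,\tau_\Theta(X, C)$ with probability $\Omega(1)$. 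This is proved exactly as in \cite{lattanzi2019better}: sample $c$ proportional to $\tau_\Theta(\cdot, C)$; with constant probability $c$ lands in an unsettled optimal cluster and (by \cref{cor:10apx}) makes it settled; then argue that swapping out the cheapest of the $k+1$ centers — rather than a fictitious ideal one — costs only a $(1+o(1))$ factor via the generalized triangle inequality (\cref{fact:generalised_triangle_inequality}, which applies to $\tau_\Theta$ by \cref{fact:thresholded_triangle_inequality}), so the net decrease is $\Omega(\tau_\Theta(X,C)/k)$. Combining this with \cref{thm:kmpp} (the \kmpp seeding already gives $O(\log k)\cdot\tau_\Theta(X,C^*)$, and $\tau_\Theta(X,C^*)\le \OPT + z\Theta = O(\OPT/\eps) = O(\beta\OPT)$), a standard potential/Chernoff argument shows that after $O(k\log\log k)$ steps we have $\tau_\Theta(X, C_{i,j}) = O(\beta\OPT) = O(\OPT/\eps)$ with constant probability.

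The second task is the \emph{outlier-reduction lemma}. Once $\tau_\Theta(X,C) = O(\OPT/\eps)$, the cost cannot drop much further, but the number of points paying essentially the cap may still exceed $(1+\eps)z$. Here I use the refined idea from \cref{thm:aggarwal}: as long as $|\outt_{10\Theta}(X,C)| > (1+\eps)z$, the set $X^*_{in}$ must contain at least $\eps z - $ (few) such points — because $|\outt_{10\Theta}(X,C) \cap X^*_{out}| \le z$ — and each contributes $\ge 10\Theta$ to $\tau_{10\Theta}(X^*_{in}, C)$, hence $\tau_{10\Theta}(X^*_{in},C) \ge \Omega(\eps z \Theta) = \Omega(\beta\OPT)$, so a large fraction of the total $\tau_{10\Theta}$-mass sits on inliers. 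Consequently, when we sample a new center $c$ proportional to $\tau_\Theta(\cdot, C)$, with probability $\Omega(\eps)$ it falls in an unsettled optimal cluster (unsettled now meaning $\tau_{10\Theta}(X_i^*, C) \ge 10\tau_{10\Theta}(X_i^*,\mu(X_i^*))$), and by \cref{cor:10apx} the \lsp step — again accounting for the swap-out via \cref{fact:generalised_triangle_inequality} — decreases $\tau_\Theta(X, C)$ by a factor $(1-\Omega(\eps/k))$. Since $\tau_\Theta(X,C)$ started at $O(\OPT/\eps) = O(\beta\OPT)$ and is bounded below by $\OPT$ (it can never go below the optimal inlier cost up to constants), there can be at most $O((k/\eps)\log(1/\eps))$ such productive steps, so within $O(k\log(1/\eps)/\eps)$ steps we reach, with constant probability, an iterate where $|\outt_{10\Theta}(X,C)| \le (1+\eps)z$. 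At that point declaring $\Xout := \outt_{10\Theta}(X,C)$ gives $\phi(X\setminus\Xout, C) = \tau_{10\Theta}(X\setminus\Xout,C) \le \tau_{10\Theta}(X,C) \le 10\tau_\Theta(X,C) = O(\OPT/\eps)$ (using $\tau_{10\Theta} \le 10\tau_\Theta$ pointwise), and $|\Xout| \le (1+\eps)z$; the "keep best valid intermediate solution" bookkeeping in \cref{alg:noisylocalsearch} ensures the returned pair is at least this good, since the targeted iterate is one of the candidates considered.

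I would then assemble the two phases: run the descent phase for $O(k\log\log k)$ steps, then the reduction phase for $O(k\log(1/\eps)/\eps)$ steps; each phase succeeds with constant probability, and the counts match the loop bound in \cref{alg:noisylocalsearch}. A union bound over the (constant-probability) failure events — inflated by constant-factor repetition if desired, or simply absorbed into the "positive constant probability" claim — finishes the proof, with the running time $\tilde O(nk/\eps)$ coming from the $O(\log(n\Delta^2))$ guesses times $O(k\log(1/\eps)/\eps)$ local-search steps, each implementable in $\tilde O(n)$ amortized time as in \cite{choo2020kmeans}.

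The main obstacle I anticipate is the second phase, specifically making the non-monotonicity of $|\outt_{10\Theta}(X,C)|$ harmless: the quantity we can force to decrease is the \emph{cost} $\tau_\Theta(X,C)$, not the outlier count directly, so I must argue that \emph{while} the count stays above $(1+\eps)z$ the cost provably shrinks geometrically, giving a hard bound on the number of "bad" steps and hence guaranteeing that \emph{some} visited iterate has a small count — this is precisely why the algorithm must test every intermediate $C_{i,j}$ rather than only the final one. A secondary technical nuisance is the interaction of the two different caps $\Theta$ (used for sampling) and $10\Theta$ (used to define outliers): I need the generalized-triangle-inequality bookkeeping and the settled/unsettled definitions to be stated consistently with the $10\Theta$ cap while the sampling distribution uses the $\Theta$ cap, which forces the extra factor-$10$ slack and must be tracked carefully so that \cref{cor:10apx} still applies (it does, since it holds for any cap).
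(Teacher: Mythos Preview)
Your two-phase outline matches the paper's structure, and Phase~1 is essentially correct: the paper proves precisely the descent lemma you describe (\cref{lemma:easyonestep} and \cref{lem:phase1}) via the reassignment-cost and matching machinery of \cite{lattanzi2019better} adapted to $\tau_\Theta$.

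There is, however, a genuine gap in your Phase~2 argument. You take $\tau_\Theta(X,C)$ itself as the potential, claim it shrinks by a $(1-\Omega(\eps/k))$ factor per successful step, and terminate because it is ``bounded below by $\OPT$''. That lower bound is wrong: the $z$ true outliers may each contribute $\Theta$ to $\tau_\Theta(X,C)$, so $\tau_\Theta(X,C)$ can be $\ge z\Theta = \Theta(\beta\OPT) = \Theta(\OPT/\eps)$, which is the \emph{same} order as the value you have at the end of Phase~1. A geometric decrease in $\tau_\Theta(X,C)$ therefore gives no useful bound on the number of steps. The paper fixes this by using the shifted potential $\Phi = \tau_\Theta(X,C) - z\Theta$: it shows (\cref{lemma:onestephard}) that while $|\outt_{10\Theta}(X,C)| > (1+\eps)z$, one local-search step yields $\Phi' \le (1-1/(100k))\,\Phi$ with probability $\Omega(\eps)$, and in that regime $\Phi \ge \eps z\Theta > \OPT$. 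Since $\Phi \le O(\OPT/\eps)$ after Phase~1, this forces the outlier count below $(1+\eps)z$ within $O(k\log(1/\eps)/\eps)$ steps (\cref{lem:phase2}).

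A secondary point: your swap-out sketch (``costs only a $(1+o(1))$ factor via the triangle inequality'') understates the Phase~2 work. The reassignment cost of the removed center must now also account for the true outliers currently assigned to it; the paper's definition of $\ass(X,C,c_i)$ carries the extra terms $\Theta|\out(c_i)| - \tau(\out(c_i),c_i)$, and summing these over the bijection $b$ is exactly what makes the ``good cluster'' bound (\cref{lem:sample_from_good}) come out with $\tau(X,C)-z\Theta$ rather than $\tau(X,C)$ on the right-hand side. The role of the $10\Theta$ threshold is also different from what you propose: the paper does \emph{not} switch the cap in the settled/unsettled definitions to $10\Theta$; it works with $\tau_\Theta$ throughout and instead proves (\cref{lemma:ignored}) that if any point of an optimal cluster $X_i^*$ has $\phi(x,C)\ge 10\Theta$ then the entire cluster is ignored by $C$. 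This lemma is what lets the count $|\outt_{10\Theta}(\Xin^*,C)|$ feed directly into the $\tau_\Theta$-based bookkeeping without juggling two caps.
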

Before going into the actual proof of \cref{thm:localsearch}, we first need to introduce additional notation. As before, $X$ denotes the set of all input points and $\Xout^*$ denotes the set of all input points that are declared as outliers by the optimal solution under consideration. 
Moreover, let $X^*_{far} \subseteq X \setminus \Xout^*$ denote the set consisting of those points in $X \setminus X_{out}^*$ that have a squared distance of at least $\Theta$ to the closest optimal cluster center. Now, we define $\Xin^* = X \setminus (\Xout^* \cup X^*_{far})$. In other words, we split $X = X_{in}^* \sqcup X_{far}^* \sqcup X_{out}^*$ (in \cref{sec:paper_previous_work} we used $X = X_{in}^*  \sqcup X_{out}^*$).

For a fixed \emph{candidate clustering} $C$ and for $i \in [k]$, $X_i \subseteq X^*_{in}$ denotes the set consisting of those points in $\Xin^*$ for which the closest center $c \in C$ is the center $c_i$. 
Similarly, let $X_i^*$ denote the set consisting of those points in $\Xin^*$ that are clustered to the center $c_i^*$. 
Moreover, $X_{far,i}$ denotes the set of points in $X^*_{far}$ that are closest to $c_i$ among all candidate centers in $C$.  

We say that the optimal cluster $X_i^*$ is \emph{ignored by $C$} if for every $x \in X_i^*$ we have $\tau_\Theta(x, C) = \Theta$. We define $I$ to be the set of indices corresponding to ignored clusters. 
Next, we define a function $f: [k] \setminus I \mapsto [k]$ with $f(i) = \arg \min_{j \in k} d(c_i^*,c_j)$ for every $i \in [k] \setminus I$. Intuitively, the function $f$ maps each optimal cluster that is not  ignored to the closest candidate center (cf. \cref{fig:matching}). Now, for each candidate center, either zero, one or more than one optimal cluster maps to it. If none of the optimal clusters map to it, then we call the candidate center \emph{lonely} and we denote the indices of all the lonely candidate centers by $L$. 
If exactly one optimal cluster maps to a given candidate center, then we call the candidate center \emph{matched} and $H$ denotes the set of indices of all the matched candidate centers. 
Without loss of generality, we assume that $f(i) = i$ for every $i \in H$. Finally, if more than one optimal cluster maps to a candidate center, then we call it \emph{popular}. We  refer to an optimal cluster $X_i^*$ with cluster index $i \in [k] \setminus I$ as a $\emph{cheap}$ cluster if $X_i^*$ maps to some popular candidate center with corresponding index $j$ and, moreover, $X_i^*$ has the smallest cost with respect to $C$ among all optimal clusters that are mapped to the popular candidate center with index $j$. 
We let $T$ denote the set of indices corresponding to cheap clusters. Let $b: [k] \setminus T \mapsto H \cup L$ be an arbitrary bijection with $b(i) = i$ for every $i \in H$ (cf. \cref{fig:matching}). That is, each optimal cluster center that is not cheap gets uniquely assigned to a candidate center which is not popular while preserving the matching between matched cluster centers. Note that such a bijection exists since the number of cheap optimal clusters is equal to the number of popular candidate centers.

\begin{figure}[h!]
    \centering
    \includegraphics[width=\textwidth]{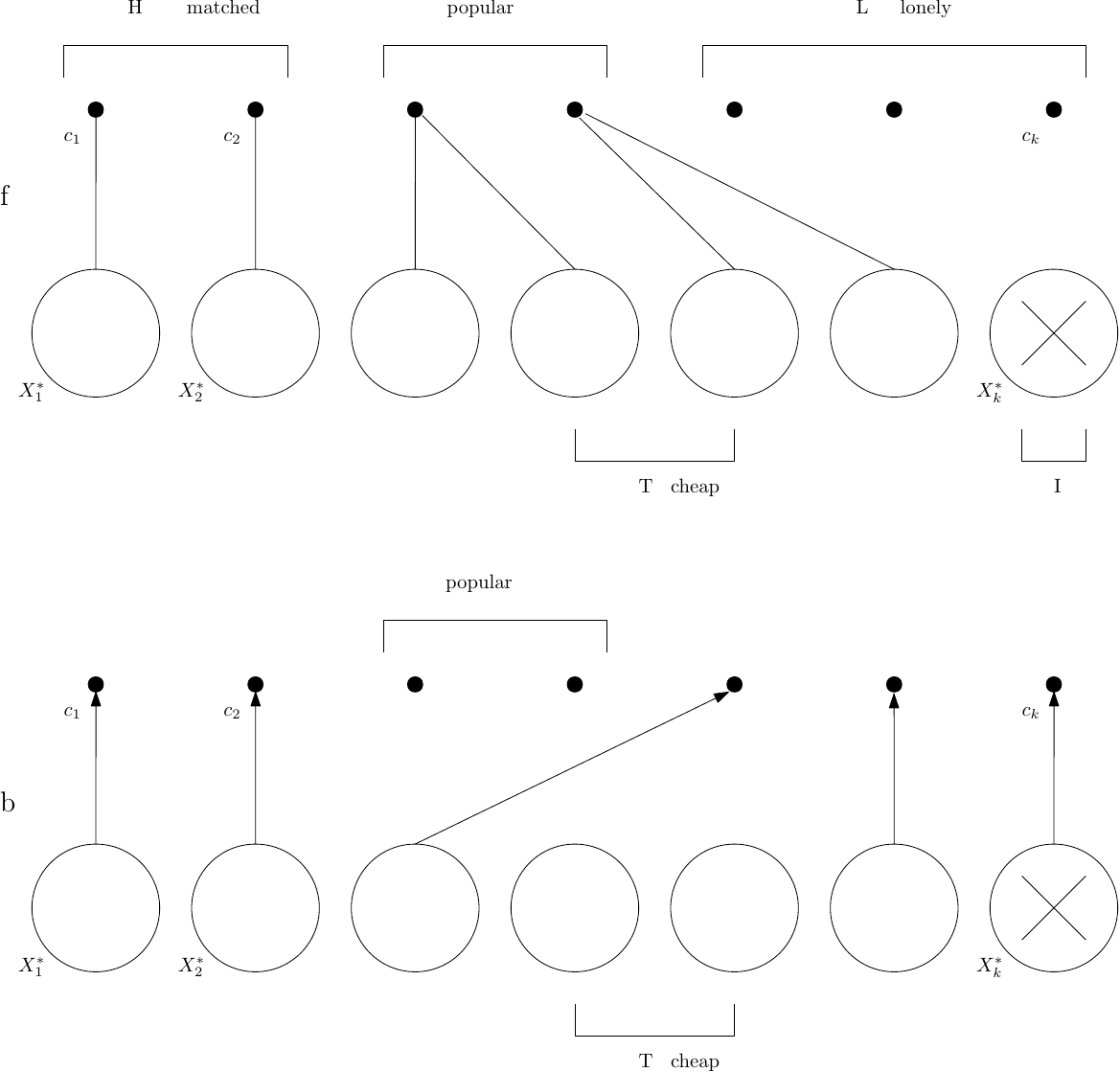}
    \caption{The figure captures several definitions. First, function $f$ maps each optimal cluster $X_i^*$ to its closest candidate center $c_j \in C$. 
    Candidate centers with exactly one optimal cluster matched to them are denoted by $H$ and for convenience we assume that $f$ maps $X_i^*$ to $c_i$ there. Then there are popular centers with more than one match and for each such center we identify the cheapest matched cluster and put it in the set $T$. Finally, there are lonely centers with no matches. Ignored optimal clusters are not matched to any center by $f$. 
    Second, we construct a bijection $b$ that is used later for a double counting argument. In this bijection, we set $b(i) = f(i) = i$ for clusters $i$ matched to centers in $H$. Then, we omit cheap clusters $T$ and popular candidate centers, and extend $b$ to an arbitrary bijection between the remaining clusters and lonely candidate centers. }
    \label{fig:matching}
\end{figure}

For $i \in H \cup L$, we define 
$$\out(c_i) = \{x \in \inn_{10\Theta}(\Xout^*,C): i = \arg \min_{j \in [k]} \phi(x,c_i)\}.$$ 
That is, $\out(c_i)$ contains the real outliers that are closest to the candidate center $c_i$ and which have a squared distance of at most $10\Theta$ to $c_i$. 
The factor of $10$ comes into play because of the following lemma. 

\begin{lemma}
	\label{lemma:ignored}
	Let $C$ denote the current set of candidate centers and assume there exists some $x \in X^*_i$ with $\phi(x,C) \geq 10\Theta$. Then, $X^*_i$ is ignored by $C$.
\end{lemma}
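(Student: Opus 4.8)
The plan is to argue by contradiction: suppose $X_i^*$ is \emph{not} ignored by $C$, i.e.\ there is some $y \in X_i^*$ with $\tau_\Theta(y,C) < \Theta$, equivalently $\phi(y,C) < \Theta$. I will combine this with the hypothesized far point $x \in X_i^*$ (with $\phi(x,C) \ge 10\Theta$) and a single application of the generalized triangle inequality to reach $\phi(x,C) < 10\Theta$, a contradiction. Since $y$ was arbitrary, this shows every point of $X_i^*$ has $\tau_\Theta(\cdot,C) = \Theta$, i.e.\ $X_i^*$ is ignored.

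First I would recall the one structural fact about $X_i^*$ that makes this work: by construction $X_i^* \subseteq \Xin^* = X \setminus (\Xout^* \cup X_{far}^*)$, and every point of $X_i^*$ is assigned to $c_i^*$ in the optimal solution; since points of $X_{far}^*$ were precisely those (non-outlier) points whose squared distance to their closest optimal center is at least $\Theta$, every $p \in X_i^*$ satisfies $\phi(p, c_i^*) < \Theta$. In particular, for the two points $x, y \in X_i^*$ of interest we have $\phi(x, c_i^*) < \Theta$ and $\phi(y, c_i^*) < \Theta$. By \cref{fact:generalised_triangle_inequality} (the $\eps = 1$ form, $\phi(a,c) \le 2\phi(a,b) + 2\phi(b,c)$), this gives $\phi(x,y) \le 2\phi(x, c_i^*) + 2\phi(c_i^*, y) < 4\Theta$.

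Now let $c \in C$ be a closest candidate center to $y$, so $\phi(y,c) = \phi(y,C) < \Theta$. Applying \cref{fact:generalised_triangle_inequality} once more, $\phi(x, C) \le \phi(x, c) \le 2\phi(x,y) + 2\phi(y,c) < 2 \cdot 4\Theta + 2\Theta = 10\Theta$, contradicting $\phi(x,C) \ge 10\Theta$. This completes the argument; along the way it makes transparent why the constant $10$ (rather than any smaller value) is forced: to conclude $\phi(y,C) \ge \Theta$ from $\phi(x,C) \le c\,\Theta$ through the doubling triangle inequality we need $c\,\Theta \le 8\Theta + 2\phi(y,C)$, i.e.\ $c \ge 10$.

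I do not expect a genuine obstacle here — the lemma is a short consequence of the triangle inequality. The only thing to be careful about is bookkeeping: making sure the $\phi$ versus $\tau_\Theta$ translation is used correctly (the hypothesis and the chain are stated in terms of $\phi$, while "ignored" is a statement about $\tau_\Theta$), and making sure the generalized triangle inequality is invoked in a form valid in an arbitrary metric space (which \cref{fact:generalised_triangle_inequality} supplies, so the lemma holds beyond $\mathbb{R}^d$ as well).
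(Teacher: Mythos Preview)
Your proof is correct and follows essentially the same idea as the paper's: both route the triangle inequality through the optimal center $c_i^*$, using that every point of $X_i^*$ lies within $\sqrt{\Theta}$ of $c_i^*$ (since $X_i^* \cap X_{far}^* = \emptyset$). The only cosmetic difference is that the paper argues directly on unsquared distances --- for arbitrary $x' \in X_i^*$ and its closest center $c'$, it writes $\|x'-c'\| \ge \|x-c'\| - \|x'-c_i^*\| - \|x-c_i^*\| \ge \sqrt{10\Theta} - 2\sqrt{\Theta} > \sqrt{\Theta}$ --- whereas you go by contradiction using the squared-distance doubling form of \cref{fact:generalised_triangle_inequality}; note that the direct-distance route actually shows any constant strictly larger than $9$ suffices, so your remark that $10$ is ``forced'' applies only to the particular squared-distance chain you chose, not to the lemma itself.
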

\begin{proof}
Let $x' \in X_i^*$ be arbitrary and let $c'$ denote the candidate center in $C$ that is closest to $x'$. We have

\begin{align*}
	||x'-c'|| &\geq ||x-c'|| - ||x'-x|| \\
			  &\geq ||x-c'|| - ||x'- c_i^*|| - ||x- c_i^*|| \\  &\geq\sqrt{10\Theta} - \sqrt{\Theta} - \sqrt{\Theta}
			   &&x,x' \notin X^*_{far} \\
			   & > \sqrt{\Theta}
\end{align*}
and therefore $\phi(x',C) > \Theta$. Hence, $X_i^*$ is  ignored.
\end{proof}

Now we define the notion of reassignment cost, similarly as it was done in \cite{lattanzi2019better}. For each lonely candidate center $c_\ell$, $\ass(X,C,c_\ell)$ is an upper bound for the increase in $\tau$ cost due to removing $c_\ell$ from the current set of candidate centers $C$. Similarly, for each matched candidate center $c_h$, $\ass(X,C,c_h)$ is an upper bound for the increase in cost due to removing $c_h$ from the current set of candidate centers $C$, ignoring the cost increase for all points in the optimal cluster $X_h^*$.

\begin{definition}
Let $h \in H$ be arbitrary and let $X^*_h$ be the optimal cluster that is captured by the center $c_h$. The reassignment cost of $c_h$ is defined as

\begin{align*}
&\ass(X,C,c_h) = \cost{\Xin^* \setminus X_h^*}{C \setminus \{c_h\}} - \cost{\Xin^* \setminus X_h^*}{C}  + \Theta |X_{far,h}| \\
&+ \left( \Theta |\out(c_h
)|- \tau(\out(c_h), c_h)\right) .
\end{align*}

For $\ell \in L$, the reassignment cost of $c_\ell$ is defined as 

\begin{align*}
    &\ass(X,C,c_\ell) = \cost{\Xin^*}{C \setminus \{c_\ell\}} - \cost{\Xin^* }{C} + \Theta |X_{far,\ell}| \\& 
    + \left( \Theta |\out(c_\ell)|- \tau(\out(c_\ell), c_\ell)\right).
\end{align*}

\end{definition}

Intuitively, the first two terms in the definition correspond to the increase of the cost for all the points in $X_{in}^* \setminus X_h^*$ and $X_{in}^*$, respectively. The third term corresponds to the increase of the cost for all points in $X^*_{far}$. This increase is bounded by $\Theta$, their maximal cost. Finally, the last two terms in the definition below upper bound the increase of the cost of outliers. Here we again crudely upper bound the cost of each outlier after removing the candidate center by $\Theta$.

\begin{fact}
\label{fact:reassign}
For a matched cluster $h$ we have
\[
\ass(X,C,c_h) \ge \tau(X \setminus X_h^*,C \setminus \{c_h\}) - \tau(X \setminus X_h^*,C).
\]

Similarly, for a lonely cluster $\ell$ we have

\[
\ass(X,C,c_\ell) \ge \tau(X,C \setminus \{c_\ell\}) - \tau(X,C). 
\]
\end{fact}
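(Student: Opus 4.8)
The goal is to verify Fact~\ref{fact:reassign}: that $\ass(X,C,c_h)$ (respectively $\ass(X,C,c_\ell)$) upper bounds the true increase in $\tau$-cost caused by deleting the candidate center $c_h$ (respectively $c_\ell$) from $C$. The plan is to split the point set $X$ into the relevant pieces according to the decomposition $X = X_{in}^* \sqcup X_{far}^* \sqcup X_{out}^*$, compare the corresponding increases term by term, and check that each term in the definition of $\ass$ dominates the true increase for that piece.

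First I would handle the lonely case, which is simpler since we do not need to carve out $X_h^*$. The true increase is $\tau(X, C \setminus \{c_\ell\}) - \tau(X, C)$, which splits additively over $X_{in}^*$, $X_{far}^*$, and $X_{out}^*$. On $X_{in}^*$ the increase is exactly $\cost{\Xin^*}{C \setminus \{c_\ell\}} - \cost{\Xin^*}{C}$, the first two terms of $\ass(X,C,c_\ell)$. For $X_{far}^*$, the $\tau$-cost of any point is at most $\Theta$, so the increase on $X_{far}^*$ is at most $\Theta$ times the number of points in $X_{far}^*$ whose cost could possibly change, namely those assigned to $c_\ell$; this is $\Theta|X_{far,\ell}|$, matching the third term. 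For $X_{out}^*$, I would further split into points with $\tau(x,c_\ell)<10\Theta$ (i.e.\ $\out(c_\ell)$) and the rest: for points not in $\out(c_\ell)$, the closest candidate center is not $c_\ell$, so removing $c_\ell$ does not affect them; for $x\in\out(c_\ell)$, the new $\tau$-cost is at most $\Theta$ while the old cost is at least $\tau(x,c_\ell)$ only if $c_\ell$ was actually the nearest center — in any case the increase is at most $\Theta - \tau(x, c_\ell)$ when $c_\ell$ was nearest and at most $0$ otherwise, so summing gives at most $\Theta|\out(c_\ell)| - \tau(\out(c_\ell),c_\ell)$, the last term. Here I should be slightly careful: if $c_\ell$ was \emph{not} the nearest center to some $x\in\out(c_\ell)$, then the bound $\Theta - \tau(x,c_\ell)$ could be negative, which only helps us; I would note that $\Theta|\out(c_\ell)| - \tau(\out(c_\ell),c_\ell) \ge 0$ regardless (each summand $\Theta - \tau(x,c_\ell)\ge 0$ since $\tau(x,c_\ell) = \min(\Theta,\phi(x,c_\ell))\le\Theta$), so this term is always a valid upper bound on that piece's contribution. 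Adding the three pieces yields exactly the claimed inequality.

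Then I would do the matched case identically, except that $X_h^*$ is excluded throughout: the first two terms account for the increase on $\Xin^* \setminus X_h^*$, and the $X_{far}$ and $X_{out}$ arguments are unchanged (they involve $X_{far,h}$ and $\out(c_h)$). So $\ass(X,C,c_h)$ upper bounds $\tau(X \setminus X_h^*, C\setminus\{c_h\}) - \tau(X\setminus X_h^*, C)$, which is precisely the stated conclusion. The only mild subtlety is making sure the decomposition $X \setminus X_h^* = (\Xin^*\setminus X_h^*) \sqcup X_{far}^* \sqcup X_{out}^*$ is used consistently and that removing a single center never \emph{decreases} cost, so all the per-piece increases are nonnegative and can be bounded separately without sign issues.

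I do not expect a genuine obstacle here — this is a bookkeeping lemma. The one place to be deliberate is the treatment of $X_{out}^*$: one must handle the points of $\out(c_\ell)$ (resp.\ $\out(c_h)$) for which $c_\ell$ (resp.\ $c_h$) is \emph{not} the assigned center, and confirm that the term $\Theta|\out(\cdot)| - \tau(\out(\cdot),\cdot)$ remains a valid (nonnegative) upper bound on that block's increase even in that case, using $\tau(x,c)\le\Theta$ pointwise. Everything else is a direct additive decomposition of $\tau(\cdot,C\setminus\{c\})-\tau(\cdot,C)$ over the partition of the ground set.
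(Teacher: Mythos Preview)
Your approach is correct and matches the paper's: the paper's proof is literally ``See above discussion,'' referring to exactly the term-by-term decomposition over $X_{in}^* \sqcup X_{far}^* \sqcup X_{out}^*$ that you carry out. One small wrinkle: you misstate the defining condition of $\out(c_\ell)$ as ``$\tau(x,c_\ell)<10\Theta$'' (which is vacuous since $\tau\le\Theta$) and then worry about the case where $c_\ell$ is not the nearest center to some $x\in\out(c_\ell)$; by definition $\out(c_\ell)$ consists precisely of outliers whose \emph{nearest} center is $c_\ell$ and whose $\phi$-distance to $C$ is below $10\Theta$, so that case cannot occur and your extra caution is unnecessary (though harmless).
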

\begin{proof}
See above discussion. 
\end{proof}

\begin{lemma}[cf. Lemma 4 in \cite{lattanzi2019better}]
\label{lem:reassignmentcost}
For $i \in H \cup L$ we have

\begin{align*}
\ass(X,C,c_i) &\leq \frac{21}{100} \tau(\inn_\Theta(X_i,C),C) + 24 \phi(\inn_\Theta(X_i, C) ,C^*) + \Theta |X_{far,i}|  \\
&+\Theta|\out(c_i
)|  - \tau(\out(c_i), c_i).
\end{align*}
\end{lemma}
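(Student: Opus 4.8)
The statement bounds the reassignment cost of a center $c_i$ (for $i \in H \cup L$) in terms of the $\tau$-cost of the candidate cluster $X_i$, the cost of those points with respect to the optimal solution, the far-points term, and the outlier term. The last two terms $\Theta|X_{far,i}|$ and $\Theta|\out(c_i)| - \tau(\out(c_i), c_i)$ already appear verbatim in the definition of $\ass$, so the real work is to bound the first two terms of the definition, namely the increase in $\tau$-cost on $\Xin^* \setminus X_h^*$ (resp. $\Xin^*$) caused by deleting $c_i$, by $\frac{21}{100}\tau(\inn_\Theta(X_i,C),C) + 24\phi(\inn_\Theta(X_i,C),C^*)$.

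The plan is to follow the strategy of Lemma 4 in \cite{lattanzi2019better}, adapted to the thresholded cost $\tau_\Theta$. First I would fix $i \in H \cup L$ and describe an explicit reassignment rule for the points that were served by $c_i$ once $c_i$ is removed: a point $x \in X_i$ (so $x$ is an inlier whose closest candidate center is $c_i$) gets reassigned to $c_{b^{-1}(i)}$ composed appropriately — more precisely, using the bijection $b$ one routes $x$ to the candidate center corresponding to the optimal center $c^*_{j}$ serving $x$, i.e. to $c_{b(j)}$ if $j \notin T$, and if $j \in T$ (a cheap cluster mapped to a popular center) one routes through the popular center instead. The point of the matching/bijection setup recalled in the figure is exactly that this reassignment is well-defined and never sends mass back to the deleted $c_i$. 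For each reassigned point $x$ with optimal center $c^*_j$ and new candidate center $c$, I would bound $\tau_\Theta(x,c) - \tau_\Theta(x,c_i)$ using the thresholded triangle inequality (\cref{fact:generalised_triangle_inequality} combined with \cref{fact:thresholded_triangle_inequality}, which lets us use it for $\tau_\Theta$) with a carefully chosen parameter $\eps$ — this is where the constants $\frac{21}{100}$ and $24$ come from. The key inequality is of the shape $\tau_\Theta(x,c) \le (1+\eps')\tau_\Theta(x,c_i) + (\text{const})\,\tau_\Theta(x, c^*_j) + (\text{cross terms involving } d(c_i, c^*_j))$, and then one uses that $c_{f(j)}$ is the closest candidate center to $c^*_j$, so $d(c^*_j, c) $ is controlled by $d(c^*_j, c_i)$, which in turn (since $x \in X_i$ means $c_i$ is $x$'s closest candidate) is controlled by $d(x,c_i) + d(x, c^*_j)$.

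Summing the per-point bounds over $x \in X_i$ gives the $\tau(\inn_\Theta(X_i,C),C)$ term and the $\phi(\inn_\Theta(X_i,C),C^*)$ term; note we may replace $\tau_\Theta(x,c^*_j)$ by $\phi(x,c^*_j)$ for free since $\tau_\Theta \le \phi$, and we restrict to $\inn_\Theta(X_i,C)$ because points of $X_i$ already paying exactly $\Theta$ contribute nothing to the cost increase (their cost is already capped, so removing $c_i$ cannot raise it). I would need to be careful that for $i \in H$ the points of $X_h^*$ are explicitly excluded from the reassignment cost, so the routing only has to handle $X_i \setminus X_h^*$ in that case, and for the lonely case $i \in L$ there is no optimal cluster to exclude. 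The main obstacle I anticipate is the bookkeeping around the bijection $b$ and the cheap-cluster set $T$: one must verify that every point in $X_i$ can indeed be rerouted to a center other than $c_i$ with a bounded cost increase, handling separately the case where its optimal center maps (via $f$) to a popular candidate center versus a matched or lonely one, and chasing the constants through the chosen value of $\eps$ in the generalized triangle inequality so that the final coefficients are exactly $\frac{21}{100}$ and $24$. The far-points and outlier terms require essentially no work beyond observing they are carried over unchanged from the definition of $\ass$.
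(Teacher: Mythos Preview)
Your overall plan is sound and matches the paper's approach in spirit: the outlier and far-point terms carry over verbatim from the definition of $\ass$, points already paying $\Theta$ contribute nothing to the increase, and the core work is bounding the $\tau$-increase on $\inn_\Theta(X_i\setminus X_i^*,C)$ (resp.\ $\inn_\Theta(X_i,C)$) via the thresholded triangle inequality with parameter $\eps=1/10$ to produce the constants $21/100$ and $24$.

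There is, however, a genuine confusion in your reassignment rule. You propose to route $x\in X_i\cap X_j^*$ to $c_{b(j)}$ when $j\notin T$ and to the popular center when $j\in T$. This is wrong: the bijection $b$ is an \emph{arbitrary} bijection $[k]\setminus T\to H\cup L$ (subject only to $b(i)=i$ for $i\in H$), used later to pair optimal clusters with candidate centers for the swap argument. It carries no geometric information. In particular, when $j\notin T$ but $f(j)$ is a popular center (i.e.\ $j$ is a non-cheapest cluster mapped to a popular center), $c_{b(j)}$ is some unrelated lonely center and the cost bound would fail. The correct and much simpler rule, which the paper uses, is to route $x\in X_j^*$ to $c_{f(j)}$ in \emph{all} cases. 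The only fact one needs is that $f(j)\neq i$, which is immediate: if $i\in H$ then $f^{-1}(i)=\{i\}$ and we only consider $j\neq i$; if $i\in L$ then $f^{-1}(i)=\emptyset$. No case split on popular/matched/lonely, no mention of $b$ or $T$, is required.

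Concretely, the paper's argument is a clean two-step application of \cref{fact:generalised_triangle_inequality} (for $\tau$): for $p\in\inn_\Theta(X_i\setminus X_i^*,C)\cap X_j^*$, set $q_p:=c_j^*$; since the closest candidate center to $q_p$ is $c_{f(j)}\neq c_i$, one has $\tau(q_p,C\setminus\{c_i\})=\tau(q_p,C)$, and then two uses of \cref{fact:generalised_triangle_inequality} with $\eps=1/10$ give $\tau(p,C\setminus\{c_i\})-\tau(p,C)\le\frac{21}{100}\tau(p,C)+24\tau(p,C^*)$. Summing and bounding $\tau(p,C^*)\le\phi(p,C^*)$ gives the lemma. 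Your ``cross terms involving $d(c_i,c_j^*)$'' and the anticipated bookkeeping around $b$ and $T$ are unnecessary detours.
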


\begin{proof}
We only present the case $i \in H$, as the case $i \in L$ is similar and even easier. We observe that $\ass(X,C,c_i) = \tau(X_i \setminus X^*_i,C \setminus \{c_i\}) - \tau(X_i \setminus X_i^*,C) + \Theta |X_{far,i}| + \Theta|\out(c_i
)|  - \tau(\out(c_i), c_i)$, since vertices in clusters other than $X_i$ will still be assigned to their current center. Hence, it suffices to show that $$\cost{X_i \setminus X^*_i} {C \setminus \{c_i\}} - \tau(X_i \setminus X_i^*,C) \leq \frac{21}{100} \tau(\inn_\Theta(X_i,C),C) + 24 \tau(\inn_\Theta(X_i,C), C^*).$$ 
As

\[\cost{\outt_\Theta(X_i \setminus X^*_i,C)} {C \setminus \{c_i\}} - \cost{\outt_\Theta(X_i \setminus X_i^*,C)}{C} = 0,\]

this is equivalent to showing that

\begin{align*}
    &\cost{\inn_\Theta(X_i \setminus X^*_i,C)} {C \setminus \{c_i\}} - \cost{\inn_\Theta(X_i \setminus X_i^*,C)}{C} \\
    &\leq \frac{21}{100} \tau(\inn_\Theta(X_i,C),C) + 24 \tau(\inn_\Theta(X_i,C), C^*).
\end{align*}

The cost contribution of each point in $\inn_\Theta(X_i \setminus X_i^*,C)$ with respect to $C$ is equal to the squared distance to the closest center in $C$ and hence the analysis of Lattanzi and Sohler more or less directly applies. 
For the sake of completeness, we still present their analysis, which makes use of the following lemma.

To upper bound the cost increase one incurs by removing $c_i$, we assign each point $p \in \inn_\Theta(X_i \setminus X_i^*,C) \cap X^*_j$, $j \neq i$, to the candidate center that captures the cluster $X^*_j$. By computing an upper bound for the clustering cost with respect to that assignment, we directly get an upper bound for $\cost{\inn_\Theta(X_i \setminus X_i^*,C)}{C \setminus \{c_i\}}$. We upper bound the cost increase in two steps. First, we move each point $p \in \inn_\Theta(X_i \setminus X_i^*,C) \cap X^*_j$, $j \neq i$, to the cluster center $c^*_j$ of $X^*_j$ and we call the resulting multiset $Q_i$. Let $q_p$ denote the point in $Q_i$ corresponding to $p$. Among the centers in $C$, the closest center to $q_p$ is now the center that captured the optimal center $X^*_j$. Thus, it is not equal to $c_i$, as $c_i$ captures exactly one optimal center, namely $X^*_i$. Thus, we obtain

\begin{align}
\label{eq:vladimir_misik}
\cost{q_p}{C \setminus \{c_i\}} - \cost{p}{C}  &= \cost{q_p}{C} - \cost{p}{C} && \text{$q_p$ not assigned to $c_i$}\\
&\leq \frac{1}{10} \cost{p}{C} + 11 \cost{p}{q_p} && (\text{\cref{fact:generalised_triangle_inequality,fact:thresholded_triangle_inequality} with $\eps = 1/10$})\nonumber\\
&= \frac{1}{10} \cost{p}{C} + 11 \cost{p}{C^*} . \nonumber
 \end{align}
 This implies
 
 \begin{align}
 \label{eq:radim_hladik}
     \tau(q_p,C \setminus \{c_i\}) = \tau(q_p,C)  \leq \frac{11}{10} \tau(p,C) + 11 \tau(p,C^*).
 \end{align}

 Next, we upper bound the cost increase by moving each point $q_p \in Q_i$ back to its original position $p$. Formally, we bound
 
 \begin{align}
 \label{eq:michal_prokop}
     &\cost{p}{C \setminus \{c_i\}} - \cost{q_p}{C \setminus \{c_i\}}  \\
     &\leq \frac{1}{10}\cost{q_p}{C \setminus \{c_i\}} + 11 \cdot \cost{p}{q_p} && \text{\cref{fact:generalised_triangle_inequality,fact:thresholded_triangle_inequality}}\nonumber\\
     &\leq \frac{1}{10}\left(\frac{11}{10}\cost{p}{C} + 11 \cdot \cost{p}{C^*}\right) + 11 \cdot \cost{p}{C^*} && \text{\cref{eq:radim_hladik}}\nonumber\\
     &= \frac{11}{100} \cost{p}{C} + 13 \cdot \cost{p}{C^*}\nonumber
 \end{align}

Combining the two inequalities \cref{eq:vladimir_misik,eq:michal_prokop} yields

\begin{align}
\label{eq:john_lennon}
&\cost{p}{C \setminus \{c_i\}} - \cost{p}{C} \\
&= \left(   \cost{p}{C \setminus \{c_i\}} - \cost{q_p}{C \setminus \{c_i\}}\right) + \left( \cost{q_p}{C \setminus \{c_i\}} - \cost{p}{C}\right) \nonumber\\
&\leq  \frac{11}{100} \cost{p}{C} + 13 \cdot \cost{p}{C^*} +  \frac{1}{10} \cost{p}{C} + 11 \cdot \cost{p}{C^*} &&\text{\cref{eq:vladimir_misik,eq:michal_prokop}} \nonumber\\
&\leq \frac{21}{100}\cost{p}{C} + 24 \cdot \cost{p}{C^*}.\nonumber 
\end{align}

Summing up the inequality \cref{eq:john_lennon} for all points $p \in  \inn_\Theta(X_i \setminus X_i^*,C)$ then yields

\begin{align*}
    &\cost{ \inn_\Theta(X_i \setminus X_i^*,C)}{C \setminus \{c_i\}} - \cost{ \inn_\Theta(X_i \setminus X_i^*,C)}{C} \\
    &\leq \frac{21}{100}\cost{ \inn_\Theta(X_i \setminus X_i^*,C)}{C} + 24 \cost{ \inn_\Theta(X_i \setminus X_i^*,C)}{C^*}\\
    &\leq \frac{21}{100}\cost{ \inn_\Theta(X_i \setminus X_i^*,C)}{C} +  24 \phi(\inn_\Theta(X_i, C), C^*)\\
\end{align*}

and therefore

\begin{align*}
\ass(X,C,c_i) 
&\leq \frac{21}{100} \tau(\inn_\Theta(X_i,C),C) + 24 \phi(\inn_\Theta(X_i, C), C^*) \\
&+ \Theta|X_{far,i}| + \Theta|\out(c_i
)|  - \tau(\out(c_i), c_i).
\end{align*}

as needed.
\end{proof}

\subsection{Decreasing Cost to $O(1/\eps)$}
\label{sec:appendix_localsearch_easy}

In this subsection, we will prove that while the cost of our solution is $\Omega(\OPT/\eps)$, one local search step decreases the cost by a factor of $1 - \Theta(1/k)$ with constant probability (\cref{lemma:easyonestep}). This means that after $\tilde{O}(k)$ steps, we have $\tau_\Theta(X, C) = O(\OPT/\eps)$ with constant probability (\cref{lem:phase1}). This, by itself, is the easy part and our proof is essentially the analysis of Lattanzi and Sohler \cite{lattanzi2019better}. The hard part is then the generalization to \cref{lemma:onestephard} in \cref{sec:appendix_localsearch_hard}, which, roughly speaking, shows that while we have $(1+\eps)z$ points with a big cost, we still do substantial progress in each step.  

For the next definition, recall that $b$ is the bijection between optimal clusters and candidate cluster centers that we defined earlier.

\begin{definition}
\label{def:awesome}
A cluster index $i \in [k] \setminus T$ is called \emph{awesome}, if 
\[
\cost{X_i^*}{C} - \ass(X, C, c_{b(i)}) - 9 \cost{X_i^*}{ c_i^*} > \frac{\cst(X, C)}{100 k}.\]
\end{definition}

Intuitively, awesome clusters are such that sampling a point $c$ from them leads to high improvement in cost if the sampled center is sampled close to the mean $c_i^*$. 
In particular, if $\tau(X_i^*, c) \le 9\tau(X_i^*, c^*_i)$, swapping $c_{b(i)}$ with $c$ leads to improvement of $\tau(X,C)/(100k)$ in cost. 
The next lemma shows that we sample from an awesome cluster with constant probability. 

\begin{lemma}
\label{lem:sample_from_awesome}
Let $C$ denote the current set of candidate centers. Let $\eps \in (0,1]$ be arbitrary and $\beta := \frac{300}{\eps}$.
If $\cst(X, C) \ge 100\beta \opt$ and $\Theta \le 2\frac{\beta\opt}{z} $, then
\[
\sum_{\text{$i \in [k] \setminus T$, $i$ is awesome} } \cost{X_i^*}{C} \ge \frac{1}{10} \cst(X, C). 
\]
\end{lemma}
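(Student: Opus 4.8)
The plan is a complement-counting argument: I will upper-bound the total $\tau_\Theta$-cost (with respect to $C$) that is \emph{not} carried by awesome clusters of $[k]\setminus T$ and show it is at most roughly $\frac{9}{10}\cst(X,C)$. The non-awesome mass splits into three parts: (i) the points outside the candidate-cluster decomposition, namely $X^*_{far}\cup\Xout^*$; (ii) the cheap clusters $T$; and (iii) the non-awesome clusters among $[k]\setminus T$.

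For (i), recall $X = \Xin^*\sqcup X^*_{far}\sqcup\Xout^*$ and $\Xin^*=\bigsqcup_i X_i^*$, so $\cst(X,C)=\sum_i\cst(X_i^*,C)+\cst(X^*_{far},C)+\cst(\Xout^*,C)$. Every point of $X^*_{far}$ has squared distance at least $\Theta$ to $C^*$, hence $|X^*_{far}|\le\OPT/\Theta$ and $\cst(X^*_{far},C)\le|X^*_{far}|\Theta\le\OPT$; also $\cst(\Xout^*,C)\le z\Theta\le 2\beta\OPT$ by the hypothesis $\Theta\le 2\beta\OPT/z$. Using $\cst(X,C)\ge 100\beta\OPT$ and $\beta\ge 300$, these two terms are at most $\frac{3}{100}\cst(X,C)$, so $\sum_i\cst(X_i^*,C)\ge\frac{97}{100}\cst(X,C)$. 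For (ii), every popular candidate center has at least two optimal clusters mapped to it by $f$, and $T$ keeps only the cheapest of each such group, so $\sum_{i\in T}\cst(X_i^*,C)\le\frac12\sum_i\cst(X_i^*,C)$; therefore $\sum_{i\in[k]\setminus T}\cst(X_i^*,C)\ge\frac{97}{200}\cst(X,C)$.

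For (iii), the definition of awesome gives, for each non-awesome $i\in[k]\setminus T$, the bound $\cost{X_i^*}{C}\le\ass(X,C,c_{b(i)})+9\cost{X_i^*}{c_i^*}+\frac{\cst(X,C)}{100k}$. Summing over the at most $k$ such indices: the $\frac{\cst(X,C)}{100k}$-terms contribute at most $\frac{1}{100}\cst(X,C)$; the terms $9\cost{X_i^*}{c_i^*}\le 9\phi(X_i^*,c_i^*)$ sum to at most $9\OPT\le\frac{1}{100}\cst(X,C)$; and, since $b$ restricts to an injection of the non-awesome indices into $H\cup L$ and every reassignment cost is nonnegative (by \cref{fact:reassign}, as removing a center cannot lower the $\tau$-cost), $\sum_{i\text{ non-awesome}}\ass(X,C,c_{b(i)})\le\sum_{j\in H\cup L}\ass(X,C,c_j)$. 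I then plug in \cref{lem:reassignmentcost} and sum its four term-types over $H\cup L$ using that $\{X_j\}_j$, $\{X_{far,j}\}_j$, and $\{\out(c_j)\}_{j\in H\cup L}$ are (sub)partitions of $\Xin^*$, $X^*_{far}$, and $\Xout^*$: this yields $\sum_{j\in H\cup L}\ass(X,C,c_j)\le\frac{21}{100}\cst(X,C)+24\OPT+\OPT+z\Theta\le\frac{24}{100}\cst(X,C)$, again using $\cst(X,C)\ge 100\beta\OPT$ and $\beta\ge 300$. Hence part (iii) is at most $\frac{26}{100}\cst(X,C)$.

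Combining, the awesome clusters in $[k]\setminus T$ carry at least $\frac{97}{200}\cst(X,C)-\frac{26}{100}\cst(X,C)=\frac{9}{40}\cst(X,C)>\frac{1}{10}\cst(X,C)$, which is the claim. The one step that needs genuine care is the telescoping of $\sum_{j\in H\cup L}\ass(X,C,c_j)$ via \cref{lem:reassignmentcost}: one must verify that each of the four contributions (the $\tau(\inn_\Theta(X_j,C),C)$ terms, the $\phi(\inn_\Theta(X_j,C),C^*)$ terms, the $\Theta|X_{far,j}|$ terms, and the $\Theta|\out(c_j)|$ terms) sums to something bounded by $\cst(X,C)$ or by $O(\beta)\cdot\OPT$, that the negative $-\tau(\out(c_j),c_j)$ terms may simply be dropped, and that all additive $\OPT$-terms are absorbed into a small fraction of $\cst(X,C)$ using $\cst(X,C)\ge 100\beta\OPT$ with $\beta=300/\eps\ge 300$; everything else is routine bookkeeping with the partitions fixed before the lemma.
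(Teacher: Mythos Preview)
Your proof is correct and follows essentially the same complement-counting argument as the paper: bound the contribution of $X^*_{far}\cup\Xout^*$, use the ``cheap clusters get at most half the mass'' observation for $T$, and sum \cref{lem:reassignmentcost} over $H\cup L$ to control the non-awesome part of $[k]\setminus T$. The only cosmetic differences are that you organize the three pieces explicitly and use the slightly sharper bound $\sum_{i\in T}\cst(X_i^*,C)\le\frac12\sum_i\cst(X_i^*,C)$ (the paper uses $\le\frac12\cst(X,C)$), which is why your final constant $9/40$ is a bit better than the paper's $1/10$.
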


\begin{proof}
We bound the cost of clusters that are not awesome. 

\begin{align*}
    &\sum_{\text{$i \in [k] \setminus T$, $i$ is not awesome} } \cost{X_i^*}{C}\\
     &\le  \sum_{i \in [k] \setminus T} \ass(X, C, c_{b(i)})  + 9\opt + \frac{\tau(X,C)}{100} &&\text{\cref{def:awesome}} \\
    &\le \sum_{i \in [k] \setminus T} \Big( \frac{21}{100} \cst(\inn_\Theta(X_{b(i)},C), C) + 24 \phi(\inn_\Theta(X_{b(i)}, C) ,C^*) + \Theta |X_{far,b(i)}|  && \text{\cref{lem:reassignmentcost}} \\
    & +\Theta \cdot |\out(c_{b(i)})|- \tau(\out(c_{b(i)}),c_{b(i)})\Big) + 9\opt + \frac{\tau(X,C)}{100} \\
    &\leq \sum_{i \in H \cup L} \Big( \frac{21}{100} \cst(\inn_\Theta(X_{i},C), C) + 24 \phi(\inn_\Theta(X_i), C) ,C^*) + \Theta |X_{far,i}| \\ & + \Theta \cdot |\out(c_{i})|- \tau(\out(c_{i}),c_{i})\Big) + 9\opt + \frac{ \tau(X,C)}{100} && \text{$b([k]\setminus T) = H\cup L$} \\
    &\le \frac{21}{100} \tau(X,C) + 24 \OPT + \opt + z\Theta +  9 \opt  + \frac{\tau(X,C)}{100} && \Theta|X_{far,i}| \leq \OPT, X_i \subseteq X_{in}^*\\
    &\leq \frac{22}{100} \tau(X,C) + (2\beta + 34)\opt && \Theta \leq 2\frac{\beta \OPT}{z} \\
    &\leq \frac{30}{100} \tau(X,C). && \beta \geq 300, \tau(X,C) \geq 100\beta \OPT
\end{align*}

Observe that from the definition of cheap clusters it follows that 
\begin{align}
\label{eq:paul_mccarthy}
    \sum_{i \in T} \cst(X_i^*,C) \le \frac{1}{2}\tau(X,C)
\end{align}
because every cheap cluster can be matched with a different cluster with at least as big cost that is also matched to the same popular cluster center. 

Finally, we bound
\begin{align*}
    &\sum_{\text{$i \in [k] \setminus T$, $i$ is awesome} } \cost{X_i^*}{C} \\
    &= \cst(\Xin^*, C) - \sum_{\substack{\text{$i \in [k] \setminus T$},\\ \text{$i$ is not awesome}} } \cost{X_i^*}{C} - \sum_{i \in T} \cst(X_i^*,C) \\
    &\geq \tau(X,C) - \tau(\Xout^*,C) - \tau(X_{far}^*,C) \\& - \frac{30}{100}\tau(X,C) - \frac{1}{2}\tau(X,C) && \text{\cref{eq:paul_mccarthy}}  \\
    &\geq \frac{20}{100}\tau(X,C) - z\Theta - \opt && \tau(\Xout^*,C) \le z\Theta, \tau(X_{far}^*,C) \le \OPT \\
    &\geq \frac{1}{10}\tau(X,C). && \cst(X, C) \ge 100\beta \opt, \Theta \le 2\frac{\beta\opt}{z}
\end{align*}


\end{proof}

\begin{lemma}
\label{lemma:easyonestep}

Let $C$ denote the current set of candidate centers. Let $\eps \in (0,1]$ be arbitrary and $\beta := \frac{300}{\eps}$. Suppose that $\cost{X}{C} \ge100\beta \cdot \OPT$ and $\Theta \leq 2 \frac{\beta \opt}{z}$. Then, with probability at least $1/100$, one local search step of \cref{alg:noisylocalsearch} results in a new set of candidate centers $C'$ with $\cst(X, C') \le (1 - 1/(100k)) \cst(X, C)$. 
\end{lemma}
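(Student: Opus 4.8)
The plan is to combine \cref{lem:sample_from_awesome} with the definition of awesome clusters (\cref{def:awesome}) and a simple averaging argument to show that one local search step improves the cost by the required multiplicative factor with constant probability. First I would condition on the event that the point $c$ sampled in Line~1 of \cref{alg:localsearch} is sampled from an awesome cluster $X_i^*$; by \cref{lem:sample_from_awesome} this happens with probability at least $\frac{1}{10}$ (the hypotheses $\cst(X,C)\ge 100\beta\opt$ and $\Theta\le 2\beta\opt/z$ are exactly the ones we assume). Conditioned on sampling from the awesome cluster $X_i^*$, I would further condition on the event that $c$ lands ``close to the optimal center'', meaning $\tau_\Theta(X_i^*, c)\le 9\,\tau_\Theta(X_i^*, c_i^*)$. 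By \cref{lem:8apx} (or \cref{cor:10apx}, applied within the cluster $X_i^*$ with the current center set $C$), sampling $c\in X_i^*$ proportional to $\tau_\Theta(\cdot, C)$ yields $\tau_\Theta(X_i^*, C\cup\{c\})\le 9\,\tau_\Theta(X_i^*, c_i^*)$ with probability at least, say, $\frac{1}{9}$ (tuning the constant $9$ vs.\ $10$ in \cref{cor:10apx}; I would just use the Markov bound with the right threshold so the surviving probability is a fixed positive constant). Hence the joint event has probability at least $\frac{1}{10}\cdot\frac{1}{9} > \frac{1}{100}$.

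Next I would show that on this joint event, the specific swap ``remove $c_{b(i)}$, add $c$'' already decreases the cost by at least $\cst(X,C)/(100k)$, and therefore the \emph{best} swap performed in Line~2 of \cref{alg:localsearch} does at least as well. The point is to bound $\tau_\Theta(X, C\setminus\{c_{b(i)}\}\cup\{c\})$ from above. Split $X$ according to whether a point belongs to $X_i^*$ or not. For points outside $X_i^*$: removing $c_{b(i)}$ raises their cost by at most $\ass(X,C,c_{b(i)})$ by \cref{fact:reassign} (adding $c$ only helps), where we use that $b(i)\in H\cup L$ so the reassignment-cost machinery applies, and that for $i\in H$ the only optimal cluster mapped to $c_{b(i)}=c_i$ is $X_i^*$ itself, which we excluded. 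For points in $X_i^*$: after adding $c$, their total cost is at most $\tau_\Theta(X_i^*, C\cup\{c\})\le 9\,\tau_\Theta(X_i^*, c_i^*)\le 9\,\phi(X_i^*,c_i^*)$ by the closeness event, whereas before it was $\tau_\Theta(X_i^*, C)=\cost{X_i^*}{C}$. Putting these together,
\begin{align*}
\cst(X, C) - \cst(X, C\setminus\{c_{b(i)}\}\cup\{c\})
&\ge \cost{X_i^*}{C} - 9\,\cost{X_i^*}{c_i^*} - \ass(X,C,c_{b(i)}) \\
&> \frac{\cst(X,C)}{100k},
\end{align*}
where the last inequality is precisely the definition of $i$ being awesome (\cref{def:awesome}). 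Since Line~2 selects the swap minimizing the resulting cost over all of $C\cup\{c\}$, the returned $C'$ satisfies $\cst(X,C')\le \cst(X, C\setminus\{c_{b(i)}\}\cup\{c\})\le (1-1/(100k))\cst(X,C)$, completing the argument.

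**Main obstacle.** The delicate part is the bookkeeping for the points outside $X_i^*$ and making sure \cref{fact:reassign} is invoked with the right set: we need that the swap ``$-c_{b(i)}, +c$'' never does worse than ``$-c_{b(i)}$'' alone on those points (true, since $c$ is only an extra option), and that the one optimal cluster we've singled out and treated separately ($X_i^*$) is exactly the one whose cost contribution $\ass$ omits — which holds because $b(i)=i$ for $i\in H$ (and for $i\in L$, $\ass$ omits nothing and $X_i^*$ was an ignored-free cluster mapped to a lonely center, so the same split goes through even more easily). One also has to be slightly careful that the event ``sampled from an awesome cluster'' and the event ``landed within factor $9$ of $c_i^*$'' can be combined multiplicatively: conditioning on the cluster only reweights within that cluster, and \cref{lem:8apx}/\cref{cor:10apx} is a statement precisely about that conditional distribution, so independence of the constants is not actually needed — the second constant is a conditional probability. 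I expect no serious difficulty beyond carefully matching these constants ($9$, $10$, $\frac{1}{100}$) so the final probability bound is at least $\frac{1}{100}$ and the improvement factor is exactly $1-1/(100k)$.
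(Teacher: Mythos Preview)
Your proposal is correct and follows essentially the same approach as the paper's own proof: invoke \cref{lem:sample_from_awesome} for the $\tfrac{1}{10}$ bound, apply Markov to \cref{lem:8apx} for the conditional $\tfrac{1}{9}$ bound, then analyze the specific swap $c_{b(i)}\leftrightarrow c$ by splitting $X$ into $X_i^*$ and its complement, using \cref{fact:reassign} on the complement and the closeness event on $X_i^*$, and finish with \cref{def:awesome}. The case split $b(i)\in H$ versus $b(i)\in L$ and the observation that the best swap in Line~2 can only do better are handled the same way.
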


\begin{proof}
Let $c'$ denote the point sampled by \nlsp. The probability that $c' \in X^*_i$ for some $i$ with $i$ being an awesome cluster index is at least
$$\frac{(1/10) \cost{X}{C}}{\cost{X}{C}} = \frac{1}{10}$$
according to \cref{lem:sample_from_awesome}.
Let $i$ denote an arbitrary awesome cluster index. \cref{lem:8apx} implies that
$$\E[\cost{X_i^*}{c'}|c' \in X_i^*] \leq 8 \cdot \cost{X_i^*}{c_i^*}.$$
Hence, by a simple application of Markov's inequality, we obtain
$$\P[\cost{X_i^*}{c'} \leq 9 \cdot \cost{X_i^*}{c_i^*}|c' \in X_i^*]  \geq \frac{1}{9}.$$
Putting things together, the probability that the sampled point $c'$ is contained in $X_i^*$ for some awesome cluster index $i$ and it additionally holds that 
\begin{align}
\label{eq:queen}
    \cost{X_i^*}{c'} \leq 9 \cdot \cost{X_i^*}{c_i^*}
\end{align} 
is at least
$$\frac{1}{10} \cdot \frac{1}{9} > \frac{1}{100}.$$
Consider the case that  $i \in H$. We can upper bound $\cost{X}{C'}$ as follows:

\begin{align*}
\cost{X}{C'} &\leq \cost{X}{(C \setminus \{c_{b(i)}\}) \cup \{c'\}} && \text{swap $c_{b(i)}$ and $c'$}\\
&= \cost{X}{C} + \left( \cost{X}{(C \setminus \{c_{b(i)}\}) \cup \{c'\}}- \cost{X}{C}\right) \\
&=  \cost{X}{C} +  \cost{X \setminus  X_i^*}{(C \setminus \{c_{b(i)}\}) \cup \{c'\}} - \cost{X \setminus X_i^*}{C} \\
& + \cost{X_i^*}{(C \setminus \{c_{b(i)}\}) \cup \{c'\}}- \cost{X_i^*}{C} \\
&\le  \cost{X}{C} + \ass(X,C,c_i) +  \tau(X_i^*,c') - \tau(X_i^*,C) && \text{\cref{fact:reassign}}\\
&\le  \cost{X}{C} + \ass(X,C,c_i) +  9\tau(X_i^*,c_i^*) - \tau(X_i^*,C) && \text{\cref{eq:queen}} \\
&\leq \tau(X,C) - \frac{\tau(X,C)}{100k} && \text{\cref{def:awesome}}
\end{align*}

The case $i \notin H$ is very similar. Thus, we obtain that with probability at least $1/100$ we have

$$\cost{X}{C'} \leq \left(1 - 1/(100k)\right)\cost{X}{C},$$

as desired.
\end{proof}

\begin{lemma}
\label{lem:phase1}
Let $C$ denote the current set of candidate centers. Let $\eps \in (0,1]$ be arbitrary and $\beta := \frac{300}{\eps}$. Let $\alpha > 0$ be arbitrary such that $\cost{X}{C} \leq \alpha \cdot \beta \cdot \opt $. Furthermore, assume that $\Theta \leq 2 \frac{\beta \opt}{z}$. Let $C_0 := C$ and for $t \geq 0$, let $C_{t+1}$ denote the set of centers obtained by running \cref{alg:noisylocalsearch} with input centers $C_t$. Then, with positive constant probability $p >0$, we have $\cost{X}{C_T} \leq 100 \beta \opt$ with $T := \lceil 1000 \cdot 100k \log \alpha \rceil = O(k \log (\alpha))$.
\end{lemma}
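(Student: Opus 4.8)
The plan is to iterate Lemma~\ref{lemma:easyonestep} a sufficient number of times and control the failure probability with a concentration argument, exactly in the spirit of the original Lattanzi--Sohler analysis. The key point is that each local search step either already sits below the target cost $100\beta\OPT$, or, by Lemma~\ref{lemma:easyonestep}, it multiplies the current cost by a factor of $(1-1/(100k))$ with probability at least $1/100$. So I would define a step to be \emph{successful} if either $\cost{X}{C_t}\le 100\beta\OPT$ already holds, or the swap in that step achieves $\cost{X}{C_{t+1}}\le (1-1/(100k))\cost{X}{C_t}$. Since the cost is monotonically non-increasing along the run of \cref{alg:noisylocalsearch} (each step only swaps in a new center if it does not increase the $\tau$-cost; and in any case the argmin in \cref{alg:localsearch} guarantees the cost never goes up), once we drop below $100\beta\OPT$ we stay below it forever, so it suffices to accumulate enough cost-decreasing successful steps before that happens.

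First I would observe that if fewer than $100k\log\alpha$ cost-decreasing steps have occurred and we have not yet reached the target, then the cost has been multiplied by $(1-1/(100k))$ at least... wait, I need enough of them: starting from $\cost{X}{C_0}\le \alpha\beta\OPT$, after $s$ cost-decreasing steps the cost is at most $(1-1/(100k))^{s}\alpha\beta\OPT\le e^{-s/(100k)}\alpha\beta\OPT$, which is $\le 100\beta\OPT$ as soon as $s\ge 100k\log(\alpha/100)$, and in particular as soon as $s\ge 100k\log\alpha$. So it suffices that at least $s_0:=\lceil 100k\log\alpha\rceil$ of the $T=\lceil 1000\cdot 100k\log\alpha\rceil$ steps are successful. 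Each step is successful with probability at least $1/100$ independently of the past (conditioning on the history, Lemma~\ref{lemma:easyonestep} applies to the current $C_t$ whenever $\cost{X}{C_t}\ge 100\beta\OPT$, and if $\cost{X}{C_t}<100\beta\OPT$ the step is automatically successful). Hence the number of successful steps stochastically dominates a $\mathrm{Binomial}(T,1/100)$ random variable with mean $T/100\ge 1000 k\log\alpha\ge 10\, s_0$, so by the lower-tail Chernoff bound of \cref{thm:chernoff} the probability of getting fewer than $s_0$ successes is at most $e^{-\Omega(k\log\alpha)}\le e^{-\Omega(1)}$, which is bounded away from $1$ by an absolute constant. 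Therefore with positive constant probability at least $s_0$ steps are successful, which as computed above forces $\cost{X}{C_T}\le 100\beta\OPT$.

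The only mildly delicate point — and the place I would be most careful — is making the conditioning in the Chernoff argument fully rigorous: the "success" events are not literally independent, but one sets up a supermartingale / coupling argument showing that at each step, \emph{conditioned on the entire history}, the indicator of success dominates an independent $\mathrm{Bernoulli}(1/100)$, which is exactly what Lemma~\ref{lemma:easyonestep} gives (its hypothesis $\cost{X}{C}\ge 100\beta\OPT$ and $\Theta\le 2\beta\OPT/z$ are either satisfied by the current $C_t$, or we are already done). This standard domination lets us apply \cref{thm:chernoff} to the dominating sum of independent Bernoullis. Everything else is a routine computation with the $(1-1/(100k))^s\le e^{-s/(100k)}$ inequality and choosing the constant $1000$ in $T$ large enough to beat the $s_0$ threshold with room to spare for the concentration bound.
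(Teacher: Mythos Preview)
Your proposal is correct and takes essentially the same approach as the paper: both define per-step success indicators (success meaning the cost is already $\le 100\beta\OPT$ or drops by a factor $1-1/(100k)$), use \cref{lemma:easyonestep} to lower-bound the conditional success probability by $1/100$, and then argue that with constant probability at least $100k\log\alpha$ successes occur among the $T$ steps. The only minor difference is the concentration tool---the paper applies Markov's inequality to the number of \emph{failures} (simpler, and immediately yields a positive constant probability), whereas you use stochastic domination by a Binomial plus Chernoff (also correct, and in fact gives an exponentially small failure probability); your observation that the $\tau$-cost is monotone along the run of \cref{alg:localsearch} is exactly what makes the telescoping work in both arguments.
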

\begin{proof}
For $t \geq 0$, let $X_t$ denote the indicator variable for the event that $\cost{X}{C_{t+1}} > \max((1-1/(100k))\cost{X}{C_{t}}, 100\beta \opt)$. According to \cref{lemma:easyonestep}, $\E[X_t] \leq \frac{99}{100}$. Hence, with $X := \sum_{t=0}^{T-1} X_t$, we have $\E[X] \leq \frac{99}{100}T$. Thus, Markov's Inequality implies that there exists a constant $p > 0$ such that $Pr[X \leq \frac{999}{1000}T] \geq p$. In that case, there are at least $100k\log \alpha$ iterations $t$ for which $X_t = 0$ and therefore

\begin{align*}
\cost{X}{C_T} &\leq \max(\left(1 - 1/(100k)\right)^{100k \log \alpha}\cost{X}{C_0},100\beta\opt) \\
&\leq \max(e^{-\log \alpha}\alpha\beta\opt,100\beta\opt)  \\
&= 100\beta\opt,
\end{align*}
as desired.
\end{proof}

\subsection{Decreasing Number of Outliers to $(1+\eps)z$}

\label{sec:appendix_localsearch_hard}

We are now ready to prove \cref{lemma:onestephard}. It roughly states that with probability $\Omega(\eps)$ we sufficiently improve our solution, provided that there are at least $(1+\eps)z$ points with a squared distance of at least $10\Theta$ to the current solution. \cref{lem:phase2} then follows as a simple consequence of \cref{lemma:onestephard}. Roughly speaking, it states that starting with a set of centers $C$ with a cost of $O((1/\eps) \OPT)$, within $O(k/\eps)$ local search steps, one obtains a set of centers $C'$ such that at most $(1+\eps)z$ points have a squared distance of at least $10\Theta$ to $C'$, with positive constant probability. 

Let $g$ be the number of points in $\Xout^*$ that have a squared distance of less than $10\Theta$ to the closest center in $C$. Note that by our definition of $\out(c)$, we have
\begin{align}
    \label{eq:rammstein}
    \sum_{i \in [k]} |\out(c_i)| = g \le z.
\end{align}

\begin{definition}
\label{def:good}

A cluster index $i \in [k] \setminus T$ is called \emph{good}, if 
\[
\cost{X_i^*}{C} - \ass(X, C, c_{b(i)})  - 9 \cost{X_i^*}{c_i^*} > \frac{\tau(X,C) - z \Theta}{100k}.
\]
\end{definition}


\begin{lemma}
\label{lem:sample_from_good}
Let $C$ denote the current set of candidate centers. Let $\eps \in (0,1]$ be arbitrary and $\beta := \frac{300}{\eps}$. If for some $\eps' \ge \eps$, there are $(1+\eps')z$ different points with a squared distance of at least $10\Theta$ to the closest candidate center in $C$ and 
$\Theta \ge \frac{\beta\opt}{z} $, then
\[
\sum_{\text{$i \in [k] \setminus T$, $i$ is good} } \cost{X_i^*}{C} \ge \frac{\tau(X,C) - z \Theta}{10}.\]
\end{lemma}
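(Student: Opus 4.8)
The proof will closely parallel the proof of \cref{lem:sample_from_awesome}, with the key difference that we replace the cost $\tau(X,C)$ by the ``effective'' cost $\tau(X,C) - z\Theta$ and we exploit the hypothesis that there are $(1+\eps')z$ points at squared distance $\ge 10\Theta$ to $C$. First I would bound the total cost of clusters that are \emph{not} good. By \cref{def:good} each non-good index $i \in [k]\setminus T$ contributes at most $\ass(X,C,c_{b(i)}) + 9\cost{X_i^*}{c_i^*} + \frac{\tau(X,C) - z\Theta}{100k}$, so summing over all such $i$ and using that $b$ is a bijection onto $H \cup L$, together with \cref{lem:reassignmentcost}, gives a bound of the form
\[
\frac{21}{100}\tau(X,C) + 24\,\OPT + \Theta\sum_i |X_{far,i}| + \Theta\sum_i |\out(c_i)| - \sum_i \tau(\out(c_i),c_i) + 9\,\OPT + \frac{\tau(X,C) - z\Theta}{100}.
\]
Using $\Theta\sum_i |X_{far,i}| \le \OPT$ (since $X_{far}^* \subseteq X_{in}^*$ and each such point costs at most $\Theta$ in OPT — wait, more precisely $|X^*_{far}|\Theta \le \phi(X^*_{far},C^*) \le \OPT$) and the crucial bound $\Theta\sum_i|\out(c_i)| = g\Theta \le z\Theta$ from \cref{eq:rammstein}, and dropping the non-positive term $-\sum_i \tau(\out(c_i),c_i)$, this becomes roughly $\frac{22}{100}\tau(X,C) + 34\,\OPT + z\Theta$. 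Since $\Theta \ge \beta\OPT/z$, i.e. $\OPT \le z\Theta/\beta$ with $\beta = 300/\eps \ge 300$, the term $34\,\OPT$ is absorbed, and I would also need $\tau(X,C)$ itself to be comparable to $z\Theta$: this is where the hypothesis on $(1+\eps')z$ far points enters — those points alone contribute $\ge (1+\eps')z \cdot \min(10\Theta,\cdot)$, but more usefully each contributes at least $\Theta$ to $\tau(X,C)$ (capped cost is $\Theta$ when distance$^2 \ge 10\Theta \ge \Theta$), so $\tau(X,C) \ge (1+\eps')z\Theta$, hence $z\Theta \le \tau(X,C)/(1+\eps') \le \tau(X,C)$ and also $\tau(X,C) - z\Theta \ge \eps' z\Theta \ge \eps z\Theta$.

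Next, as in \cref{lem:sample_from_awesome}, I would record the cheap-cluster bound $\sum_{i\in T}\cost{X_i^*}{C} \le \frac12 \tau(X,C)$ from \cref{eq:paul_mccarthy}, which holds verbatim since its proof only uses the matching structure. Then I would write
\[
\sum_{\substack{i\in[k]\setminus T\\ i \text{ good}}}\cost{X_i^*}{C} = \tau(\Xin^*,C) - \sum_{\substack{i\in[k]\setminus T\\ i\text{ not good}}}\cost{X_i^*}{C} - \sum_{i\in T}\cost{X_i^*}{C},
\]
and lower bound $\tau(\Xin^*,C) = \tau(X,C) - \tau(\Xout^*,C) - \tau(X_{far}^*,C) \ge \tau(X,C) - z\Theta - \OPT$. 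Combining with the two upper bounds just derived, everything collapses to a bound of the form $c_1 \tau(X,C) - c_2 z\Theta - c_3\OPT$ for explicit constants, and I would then use $\OPT \le z\Theta/\beta$ and $z\Theta \le \tau(X,C)$ (and possibly $\tau(X,C) \ge (1+\eps')z\Theta$) to conclude the right-hand side is at least $\frac{\tau(X,C) - z\Theta}{10}$.

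The main obstacle I anticipate is getting the constants to actually work out: the proof of \cref{lem:sample_from_awesome} crucially used the hypothesis $\tau(X,C) \ge 100\beta\OPT$ to kill additive $\OPT$ terms and to pass from $\frac{20}{100}\tau(X,C) - z\Theta - \OPT$ to $\frac{1}{10}\tau(X,C)$, whereas here that hypothesis is absent and is replaced by the far-points hypothesis. So the bookkeeping must be redone with $z\Theta$ (rather than $\tau(X,C)$) as the natural scale: every additive $\OPT$ is $\le z\Theta/\beta = \eps z\Theta/300$, which is tiny compared to $z\Theta$ and hence compared to $\tau(X,C)-z\Theta$ only if $\eps' z\Theta$ dominates — so I would need to track carefully that $\tau(X,C) - z\Theta \ge \eps z\Theta$ and that the leftover $\frac{c_1'\eps z\Theta}{\beta}$-type error terms are at most, say, $\frac{1}{2}(\tau(X,C)-z\Theta)$. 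I expect this forces the threshold constant ``$10\Theta$'' and the specific choices $\beta = 300/\eps$, the $\frac{1}{100k}$ in \cref{def:good}, and the slack factor in ``$\ge \frac{\tau(X,C)-z\Theta}{10}$'' to all be tuned together, exactly as the analogous constants were tuned in \cref{lem:sample_from_awesome}; the algebra is routine but must be done with care rather than quoted.
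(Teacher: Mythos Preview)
Your plan has a genuine gap: the simplifications you make (dropping the negative term $-\sum_i \tau(\out(c_i),c_i)$, bounding $g\Theta \le z\Theta$, and using the crude cheap-cluster bound $\sum_{i\in T}\tau(X_i^*,C)\le \tfrac12\tau(X,C)$) do not close. Following your outline, the not-good sum is at most roughly $\tfrac{22}{100}\tau(X,C)+34\,\OPT+z\Theta$, the cheap sum is at most $\tfrac12\tau(X,C)$, and $\tau(X_{in}^*,C)\ge \tau(X,C)-z\Theta-\OPT$, so the good sum is at least about $0.28\,\tau(X,C)-2z\Theta-35\,\OPT$. You need this to exceed $\tfrac{1}{10}(\tau(X,C)-z\Theta)$, i.e.\ $0.18\,\tau(X,C)\ge 1.9\,z\Theta+35\,\OPT$. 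But the only control you extract from the far-points hypothesis is $\tau(X,C)\ge(1+\eps)z\Theta$, which for small $\eps$ gives $0.18\,\tau(X,C)\approx 0.18\,z\Theta \ll 1.9\,z\Theta$. No tuning of constants repairs this; the excess $z\Theta$ on the wrong side comes precisely from the terms you discarded.

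The paper's argument avoids this by keeping the $g$-dependent terms and arranging cancellations. Concretely: (i) it retains $\sum_i(\Theta|\out(c_i)|-\tau(\out(c_i),c_i)) = g\Theta - \tau(\inn_{10\Theta}(X_{out}^*,C),C)$ rather than bounding by $z\Theta$; (ii) it uses the sharper cheap bound $\sum_{i\in T}\tau(X_i^*,C)\le \tfrac12\sum_{i\ \text{not ignored}}\tau(X_i^*,C)$ and invokes \cref{lemma:ignored} to identify the right-hand side with $\tfrac12\tau(\inn_{10\Theta}(X_{in}^*,C),C)$; and (iii) it exploits the far-points hypothesis not merely as $\tau(X,C)\ge(1+\eps')z\Theta$ but via the counting identity $|\outt_{10\Theta}(X_{in}^*,C)|\ge \eps'z + g - \OPT/\Theta$, which yields $\tau(\outt_{10\Theta}(X_{in}^*,C),C)\ge (\eps'z+g)\Theta-\OPT$. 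The $g\Theta$ from (i) then cancels against the $g\Theta$ in (iii), the $\inn_{10\Theta}$ terms from (ii) combine with those from the reassignment bound, and one is left with $\tfrac{1}{10}\bigl(\tau(\inn_{10\Theta}(X,C),C)+\eps'z\Theta\bigr)=\tfrac{1}{10}(\tau(X,C)-z\Theta)$, exactly as required. So the role of the threshold $10\Theta$ and of \cref{lemma:ignored} is not cosmetic but essential to the bookkeeping.
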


\begin{proof}

Recall our assumption that exactly $(1+\eps')z$ points have a squared distance of at least $10\Theta$ to the set $C$ of candidate centers, i.e., 
\begin{align}
    \label{eq:paul_lennon}
    |\outt_{10\Theta}(X,C)| = (1+\eps')z.
\end{align}
Moreover, we defined $g$ as the number of points in $X_{out}^*$ such that they have squared distance of less than $10\Theta$ to $C$, so 
\begin{align}
\label{eq:john_mccarthy}
    |\outt_{10\Theta}(X_{out}^*,C)| = z-g. 
\end{align}
Hence
\begin{align}
    \label{eq:Bach}
    &|\outt_{10\Theta}(\Xin^*,C)|\\
    &= |\outt_{10\Theta}(X,C)| - |\outt_{10\Theta}(\Xout^*,C)| - |\outt_{10\Theta}(X^*_{far},C)|\nonumber\\
    &\geq   (1+\eps')z - (z -g) - |X^*_{far}| && \text{\cref{eq:paul_lennon,eq:john_mccarthy}} \nonumber \\
    &\geq \eps' z + g - \frac{\opt}{\Theta}. && \text{$\OPT \ge |X_{far}^*|\Theta$} \nonumber
\end{align}

We will use the following bound
\begin{align}
    \label{eq:ringo_star}
    &\frac{\tau(X,C) - z\Theta}{100} \\
    &\le \frac{\tau(\Xin^*,C) + \tau(X^*_{far},C) + \tau(\inn_{10\Theta}(\Xout^*,C),C) - g\Theta}{100}\nonumber\\
    &\le \frac{\tau(\Xin^*,C) +  \tau(\inn_{10\Theta}(\Xout^*,C),C) - g\Theta}{100} + \OPT/100. \nonumber
\end{align}

Now we bound
\begingroup
\allowdisplaybreaks
\begin{align*}
    &\sum_{\text{$i \in [k] \setminus T$, $i$ is not good} } \cost{X_i^*}{C}\\
    &\le \Big( \sum_{i \in [k] \setminus T} \ass(X, C, c_{b(i)}) \Big)  + 9\opt + \frac{\tau(X,C) - z\Theta}{100} &&\text{\cref{def:good}} \\
    &\le \sum_{i \in [k] \setminus T} \Big( \frac{21}{100} \cst(\inn_\Theta(X_{b(i)},C), C) + 24 \phi(\inn_\Theta(X_{b(i)},C), C^*) \\& + \Theta |X_{far,b(i)}|  +\Theta \cdot |\out(c_{b(i)})| \\ &  - \tau(\out(c_{b(i)}),c_{b(i)})\Big) + 9\opt + \frac{\tau(X,C) - z\Theta}{100}  &&\text{\cref{lem:reassignmentcost}} \\
    &= \sum_{i \in H \cup L} \Big( \frac{21}{100} \cst(\inn_\Theta(X_{i},C), C) + 24 \phi(\inn_\Theta(X_{i},C), C^*) \\& +   \Theta |X_{far,i}|  + \Theta \cdot |\out(c_{i})| \\& - \tau(\out(c_{i}),c_{i})\Big) + 9\opt + \frac{\tau(X,C) - z\Theta}{100}  && b([k]\setminus T) = H\cup L\\
    &\le \frac{21}{100} \tau(\inn_\Theta(\Xin^*,C), C) + 24 \opt + \opt  \\& + \sum_{i \in H \cup L} \left( \Theta \cdot |\out(c_i)| - \tau(\out(c_i),c_i)\right)  + 9 \opt \\& + \frac{\tau(X,C) - z\Theta}{100} && X_i \subseteq X_{in}^*\\
   &\le \frac{21}{100} \tau(\inn_\Theta(\Xin^*,C), C) + \sum_{i \in [k]} \left( \Theta \cdot |\out(c_i)| - \tau(\out(c_i),c_i)\right)  \\& + 34 \opt  + \frac{\tau(X,C) - z\Theta}{100} \\
      &= \frac{21}{100} \tau(\inn_\Theta(\Xin^*,C), C) + g\Theta -\tau(\inn_{10\Theta}(\Xout^*,C),C)  \\& + 34 \opt +  \frac{\tau(X,C) - z\Theta}{100} && \text{\cref{eq:rammstein}}\\
      &\le \frac{21}{100} \tau(\inn_\Theta(\Xin^*,C), C) + g\Theta  - \tau(\inn_{10\Theta}(\Xout^*,C),C) + 34 \opt + \\& \frac{\tau(\Xin^*,C) +  \tau(\inn_{10\Theta}(\Xout^*,C),C) - g\Theta}{100} + \OPT/100 && \text{\cref{eq:ringo_star}}\\
      & \le \frac{21}{100} \tau(\inn_\Theta(\Xin^*,C), C) + \Theta g  - \frac{1}{10}\tau(\inn_{10\Theta}(\Xout^*,C),C)\\&  + 35 \opt + \frac{\tau(\Xin^*,C) - g\Theta}{100} 
\end{align*}
\endgroup

Next, we make use of the inequality

\begin{align}
\label{eq:Mozart}
\sum_{i \in T} \tau(X_i^*,C) \leq \frac{1}{2}\sum_{i \in [k], \text{$i$ is not ignored}} \tau(X_i^*,C).    
\end{align}

This follows as a cheap cluster is not ignored, and each cheap cluster can be matched with a different cluster that is not ignored, has at least the same cost and is matched to the same popular cluster center.

Crucially, we now use \cref{lemma:ignored} that says that whenever a cluster $X_i^*$ is not ignored, we have $X_i^* = \inn_{10\Theta}(X_i^*, C)$. 

Putting things together, we get
\begingroup
\allowdisplaybreaks
\begin{align*}
    &\sum_{\text{$i \in [k] \setminus T$, $i$ is good} } \cost{X_i^*}{C}\\
    &= \cst(\Xin^*, C) - \sum_{\text{$i \in [k] \setminus T$, $i$ is not good} } \cost{X_i^*}{C} - \sum_{i \in T} \tau(X_i^*, C)\\
    &\ge \tau(\inn_{10 \Theta}(\Xin^*,C), C) + \tau(\outt_{10\Theta}(\Xin^*,C), C)  \\ &- \Big( \frac{21}{100} \tau(\inn_\Theta(\Xin^*,C), C) + \Theta g  - \frac{1}{10}\tau(\inn_{10\Theta}(\Xout^*,C),C) \\& + 35 \opt + \frac{\tau(\Xin^*,C) - g\Theta}{100} \Big)  - \frac{1}{2} \sum_{i \in [k], \text{ $i$ is not ignored}} \tau(X_i^*,C) && \text{\cref{eq:Mozart}}\\
     &\ge \frac{79}{100}\tau(\inn_{10 \Theta}(\Xin^*,C), C) + ((\eps'z + g)\Theta - \OPT) \\ &- \left(\Theta g   - \frac{1}{10}\tau(\inn_{10\Theta}(\Xout^*,C),C) + 35 \opt + \frac{\tau(\Xin^*,C) - g\Theta}{100} \right) - \\& \frac{1}{2}\tau(\inn_{10\Theta}(X^*_{in},C),C) && \text{ \cref{lemma:ignored}} \\
    &\ge  \frac{1}{10}\tau(\inn_{10\Theta}(\Xin^*,C), C) +  \frac{1}{10}\tau(\inn_{10\Theta}(\Xout^*,C),C) + \eps'z \Theta \\&  - 100 \opt - \frac{\tau(\outt_{10\Theta}(\Xin^*,C),C) - g\Theta}{100} \\
    &\ge \frac{1}{10}\tau(\inn_{10\Theta}(X,C), C) + \eps'z\Theta - 101 \opt - \frac{\eps'z \Theta}{100} && \text{\cref{eq:Bach}} \\
    &\geq \frac{1}{10} \left(\tau(\inn_{10\Theta}(X,C),C) + \eps'z\Theta \right) && \Theta \ge \frac{300\opt}{\eps z}\\
    &= \frac{1}{10}\left(\tau(\inn_{10\Theta}(X,C),C) + \tau(\outt_{10\Theta}(X,C),C) - z\Theta \right) && \text{\cref{eq:paul_lennon}} \\
    &= \frac{\tau(X,C) - z\Theta}{10},
\end{align*}
\endgroup
as needed. 
\end{proof}

\begin{lemma}
\label{lemma:onestephard}
Let $C$ denote the current set of candidate centers. Let $\eps \in (0,1]$ be arbitrary and $\beta := \frac{300}{\eps}$. Suppose that for out current candidate centers $C$ we have at least $(1+\eps)z$ points in $X$ that have a squared distance of at least $10\Theta$ to the closest center in $C$. Furthermore, assume that $\Theta \geq \frac{\beta \opt}{z}$. 
Then, with probability at least $\eps/200$, one step of \nlsp results in a new set of candidate centers $C'$ with $\cst(X, C') - z \Theta \le \left(1 - 1/(100k) \right) \left(\cst(X, C) - z\Theta \right)$.
\end{lemma}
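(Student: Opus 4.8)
The argument mirrors the proof of \cref{lemma:easyonestep}, but uses \cref{lem:sample_from_good} in place of \cref{lem:sample_from_awesome}, and tracks the shifted potential $\cst(X, C) - z\Theta$ instead of $\cst(X,C)$. Here is how I would carry it out.

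First, I would use the hypothesis that at least $(1+\eps)z$ points of $X$ have squared distance at least $10\Theta$ to $C$, i.e.\ $|\outt_{10\Theta}(X,C)| \ge (1+\eps)z$. Together with $\Theta \ge \frac{\beta\opt}{z}$, this is exactly the setting of \cref{lem:sample_from_good} with $\eps' = \eps$ (if there happen to be more than $(1+\eps)z$ such points, restrict attention to any $(1+\eps)z$ of them, or invoke the lemma with the appropriate larger $\eps' \ge \eps$; in either case the conclusion we need is the clean bound $\sum_{i \text{ good}} \cost{X_i^*}{C} \ge \frac{\tau(X,C) - z\Theta}{10}$). Hence, if $c'$ is the point sampled by the local-search step (line 1 of \cref{alg:localsearch}), the probability that $c' \in X_i^*$ for some good cluster index $i$ is at least $\frac{(1/10)(\tau(X,C) - z\Theta)}{\tau(X,C)}$. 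Now here I need to be a little careful: the probability should come out to $\Omega(\eps)$, and this is where $\Theta \ge \beta\opt/z = 300\opt/(\eps z)$ enters — since $|\outt_{10\Theta}(X,C)| \ge (1+\eps)z$ forces $\tau(X,C) \ge (1+\eps)z\cdot 10\Theta$... wait, that overshoots. More precisely, $\tau_\Theta(X,C) \le n\Theta$ need not hold, but we do have $\tau(X,C) - z\Theta \ge$ (cost of the extra $\eps z$ far points, each capped at $\Theta$) $= \eps z \Theta$ roughly, and also $\tau(X,C) \le \tau(\Xin^*,C) + z\Theta$; combined with $\tau(\Xin^*,C) = O(\OPT + \eps z\Theta) = O(\eps z \Theta)$ from \cref{thm:aggarwal} (which holds by the time we reach this phase, since the cost is $O(\opt/\eps)$), we get $\tau(X,C) = O(\eps z \Theta)$ and $\tau(X,C) - z\Theta = \Omega(\eps z\Theta)$, so the ratio $\frac{\tau(X,C)-z\Theta}{\tau(X,C)} = \Omega(\eps)$. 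So the sampling probability of a good cluster is $\Omega(\eps)$; chasing the constants gives at least $\eps/100$.

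Second, conditioned on $c' \in X_i^*$ for a good index $i$, I apply \cref{lem:8apx} and Markov exactly as in \cref{lemma:easyonestep} to conclude that $\tau(X_i^*, c') \le 9\tau(X_i^*, c_i^*)$ with probability at least $1/9$ (since $\E[\tau(X_i^*,c') \mid c' \in X_i^*] \le 8\tau(X_i^*,c_i^*)$ and $8/9 < 1$, Markov gives failure probability $< 8/9$). So with probability at least $\frac{\eps}{100}\cdot\frac19 \ge \frac{\eps}{1000}$ — I should double check whether the stated $\eps/200$ needs a sharper sampling constant or whether $1/9$ can be improved; in any case the claimed $\eps/200$ should be attainable by being slightly more generous in \cref{lem:sample_from_good}'s constant or by noting $\P[\ldots \le 9(\ldots)] \ge 1/9$ can be pushed to a better constant via a tighter Markov application — both $c' \in X_i^*$ for a good $i$ and the inequality $\tau(X_i^*, c') \le 9\tau(X_i^*, c_i^*)$ hold.

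Third, on that event I bound $\cst(X, C')$ by the cost of the specific swap that replaces $c_{b(i)}$ by $c'$, decomposing $X = X_i^* \sqcup (X \setminus X_i^*)$ exactly as in the displayed computation in \cref{lemma:easyonestep}: by \cref{fact:reassign} the cost increase on $X \setminus X_i^*$ is at most $\ass(X,C,c_{b(i)})$, and on $X_i^*$ it is at most $\tau(X_i^*,c') - \tau(X_i^*,C) \le 9\tau(X_i^*,c_i^*) - \tau(X_i^*,C)$. Adding these and invoking the defining inequality of a \emph{good} cluster (\cref{def:good}), namely $\cost{X_i^*}{C} - \ass(X,C,c_{b(i)}) - 9\cost{X_i^*}{c_i^*} > \frac{\tau(X,C)-z\Theta}{100k}$, gives $\cst(X,C') \le \tau(X,C) - \frac{\tau(X,C)-z\Theta}{100k}$, i.e.\ $\cst(X,C') - z\Theta \le (1 - \frac{1}{100k})(\cst(X,C) - z\Theta)$, as claimed. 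The case $i \in L$ (so $b(i) = i$ is lonely) is identical with $X_i^*$-decomposition replaced by the trivial one, again using \cref{fact:reassign}.

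\textbf{Main obstacle.} The routine swap-accounting and the Markov step are essentially copied from \cref{lemma:easyonestep}; the genuinely new point is establishing that the good-cluster sampling probability is $\Omega(\eps)$ rather than $\Omega(1)$. This requires relating $\tau(X,C) - z\Theta$ to $\tau(X,C)$, which in turn needs the a priori bound $\tau(X,C) = O(\opt/\eps) = O(z\Theta)$ — so the proof is not self-contained at the level of a single step: it implicitly relies on the cost already having been driven down to $O(\opt/\eps)$ in the previous phase (\cref{lem:phase1}), which is why \cref{lemma:onestephard} will be combined with \cref{lem:phase1} when proving \cref{thm:localsearch}. I would make this dependence explicit by either adding the hypothesis $\tau(X,C) = O(\opt/\eps)$ to the lemma, or by re-deriving $\tau(X,C) - z\Theta \ge \eps z\Theta/2$ directly from $|\outt_{10\Theta}(X,C)| \ge (1+\eps)z$: each of the "extra" $\eps z$ far points contributes $\min(\Theta, \phi(\cdot,C))$, but capped at $\Theta$ that is only $\eps z \Theta$ — hmm, that lower bound alone does not bound the ratio from below without also upper-bounding $\tau(X,C)$, so the cleaner route is genuinely to carry the $O(\opt/\eps)$ invariant forward, exactly as \cref{lem:sample_from_good}'s statement already presupposes via its use in the larger argument.
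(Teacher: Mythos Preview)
Your overall structure matches the paper's proof exactly: invoke \cref{lem:sample_from_good}, bound the probability of sampling from a good cluster, apply \cref{lem:8apx} with Markov to get the $9$-approximation with probability $\geq 1/9$, then do the swap accounting via \cref{fact:reassign} and \cref{def:good}. The swap-accounting and Markov parts are fine.

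The genuine gap is in your handling of the sampling probability, and it leads you to a wrong conclusion in your ``Main obstacle'' paragraph. You need $\frac{\tau(X,C) - z\Theta}{10\,\tau(X,C)} = \Omega(\eps)$, and you convinced yourself this requires an \emph{upper} bound on $\tau(X,C)$ inherited from phase~1. It does not. Rewrite the ratio as
\[
\frac{\tau(X,C) - z\Theta}{10\,\tau(X,C)} \;=\; \frac{1}{10}\left(1 - \frac{z\Theta}{\tau(X,C)}\right),
\]
which is \emph{increasing} in $\tau(X,C)$. So a \emph{lower} bound on $\tau(X,C)$ is exactly what you want. The hypothesis $|\outt_{10\Theta}(X,C)| \geq (1+\eps)z$ gives $\tau(X,C) \geq (1+\eps)z\Theta$ (each such point contributes exactly $\Theta$ to $\tau_\Theta$; your ``$10\Theta$'' overshoot was a slip --- $\tau$ caps at $\Theta$, not $10\Theta$). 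Plugging in,
\[
\frac{\tau(X,C) - z\Theta}{10\,\tau(X,C)} \;\geq\; \frac{(1+\eps)z\Theta - z\Theta}{10(1+\eps)z\Theta} \;=\; \frac{\eps}{10(1+\eps)} \;\geq\; \frac{\eps}{20}
\]
for $\eps \le 1$. Combined with the $1/9$ from Markov this gives $\eps/180 \geq \eps/200$, matching the stated constant. So the lemma \emph{is} self-contained at the single-step level, and your proposed extra hypothesis $\tau(X,C) = O(\opt/\eps)$ is unnecessary.
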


\begin{proof}
Let $c'$ denote the point sampled by \nlsp. The probability that $c' \in X^*_i$ for some $i$ with $i$ being a good cluster index is at least

\begin{align*}
&\frac{\frac{\tau(X,C) - z\Theta}{10}}{\tau(X,C)} \geq \frac{(1+\eps)z\Theta - z\Theta}{10(1+\eps)z\Theta} && \tau(X, C) \ge (1+\eps) z \Theta \\
&\geq \frac{\eps}{20}. && \eps \le 1
\end{align*}

Let $i$ denote an arbitrary good cluster index. \cref{lem:8apx} implies that

$$\E[\cost{X_i^*}{c'}|c' \in X_i^*] \leq 8 \cdot \cost{X_i^*}{c_i^*}.$$

Hence, by a simple application of Markov's inequality, we obtain

$$\P[\cost{X_i^*}{c'} \leq 9 \cdot \cost{X_i^*}{c_i^*}|c' \in X_i^*]  \geq \frac{1}{9}.$$

Putting things together, the probability that the sampled point $c'$ is contained in $X_i^*$ for some good cluster index $i$ and it additionally holds that 
\begin{align}
    \label{eq:skrillex}
    \cost{X_i^*}{c'} \leq 9 \cdot \cost{X_i^*}{c_i^*}
\end{align} is at least

$$\frac{\eps}{20} \cdot \frac{1}{9} \geq \frac{\eps}{200}.$$

First, consider the case that  $i \in H$. In that case, we can upper bound $\cost{X}{C'}$ as follows:

\begin{align*}
\cost{X}{C'} &\leq \cost{X}{(C \setminus \{c_{b(i)}\}) \cup \{c'\}} && \text{swap $c_{b(i)}$ and $c'$} \\
&= \cost{X}{C} + \left( \cost{X}{(C \setminus \{c_{b(i)}\}) \cup \{c'\}}- \cost{X}{C}\right) \\
&=  \cost{X}{C} + \cost{X \setminus  X_i^*}{(C \setminus \{c_{b(i)}\}) \cup \{c'\}}- \cost{X \setminus X_i^*}{C}  \\
& + \cost{X_i^*}{(C \setminus \{c_{b(i)}\}) \cup \{c'\}}- \cost{X_i^*}{C}\\
&\le  \cost{X}{C} + \ass(X,C,c_i) + \tau(X_i^*,c') - \tau(X_i^*,C) && \text{\cref{fact:reassign}} \\
&\le  \cost{X}{C} + \ass(X,C,c_i) +  9\tau(X_i^*,c_i^*) - \tau(X_i^*,C) && \text{\cref{eq:skrillex}} \\
&\leq \tau(X,C) - \frac{\tau(X,C) - z\Theta}{100k}. && \text{\cref{def:good}}
\end{align*}

The case $i \notin H$ is very similar. Thus, we obtain

\begin{align*}
    &\cost{X}{C'} - z\Theta 
    \leq \left( \cost{X}{C} - \frac{\tau(X,C) - z\Theta}{100k}\right) - z\Theta \\
    &= \left(1 - 1/(100k)\right)\left( \cost{X}{C} - z\Theta\right),
\end{align*}
as desired.

\end{proof}

\begin{lemma}
\label{lem:phase2}
Let $C$ denote the current set of candidate centers. Let $\eps \in (0,1]$ be arbitrary and $\beta := \frac{300}{\eps}$. Let $\alpha > 0$ be arbitrary such that $\cost{X}{C} \leq \alpha\opt$. Assume that $\Theta \geq \frac{\beta \opt}{z}$. Let $C_0 := C$ and for $t \geq 0$, let $C_{t+1}$ denote the set of centers obtained by running \nlsp with input centers $C_t$. Then, with positive constant probability $p > 0$, there are at most $(1+\eps)z$ points in $X$ that have a distance of at least $10\Theta$ to the closest candidate center in $C_t$ for some $t \leq T$ with $T = O( \log (\alpha)k/\eps)$.
\end{lemma}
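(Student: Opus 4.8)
The plan is to follow the proof of \cref{lem:phase1} with two modifications: use the potential $\Phi_t := \tau_\Theta(X, C_t) - z\Theta$ in place of the raw cost $\tau_\Theta(X, C_t)$, and invoke \cref{lemma:onestephard} in place of \cref{lemma:easyonestep}. I would first collect a few elementary facts. (a) \emph{Monotonicity:} $\Phi_t$ is non-increasing in $t$, since in one step of \cref{alg:localsearch} re-discarding the freshly sampled center is always an admissible swap, so $\tau_\Theta(X, C_{t+1}) \le \tau_\Theta(X, C_t)$. (b) \emph{From potential to the stopping test:} because $\outt_{10\Theta}(X, C_t) \subseteq \outt_\Theta(X, C_t)$ and every point of $\outt_\Theta(X, C_t)$ contributes exactly $\Theta$ to $\tau_\Theta(X, C_t)$, we have $|\outt_{10\Theta}(X, C_t)|\cdot\Theta \le \tau_\Theta(X, C_t)$; hence $\Phi_t \le \eps z\Theta$ forces $|\outt_{10\Theta}(X, C_t)| \le (1+\eps)z$, i.e.\ the algorithm's stopping condition holds at time $t$. (c) \emph{From the stopping test to \cref{lemma:onestephard}:} if the stopping condition fails at time $t$ then $|\outt_{10\Theta}(X, C_t)| \ge (1+\eps)z$, and since $\Theta \ge \beta\OPT/z$ is assumed, \cref{lemma:onestephard} applies at step $t$, so $\Phi_{t+1} \le (1-1/(100k))\Phi_t$ with probability at least $\eps/200$ conditioned on $C_0,\dots,C_t$. (d) \emph{Scale:} $\Phi_0 \le \tau_\Theta(X, C_0) \le \phi(X, C_0) \le \alpha\OPT$, while $\eps z\Theta \ge \eps\beta\OPT = 300\OPT$; in particular if $\alpha \le 300$ then $\Phi_0 \le \eps z \Theta$ and by (b) we are done already at $t=0$, so assume $\alpha > 300$.

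Set $g^* := \lceil 100k\ln(\alpha/300)\rceil = O(k\log\alpha)$ and call step $t$ \emph{good} if $\Phi_{t+1} \le (1-1/(100k))\Phi_t$. If at least $g^*$ of the steps $0,\dots,T-1$ are good then, combining this with monotonicity on the other steps and $(1-1/(100k))^{g^*} \le e^{-g^*/(100k)} \le 300/\alpha$, we get $\Phi_T \le (300/\alpha)\cdot\alpha\OPT = 300\OPT \le \eps z\Theta$, so by (b) the stopping condition holds at time $T$. Hence it suffices to show that with positive constant probability either at least $g^*$ of the first $T$ steps are good, or the stopping condition becomes true before step $T$.

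For $0 \le t < T$ let $Y_t = 1$ iff the stopping condition fails at time $t$ and step $t$ is \emph{not} good. By (c), together with the fact that $Y_t = 0$ deterministically once the stopping condition holds at $t$, we get $\E[Y_t \mid C_0,\dots,C_t] \le 1 - \eps/200$; therefore $\sum_{t<T}Y_t$ is stochastically dominated by $\mathrm{Bin}(T, 1-\eps/200)$, equivalently $T - \sum_{t<T}Y_t$ stochastically dominates $\mathrm{Bin}(T, \eps/200)$. On the event $B$ that the stopping condition fails at every $t \le T$, each $t < T$ with $Y_t = 0$ is a good step, so the number of good steps among the first $T$ equals $T - \sum_{t<T}Y_t$, and by the paragraph above this number must be $< g^*$ (otherwise the stopping condition would hold at $T$). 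Thus $B \subseteq \{\mathrm{Bin}(T, \eps/200) < g^*\}$. Now choose $T := \lceil C_0\, g^*/\eps\rceil = O(k\log(\alpha)/\eps)$ for a large enough absolute constant $C_0$, so that $\E[\mathrm{Bin}(T,\eps/200)] = \eps T/200 \ge 2g^* \ge 2$ (using $g^* \ge 1$). A Chernoff lower-tail bound gives $\P[\mathrm{Bin}(T,\eps/200) < g^*] \le \P[\mathrm{Bin}(T,\eps/200) \le \tfrac12\E[\mathrm{Bin}(T,\eps/200)]] \le e^{-g^*/4} \le e^{-1/4} < 1$, so $\P[B] < 1$ is bounded away from $1$ by an absolute constant, i.e.\ the stopping condition holds for some $t \le T$ with positive constant probability.

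I expect the main obstacle to be this last step. The Markov-inequality argument used for \cref{lem:phase1} would here yield only success probability $\Theta(\eps)$, since one step succeeds with probability $\eps/200$ rather than $1/100$; recovering an $\eps$-independent constant forces us to take $T = \Theta(g^*/\eps) = \Theta(k\log(\alpha)/\eps)$ (exactly the $1/\eps$ factor in the claimed bound) and to replace Markov's inequality by a stochastic-domination plus Chernoff argument on the number of good steps. The remaining ingredients — monotonicity of $\Phi$, the factor-$10$ slack relating $\Phi_t \le \eps z\Theta$ to the algorithm's test $|\outt_{10\Theta}(X,C_t)| \le (1+\eps)z$, and the degenerate regime $\alpha \le 300$ — are routine.
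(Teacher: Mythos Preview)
Your proof is correct and follows essentially the same route as the paper: both define the potential $\Phi_t = \tau_\Theta(X, C_t) - z\Theta$, invoke \cref{lemma:onestephard} whenever the stopping condition fails, couple the resulting ``good'' steps to an i.i.d.\ Bernoulli sum via stochastic domination, and finish with a Chernoff bound to get a constant success probability independent of $\eps$. One cosmetic slip in (d): you write $\tau_\Theta(X, C_0) \le \phi(X, C_0) \le \alpha\OPT$, but the hypothesis bounds $\tau_\Theta(X, C_0)$ directly (the $\cost{X}{C}$ in the statement is $\tau_\Theta$), not $\phi(X, C_0)$; just drop the intermediate $\phi$ step and the chain $\Phi_0 \le \tau_\Theta(X, C_0) \le \alpha\OPT$ is what you want.
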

\begin{proof}
For $t \geq 0$, we define the potential $\Phi(t) = \cost{X}{C_t} - z\Theta$. Now, let $Y_t$ denote the indicator variable for the event that there are less than $(1+\eps)z$ points that have a squared distance of at least $10\Theta$ to the closest center in $C_t$ or $\Phi(t+1) \leq \left( 1 - 1/(100k) \right)\Phi(t)$. Note that the random variables $Y_0,Y_1,\ldots,Y_{T-1}$ are not  independent. However, \cref{lemma:onestephard} implies that 

$$\E[Y_i|Y_0,Y_1,\ldots,Y_{i-1}] \geq \frac{\eps}{200}$$
for every $i \in \{0,\ldots,T-1\}$ and every realization of random variables $Y_0,Y_1,\ldots Y_{i-1}$.
Thus, the random variable $Y = \sum_{i=0}^{T-1} Y_i$ stochastically dominates the random variable $Y' = \sum_{i=0}^{T-1} Y_i' $, where the $(Y'_i)$'s are independent Bernoulli variables that are equal to $1$ with probability $\frac{\eps}{200}$. Now, let $T' := \lceil 100k \cdot \log (\alpha) + 1\rceil$ and $T := \lceil \frac{2T'}{\eps / 200} \rceil$. By a standard application of the Chernoff bound in \cref{thm:chernoff}, we get

$$\P[Y' \leq T'] \leq P[Y' \leq (1-1/2)\E[Y']] \leq e^{-\Omega(\E[Y'])} = e^{-\Omega(1)}.$$

Thus, there exists a positive constant $p$ such that with probability at least $p$, we have

$$\P[Y \geq T'] \geq p.$$

Assume that $Y \geq T'$. First, consider the case that there exists a $t \in \{0,\ldots,T-1\}$ for which there are less than $(1+\eps)z$ points in $X$ that have a squared distance of at least  $10\Theta$ to the closest center in  $C_t$. In that case we are done. Thus, assume that this does not happen. In particular, this implies that 

$$\Phi(T-1) \geq (1+\eps)z \Theta - z\Theta = \eps \Theta z > \opt.$$

However, we also have

\begin{align*}
\Phi(T-1) & \leq\left( 1 - 1/(100k)\right)^{T'-1} \Phi(0) \\
&\leq \left(1 - 1/(100k) \right)^{100k \cdot \log (\alpha)}\alpha\opt \\
&\leq \opt,
 \end{align*}
a contradiction. This concludes the proof. 
\end{proof}

\subsection{Putting things together}

\begin{lemma}
\label{lemma:good_iteration}
Consider some iteration $i$ for which $\opt \leq \OPT_{guess} \leq 2\opt$. Then, with positive constant probability, there exists some iteration $j$ of the inner loop such that $|\Xout^{i,j}| \leq (1+\eps)z$ and $\phi(X \setminus \Xout^{i,j},C_{i,j})= O(1/\eps) \opt$.
\end{lemma}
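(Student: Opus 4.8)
The plan is to split the run of \cref{alg:noisylocalsearch} in iteration $i$ into three consecutive stages and carry a constant-probability good event through all of them: (a) the \kmpp seeding producing $C_{i,0}$; (b) the first $O(k\log\log k)$ local-search steps (``phase~1'', cf.\ \cref{sec:appendix_localsearch_easy}); and (c) the remaining $O(k\log(1/\eps)/\eps)$ local-search steps (``phase~2'', cf.\ \cref{sec:appendix_localsearch_hard}). The first thing to record is that the hypothesis $\opt \le \OPT_{guess} \le 2\opt$, together with $\Theta = \beta\OPT_{guess}/z$, gives $\frac{\beta\opt}{z} \le \Theta \le \frac{2\beta\opt}{z}$; this makes \emph{both} the precondition $\Theta \le 2\frac{\beta\opt}{z}$ required by \cref{lemma:easyonestep,lem:phase1} and the precondition $\Theta \ge \frac{\beta\opt}{z}$ required by \cref{lemma:onestephard,lem:phase2} hold at once, which is exactly why $\Theta$ is chosen inside this narrow window.

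For stage (a) I would apply \cref{thm:kmpp} with the cost function $\tau_\Theta$. Since the optimal $k$-means-with-outliers pair $(C^*,\Xout^*)$ satisfies $\tau_\Theta(X,C^*) \le \phi(X\setminus\Xout^*,C^*) + |\Xout^*|\Theta = \opt + z\Theta = O(\beta\opt)$, \cref{thm:kmpp} yields $\E[\tau_\Theta(X,C_{i,0})] = O(\log k\cdot\beta\opt)$, so by Markov's inequality there is a constant $p_1>0$ with $\tau_\Theta(X,C_{i,0}) \le \alpha_1\beta\opt$ for some $\alpha_1 = O(\log k)$, with probability at least $p_1$. On this event, stage (b) is precisely \cref{lem:phase1} with $\alpha = \alpha_1$: after its $O(k\log\alpha_1) = O(k\log\log k)$ steps, with constant probability the current centers $C'$ satisfy $\tau_\Theta(X,C') \le 100\beta\opt$.

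For stage (c) I would feed $C'$ into \cref{lem:phase2} with $\alpha = \alpha_2 := 100\beta = O(1/\eps)$ (so $\tau_\Theta(X,C') \le \alpha_2\opt$): with constant probability there is a step $t \le O(k\log\alpha_2/\eps) = O(k\log(1/\eps)/\eps)$ at which at most $(1+\eps)z$ points of $X$ lie at squared distance $\ge 10\Theta$ from the closest center. Because the phase-1 and phase-2 step counts sum to the length of the inner loop of \cref{alg:noisylocalsearch}, this $t$ is realized as some inner-loop index $j$, and for that $j$ we get $|\Xout^{i,j}| = |\outt_{10\Theta}(X,C_{i,j})| \le (1+\eps)z$, the first required property; chaining the three constant-probability events (each conditioned only on the preceding configuration) leaves a positive constant overall. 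For the cost: one local-search step never raises the $\tau_\Theta$-cost (taking $d = c$ in \cref{alg:localsearch} returns $C$), so since $j$ lies past phase~1 we have $\tau_\Theta(X,C_{i,j}) \le \tau_\Theta(X,C') \le 100\beta\opt$; and every $x \notin \Xout^{i,j}$ has $\phi(x,C_{i,j}) < 10\Theta$, hence $\phi(x,C_{i,j}) \le 10\min(\Theta,\phi(x,C_{i,j})) = 10\,\tau_\Theta(x,C_{i,j})$. Summing over $x\notin\Xout^{i,j}$ gives $\phi(X\setminus\Xout^{i,j},C_{i,j}) \le 10\,\tau_\Theta(X,C_{i,j}) \le 1000\beta\opt = O(\opt/\eps)$, as wanted.

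The main obstacle is not in this assembly --- all of it is short once \cref{lem:phase1,lem:phase2} are available --- but in those two lemmas themselves, above all in the one-step progress guarantee \cref{lemma:onestephard} driving phase~2. The only two things here that need care are (i) making the $\Theta$-preconditions of the two phases compatible, which forces $\Theta$ into the window $[\beta\opt/z,\,2\beta\opt/z]$, and (ii) the bookkeeping that the step index $t$ returned by \cref{lem:phase2} is genuinely one of the $C_{i,j}$ the lemma statement ranges over, so that the pair $(C_{i,j},\Xout^{i,j})$ we bound is precisely the one in the conclusion.
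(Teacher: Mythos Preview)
Your proof is correct and follows essentially the same three-stage assembly as the paper: \cref{thm:kmpp} plus Markov for stage~(a), \cref{lem:phase1} for stage~(b), and \cref{lem:phase2} for stage~(c), with the same monotonicity-of-$\tau_\Theta$ and $\phi\le 10\tau_\Theta$ observations to finish. Your version is slightly more explicit about the $\Theta$-window making the hypotheses of both phases hold simultaneously and about why local search never increases the $\tau_\Theta$-cost, but the structure and the key steps are identical to the paper's argument.
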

\begin{proof}
According to \cite{bhaskara2020, li_penalties2020}, running \cref{alg:kmp} results in a set of centers $C_{i,0}$ such that 
\begin{align*}
    &\E[\tau(X,C_{i,0})] = O(\log k) \cdot \min_{C' \colon |C'| = k} \tau(X,C') \\
    &\leq O(\log k) \left(\opt + z \cdot \Theta \right) \leq O(\log k) \beta \opt.
\end{align*}

By using Markov's inequality, we therefore have $\tau(X,C_{i,0}) = O(\log k)\beta \opt$ with positive constant probability. Conditioned on that event, we use \cref{lem:phase1} to deduce that $\tau(X,C_{i,T}) \leq 100 \beta \opt$ for some $T = O(k \log (O(\log k))) = O(k \log \log k)$ with positive constant probability. Then, we use \cref{lem:phase2} to deduce that  with positive constant probability, there exists some $T' \in O( \log(1/\eps)k/\eps)$ such that there are at most $(1+\eps)z$ points in $X$ with a squared distance of at least $10\Theta$ to the closest center in $C_{i,T + T'}$ and furthermore $$\tau(X,C_{i,T+T'}) \leq \tau(X,C_{i,T}) \leq 100 \beta \opt = O(1/\eps)\opt. $$
Hence, $|\Xout^{i,T+T'}| \leq (1+\eps)z$ and $$\phi(X \setminus \Xout^{i,T+T'},C_{i,T+T'}) \leq 10 \cdot  \tau(X \setminus \Xout^{i,T+T'},C_{i,T+T'}) \leq O(1/\eps)\opt$$ with $T+T' = O(k \log \log k + \log(1/\eps)k/\eps)$, as set in the inner loop of \cref{alg:noisylocalsearch}.
\end{proof}

Finally, we are ready to finish the analysis of \cref{alg:noisylocalsearch} by proving \cref{thm:localsearch}. 

\begin{proof}[Proof of \cref{thm:localsearch}.]
As $k + z < n$, we have 

$$1 \leq \min_{C',\Xout' \colon |C'| = k,|\Xout'| = z} \phi(X \setminus \Xout', C) \leq n\Delta^2.$$

For $i = 0$, we have $2^i = 1$ and for $i = \lceil \log(n\Delta^2))\rceil$, we have

$$2^i = 2^{ \lceil \log(n\Delta^2 ))\rceil}  \geq n\Delta^2.$$

Thus, there exists some $i \in \{0,1,\ldots,\lceil \log(n\Delta^2)\rceil\}$ such that in the $i$-th iteration we have
$OPT_{guess} \in [\opt,2\opt]$ and the statement follows from \cref{lemma:good_iteration}.
\end{proof}

\section{Using the  Metropolis-Hastings algorithm to speed up \cref{alg:kmp}}
\label{sec:appendix_metropolis}
Here we explain how to speed up \cref{alg:kmp} by using the  Metropolis-Hastings algorithm. This was first done in \cite{bachem2016approximate, bachem2016fast} for the classical $k$-means++ algorithm, but there it only leads to rather weak additive error guarantee, which turns out not to be the case for $k$-means with outliers. 

We alter every step of \cref{alg:kmp} as follows. 
Instead of sampling a new point proportional to its $\tau_\Theta$-weight, we instead sample a new point from the following Markov chain: First, we sample a point $x$ according to a \emph{proposal} distribution $q$. 
Then, in the following $T$ steps, we always sample a new point $y$ from $q$ and set $x \leftarrow y$ with probability $\min\left(1, \frac{q(x) \pi(y)}{q(y) \pi(x)} \right)$, where $\pi$ is the target distribution, so in our case, the $\tau_\Theta(\cdot, C) / \tau_\Theta(X, C)$-distribution. 
In other words, we define a Markov chain with transition probability 
\[
P(x,y) = q(y) \cdot \min\left(1, \frac{q(x) \pi(y)}{q(y) \pi(x)} \right).
\]
In \cite{bachem2016approximate}, a uniform proposal distribution $q(x) = 1/n$ for all $x \in X$ was chosen. The advantage of the uniform distribution is that one can sample from it easily, but only arguably rather weak guarantees are known for the resulting algorithm in the context of speeding up $\kmpp$. 
In \cite{bachem2016fast}, $q(x) = \phi(x, c)/ \phi(X,c)$ for a random point $c$ was chosen. 
This distribution can be precomputed in $O(n)$ time and the guarantee of the resulting algorithm matches the guarantee of the original \kmpp algorithm up to an additional additive error term $\delta \phi(X, \mu(X))$, if $T = \tilde{O}(1/\delta)$ is chosen. 

\begin{remark}
One can see (but we will not prove it here) that the analysis of \cite{bachem2016fast} directly generalizes to  $\tau_\Theta$-costs for each non-negative $\Theta$. Choosing $\Theta = \OPT/z$, one obtains $\delta \tau_\Theta(X, \mu(X)) \le \delta \cdot n\Theta = \frac{\delta n \OPT}{z}$. Hence, the choice of $\delta = O(z/n)$ yields that the additional incurred error is of the same order as $\OPT$. Hence, generalizing the result of \cite{bachem2016fast} and using \cref{lem:penalties_to_outliers}, we obtain an $(O(\log k), O(\log k))$-approximation algorithm with time complexity of $\tilde{O}(n + k \cdot (k/\delta)) = \tilde{O}(n + nk^2/z)$, because we need to precompute the proposal distribution $q$ in $O(n)$ time and sampling one point takes $\tilde{O}(1/\delta)$ steps, each implementable in time $O(k)$. Changing proposal distribution to uniform would recover the same guarantees. 
\end{remark}

While in the case of classical $k$-means, one needs to use the more complicated proposal distribution $q(x) = \phi(x, c)/ \phi(X,c)$ to get some guarantees, it turns out that for $k$-means with outliers, uniform proposal suffices. 
To get \cref{thm:fast_algorithm}, we start by proving that \cref{alg:kmp} with $\ell = O(k)$ can be sped up with Metropolis-Hastings algorithm with uniform proposal. 
\begin{theorem}
\label{thm:metropolis_oversampling}
A version of \cref{alg:kmp} with $\ell = O(k)$, $\Theta = \Theta(\OPT/z)$ and where each sampling step is approximated by the Metropolis-Hastings algorithm running for $T = O(n/z)$ steps and a uniform proposal distribution will run in time $\tilde{O}(\frac{nk^2}{z})$ and produce a solution that with positive constant probability satisfies
\[
\tau_\Theta(X_{in}^*, C) = O(\OPT). 
\]
\end{theorem}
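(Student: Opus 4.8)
The plan is to combine the guarantee of \cref{thm:aggarwal} (that $O(k/\eps)$ rounds of exact $k$-means$++$ seeding with penalties yields $\tau_\Theta(X_{in}^*, C) = O(\OPT)$ with constant probability, for $\Theta = \Theta(\OPT/(\eps z))$, and here $\eps$ is a constant so this is just $O(k)$ rounds) with a coupling argument that shows the Metropolis-Hastings sampler run for $T = O(n/z)$ steps is, with high probability, an acceptable substitute for exact sampling. First I would fix a round $i$ of \cref{alg:kmp} with current center set $C$, and let $\pi$ denote the target distribution $\tau_\Theta(\cdot, C)/\tau_\Theta(X, C)$ and $q$ the uniform proposal. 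The key quantity is $\tau_\Theta(X, C)$ compared to $n\Theta$. Since every point contributes at most $\Theta$, the uniform distribution $q$ and the target $\pi$ satisfy $\pi(x)/q(x) = n\tau_\Theta(x,C)/\tau_\Theta(X,C) \le n\Theta/\tau_\Theta(X,C)$. When the algorithm has not yet reached a good solution we are in the regime where $\tau_\Theta(X, C) = \Omega(\OPT/\eps) = \Omega(z\Theta)$ (this always holds once $\ell \ge 1$ because $\tau_\Theta(X,C) \ge \tau_\Theta(X_{in}^*,C)$ and before we are done $\tau_\Theta(X_{in}^*,C) = \Omega(\OPT)$; more carefully, the relevant lower bound $\tau_\Theta(X,C)\ge$ roughly $z\Theta$ comes from the outliers unless we are already done via the argument in \cref{thm:aggarwal}), so $\pi(x)/q(x) = O(n/z)$ for all $x$. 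Thus the chain mixes in $T = O(n/z \cdot \log(1/\delta'))$ steps to within total variation distance $\delta'$ of $\pi$ — this is the standard bound for Metropolis-Hastings with a proposal whose density ratio to the target is bounded by a parameter $M$, giving mixing time $O(M\log(1/\delta'))$.

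The second step is a union bound / coupling over all $O(k)$ sampling rounds. Choosing $\delta' = O(1/k)$ (absorbed into the $\tilde O$), we couple the $O(k)$ samples produced by the Metropolis chains with $O(k)$ samples drawn exactly from the respective $\pi$-distributions so that, with constant probability, all of them agree. Conditioned on this coupling succeeding, the run of the sped-up algorithm is identical to a run of exact \cref{alg:kmp} with $\ell = O(k)$, and \cref{thm:aggarwal} (instantiated with constant $\eps$ and $\Theta = \Theta(\OPT/z)$) gives $\tau_\Theta(X_{in}^*, C) = O(\OPT)$ with constant probability. Multiplying the two constant probabilities gives the claimed constant success probability. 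One subtlety to handle carefully: the density-ratio bound $M = O(n/z)$ is only valid while $\tau_\Theta(X,C)$ is large; if at some round $\tau_\Theta(X,C)$ has already dropped below $\Theta(z\Theta)$ then (as in the proof of \cref{thm:aggarwal}) we have already achieved $\tau_\Theta(X_{in}^*,C) = O(\OPT)$ and can stop, so we may simply condition on being in the "bad" regime when invoking the mixing bound, exactly as \cref{thm:aggarwal} treats the two cases "$\tau_\Theta(X_{in}^*,C) \ge 20\OPT$" versus "already done".

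For the running time: precomputing nothing is needed for the uniform proposal, each Metropolis step evaluates $\pi$ on two points, costing $O(k)$ time (a nearest-center query against $|C| = O(k)$ centers), so one sampling step costs $O(Tk) = \tilde O(nk/z)$, and over $\ell = O(k)$ rounds this is $\tilde O(nk^2/z)$ as claimed. The main obstacle I expect is making the density-ratio bound rigorous and uniform across all rounds — specifically arguing that throughout the execution (and not merely in expectation) we have $\tau_\Theta(X,C) = \Omega(z\Theta)$ whenever the chain is actually used, so that $M = O(n/z)$ holds simultaneously for every sampling step that matters; this requires the same case-split as in \cref{thm:aggarwal} plus care that the "we are already done" branch is detected (or at least that the analysis does not need to detect it, since we only need the final $C$ to satisfy the bound). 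A secondary technical point is citing or re-deriving the $O(M\log(1/\delta'))$ mixing bound for this Metropolis chain with bounded density ratio, which is classical but should be stated as a lemma.
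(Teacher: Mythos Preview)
Your proposal is correct and follows essentially the same route as the paper: bound the density ratio $\pi(x)/q(x)\le n\Theta/\tau_\Theta(X,C)=O(n/z)$ whenever $\tau_\Theta(X_{in}^*,C)$ is still large, and observe that otherwise we are already done by monotonicity; then invoke \cref{thm:aggarwal} with constant~$\eps$.

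The one technical difference is in how the Metropolis guarantee is used. You argue mixing to within total-variation distance $\delta'=O(1/k)$ in $T=O((n/z)\log(1/\delta'))$ steps and then couple all $O(k)$ rounds to exact sampling via a union bound. The paper instead uses the cleaner domination statement (their \cref{lem:metropolis}): after $T=O(n/z)$ steps one has $p_T(x)\ge \pi(x)/2$ pointwise, so each round can be viewed as sampling from $\pi$ with probability $1/2$ and from something arbitrary otherwise. This folds directly into the per-round success probability of the \cref{thm:aggarwal} analysis (losing only a factor $2$ in the number of iterations) and avoids both the $\log k$ factor in $T$ and the coupling/union-bound bookkeeping. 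Either argument gives $\tilde O(nk^2/z)$, so the difference is purely cosmetic, but the domination formulation sidesteps exactly the adaptivity worry you flagged at the end.
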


To prove \cref{thm:metropolis_oversampling}, we first recall a classical result about the Metropolis-Hastings algorithm \cite{metropolis1953equation, liu1996metropolized}. If the proposal distribution $q$ and the target distribution $\pi$ satisfy $q(x)\ge \delta \pi(x)$ for all $x$ and some $\delta > 0$, then after $T = O(1/\delta)$ steps, the distribution of the Metropolis-Hastings chain dominates $\pi/2$. 

\begin{lemma}[one Metropolis-Hastings step \cite{liu1996metropolized}]
\label{lem:metropolis}
Let $\pi$ denote the target distribution we want to approximate with  Metropolis-Hastings and $p_t = \alpha \pi + (1 - \alpha) p_t'$ for some distribution $p_t'$. Let $q$ be a proposal distribution with $q(x) \ge \delta \pi(x)$ for all $x$. Then, after one step of Metropolis-Hastings, we are in  a distribution $p_{t+1}$ which can be written as $p_{t+1} = \alpha' \pi + (1- \alpha') p_{t+1}'$ for $\alpha' = \alpha + \delta(1-\alpha)$. 
\end{lemma}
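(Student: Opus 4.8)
The plan is to use two standard properties of the Metropolis-Hastings kernel $P$ with transition probabilities $P(x,y) = q(y)\cdot\min\!\left(1, \frac{q(x)\pi(y)}{q(y)\pi(x)}\right)$ for $y \ne x$ (and the holding probability $P(x,x)$ chosen to make $P$ stochastic): that $\pi$ is stationary for $P$, and that the hypothesis $q(x) \ge \delta\pi(x)$ forces a uniform pointwise lower bound $P(x,y) \ge \delta\pi(y)$ on \emph{all} transition probabilities. First I would recall reversibility of $P$ with respect to $\pi$: for $x \ne y$ one has $\pi(x)P(x,y) = \min(\pi(x)q(y), q(x)\pi(y))$, which is symmetric in $x$ and $y$, so $\pi(x)P(x,y) = \pi(y)P(y,x)$; summing over $x$ gives $(\pi P)(y) = \pi(y)\sum_x P(y,x) = \pi(y)$, i.e.\ $\pi P = \pi$.

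Next I would establish $P(x,y) \ge \delta\pi(y)$ for every pair $x,y$. For $y \ne x$ we have $P(x,y) = \min\!\left(q(y),\, q(x)\pi(y)/\pi(x)\right)$, and both arguments of the minimum are at least $\delta\pi(y)$ since $q(y) \ge \delta\pi(y)$ and $q(x) \ge \delta\pi(x)$. For $y = x$, note that whenever the proposal distribution proposes $x$ itself the acceptance probability is $\min(1,1) = 1$, so the chain stays; hence $P(x,x) \ge q(x) \ge \delta\pi(x)$. Thus the inequality holds for all $x,y$.

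Now I would apply this bound to the "remainder" part $p_t'$ of $p_t$. Since $(p_t' P)(y) = \sum_x p_t'(x)P(x,y) \ge \delta\pi(y)\sum_x p_t'(x) = \delta\pi(y)$, the signed measure $p_t'P - \delta\pi$ is nonnegative with total mass $1-\delta$, so we may write $p_t'P = \delta\pi + (1-\delta)p_{t+1}'$ for an honest probability distribution $p_{t+1}'$. Combining this with $\pi P = \pi$ and linearity of $p \mapsto pP$,
\[
p_{t+1} = p_t P = \alpha(\pi P) + (1-\alpha)(p_t'P) = \alpha\pi + (1-\alpha)\bigl(\delta\pi + (1-\delta)p_{t+1}'\bigr) = \alpha'\pi + (1-\alpha')p_{t+1}',
\]
where $\alpha' = \alpha + \delta(1-\alpha)$ and indeed $1-\alpha' = (1-\alpha)(1-\delta)$, as claimed.

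The only point requiring any care is the $y = x$ case of the transition lower bound (together with checking that $P$ is genuinely a stochastic matrix once the holding probability is filled in); the rest is bookkeeping, so I do not anticipate a real obstacle.
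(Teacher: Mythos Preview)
Your proof is correct and follows essentially the same route as the paper's: both establish the pointwise lower bound $P(x,y)\ge\delta\pi(y)$ for all pairs $(x,y)$ by splitting on which argument of the minimum is smaller, then combine this with stationarity of $\pi$ to extract the extra $\delta(1-\alpha)$ mass of $\pi$ after one step. Your version is simply more explicit---you separately verify reversibility, treat the diagonal case $y=x$ carefully, and spell out the mixture decomposition $p_t'P=\delta\pi+(1-\delta)p_{t+1}'$---whereas the paper states stationarity as known and leaves the $y=x$ case implicit.
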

\begin{proof}
For any fixed vertex $x$, given that $x$ is the current vertex after $t$ steps, the probability of moving from $x$ to $y$ is 
\[
P(x,y) = q(y) \cdot \min\left(1, \frac{q(x) \pi(y)}{q(y) \pi(x)} \right).
\]
Now, if $q(x)\pi(y) \ge q(y)\pi(x)$, we have $$P(x,y) = q(y) \ge \delta \pi(y)$$ by the definition of $\delta$. 
On the other hand, if $q(x)\pi(y) < q(y)\pi(x)$, we have $$P(x,y) = \frac{q(x) \pi(y)}{\pi(x)} \ge \frac{q(x) \pi(y)}{q(x) / \delta} = \delta \pi(y).$$ 

So, for any $x$ and any $y$ we have $P(x,y) \ge \delta\pi(y)$, and since $\pi$ is a stationary distribution, for any $x$ we have $p_{t+1}(x) \ge \alpha \pi(x) + (1-\alpha)\delta \pi(x)$, as needed. 
\end{proof}

We now prove \cref{thm:metropolis_oversampling}. 

\begin{proof}
Fix one sampling step of \cref{alg:kmp} and suppose that $\tau(\Xin^*, C) \ge 40\OPT$. For $T=O(n/z)$, we will show that we sample from a distribution $p_T$ such that for $\pi(x) = \tau(x, C) / \tau(X, C)$ we have $p_T(x) \ge \pi(x)/2$ for all $x$. Hence, we may interpret sampling from $p_T$ as sampling from $\pi$ with probability $1/2$ and sampling from some other distribution otherwise. Thus, we have essentially reduced the analysis to the one in the proof of \cref{thm:aggarwal}, only loosing a $2$-factor in the number of required iterations. 

The running time of the algorithm is $O(k \cdot \frac{n}{z} \cdot k)$, as we need to sample $O(k)$ points in total, $O(n/z)$ iterations of the Metropolis-Hastings algorithm are necessary for each point sample and in each step we need to compute $\tau_\Theta(x, C)$, which can be done in time proportional to the number of points already sampled.

We now prove that $p_T(x) \ge \pi(x)/2$ for all $x$, provided that $\Theta \in [\OPT/z,2\OPT/z]$. 
First note that $\pi(x) = \tau(x, C) / \tau(X, C) \le \Theta / \tau(X, C)$, so for $\delta = \frac{z}{n}$ we have for any $x$ that 
\[
\delta \pi(x) 
\le \frac{z}{n} \frac{2\OPT/ z}{\tau(X,C)} 
= q(x) \frac{2\OPT}{\tau(X,C)}
\le q(x).
\]
Hence, we may use \cref{lem:metropolis} to conclude that after $O(1/\delta) = O(n/z)$ steps, $p_T(x) \ge \pi(x)/2$ for all $x$, as needed. 
\end{proof}

We are now ready to prove \cref{thm:fast_algorithm} that we restate here for convenience. 

\thmmetropolis*

\begin{proof}
If $z = O(k \log k)$ or $z = O(\log^2 n)$, we can run \cref{alg:noisylocalsearch} with $\eps = 1$ to obtain an $(O(1), O(1))$-approximation in time $\tilde{O}(nk)$ (\cref{thm:localsearch}). Otherwise, we run the Metropolis-Hastings based variant of \cref{alg:kmp} from \cref{thm:metropolis_oversampling} with $\ell = O(k)$ that runs in $\tilde{O}(nk^2/z)$ time for $O(\log n)$ values $\Theta_0 = 2^0/z, \Theta_1 = 2^1/z, \dots, \Theta_{\log(\Delta^2n)} = \Delta^2n/z$. Hence, we get a sequence of sets of centers $Y_0, Y_1, \dots$ with $|Y_i|= O(k)$. 
\cref{thm:metropolis_oversampling} ensures that there will exist one value $\tilde{\Theta}$ with $\OPT/(2z) \le \tilde{\Theta} \le \OPT/ z$ in the set of values $\Theta_0, \dots, \Theta_{\log(\Delta^2n)}$ such that the sped up \cref{alg:kmp} returns, with constant probability, a set $Y$  with $\tau_{\tilde{\Theta}}(X_{in}^*, Y) = O(\OPT)$.  

In \cref{thm:fast_coreset} we give an algorithm with running time $\tilde{O}(nk^2/z)$ that turns the pair $(\tilde{Y}, \tilde{\Theta})$ into a set of cluster centers $\tilde{C}, |\tilde{C}| = k$, such that $\phi^{-A z}(X, \tilde{C}) = O(\OPT)$ for some universal constant $A$, with positive constant probability. 

We apply the algorithm from \cref{thm:fast_coreset} to all pairs $(Y_i, \Theta_i)$ and get a sequence of $O(\log n)$ sets $C_0, C_1, \dots$ with $|C_i|=k$ such that, by \cref{thm:fast_coreset}, with positive constant probability it contains a set $\tilde{C}$  with $\phi^{-A z}(X, \tilde{C}) = O(\OPT)$. 

To conclude, we describe how in time $\tilde{O}(nk/z)$ we can get an estimator $\zeta_C$ of $\phi^{-A z}(X, C)$ for some set of $k$ centers $C$ such that the following two properties are satisfied. 

\begin{align}
    \label{eq:met_prop1}
    \zeta_C \le 4\phi^{-Az}(X,C)\;\;\;\;\;\;\;\;\;\;\text{  with probability $1-1/\poly(n)$}
\end{align}
and

\begin{align}
    \label{eq:met_prop2}
    \zeta_C \ge \frac{1}{4} \phi^{-10Az}(X,C) \;\;\;\;\;\;\;\;\;\; \text{  with probability $1-1/\poly(n)$.}
\end{align}

Then, we can simply estimate the cost of each set of centers $C_i$ and return the one with the smallest estimated cost. With positive constant probability, there is at least one set of centers $C_i$ with  $\phi^{-Az}(X,C_i) = O(\OPT)$. Hence, by \cref{eq:met_prop1} there also exists an estimator $\zeta_{C_i}$, with positive constant probability, such that $\zeta_{C_i} = O(\OPT)$.
In that case, we choose a set of centers $C_\fA$ with $\phi^{-10Az}(X,C_\fA) \le 4\zeta_{C_\fA}  = O(\OPT)$ with constant positive probability, by \cref{eq:met_prop2}. \\

The estimator $\zeta_C$ for $\phi^{-A z}(X,C)$ is constructed as follows. We sample $N = O((n/z)\log^2 n)$ points from $X$ uniformly and independently at random with replacement. We denote the resulting multiset as $X'$. Then, we compute the value $\phi^{-4Az(N/n)}(X',C)$ in time $O(|X'|k) = \tilde{O}(nk/z)$ and set $\zeta_C = \frac{n}{N} \phi^{-4Az(N/n)}(X',C)$. 

We define $X_{in}$ as the closest $Az$ points to the set $C$ and $X_{out} = X \setminus X_{in}$. 
Let us split the points of $X_{in}$ into $\log\Delta$ sets $X_1, X_2, \dots$ such that $x \in X_i$ if $2^i \le \phi(x, C) \le 2^{i+1}$. 
We say a set $X_i$ is \emph{big} if $|X_i| \ge z/\log\Delta$ and \emph{small} otherwise. 
The intuition here is that either a set $X_i$ is big enough so that the number of sampled points from $X_i$ is concentrated around its expectation, or the set is small enough with respect to the number of outliers $z$ we consider. 

More precisely, for every big set $X_i$ we can compute that $\E[|X_i \cap X'|] \geq \frac{z}{n\log\Delta} \cdot \Omega( \frac{n}{z} \log^2(n)) = \Omega(\log n)$, so by \cref{thm:chernoff}, we have 
\begin{align}
    \label{eq:met_concentration}
    \P&\left( \E[|X_i \cap X'|]/2 \le  |X_i \cap X'| \le 2\E[ |X_i \cap X'|]\right) \\
    &\ge 1 - \e^{-\Theta(|X_i \cap X'|)} \nonumber\\
    &=  1 - 1/\poly(n).\nonumber
\end{align} 
Similarly, we can compute
\begin{align}
    \label{eq:met_concentration_outliers}
    \P&\left( \E[|X_{out} \cap X'|]/2 \le  |X_{out} \cap X'| \le 2\E[ |X_{out} \cap X'|] \right) \\
    &\le 1 - \e^{-\Theta(|\Xout \cap X'|)} \nonumber\\
    &=  1 - 1/\poly(n).\nonumber
\end{align}
Furthermore, we have $|\cup_{\text{$i$ is small}} X_i| \leq \log (\Delta) \frac{z}{\log (\Delta)} = z$.
As $\frac{z}{n} \cdot N =  \Omega(\log^2n)$, we can conclude that $X'$ contains at most  $2\frac{z}{n} \cdot N$ points from $\cup_\text{$i$ is small} X_i$ with probability $1 - 1 / \poly(n)$. We now condition on the events from \cref{eq:met_concentration} and \cref{eq:met_concentration_outliers} and we additionally condition on the event that $X'$ contains at most  $2\frac{z}{n} \cdot N$ points from $\cup_\text{$i$ is small} X_i$.

We first prove \cref{eq:met_prop1}. We have $\zeta_C = \frac{n}{N} \phi^{-4Az(N/n)}(X',C) \le 4 \phi^{-Az}(X,C)$. By \cref{eq:met_concentration_outliers}, we have  $|X_{out} \cap X'| \le 2Az(N/n)$, so we can label points in $X_{out} \cap X'$ as outliers. Furthermore, we can additionally label all points in $\left( \cup_{\text{$i$ is small}} X_i \right) \cap X'$ as outliers, as $|\left( \cup_{\text{$i$ is small}} X_i \right) \cap X'| \leq 2z(N/n)$. The number of sampled points from each big $X_i$ is at most $2\left(N/n\right)|X_i|$  by \cref{eq:met_concentration} and the costs of points in $X_i$ differ only by a 2-factor. 

Next, we prove \cref{eq:met_prop2}. Let $z_i'$ be the number of outliers one needs to choose from $X_i \cap X'$ to optimize the cost $\phi^{-4Az(N/n)}(X', C)$. Let us define a set of outliers $X_{out}^\fA$ such that it contains the set $X_{out}$, all points from small sets $X_i$ and $2z_i'(n/N)$ arbitrary points from each big set $X_i$. Then, $|X_{out}^\fA| \le z + z + 2(4Az(N/n))(n/N) \leq 10Az$. To bound the cost $\phi(X \setminus X_{out}^\fA, C)$, we bound the contribution of each set $X_i$. 
We have 
\begin{align*}
(|X_i| - 2z_i'(n/N)) \cdot 2^{i+1}
&\le (2 \cdot \frac{n}{N} |X' \cap X_i| - 2z_i' \frac{n}{N})2^{i+1} && \text{\cref{eq:met_concentration}}\\
&\le 4 \frac{n}{N} 2^i (|X' \cap X_i| - z_i').
\end{align*}
Summing up these contributions, we get $\phi^{-10Az}(X, C) \le 4\frac{n}{N}\phi^{-4Az(N/n)}$, as desired. 
\end{proof}

For the purpose of getting a set of centers that provides an $(O(1), O(1))$-approximation in $\tilde{O}(nk^2/z)$ time, we next provide a simple adaptation of the ``a clustering of a clustering is a clustering'' result of \cite{guha2003streaming}. 

\begin{theorem}
\label{thm:fast_coreset}
Let $X$ be an input set, $\OPT$ the optimal solution to the $k$-means with $z$ outliers objective on $X$ and $\Theta$ such that  $\OPT/(2z) \le \Theta \le \OPT/z$. 
Suppose we have a set $Y, |Y| = O(k)$ such that $\tau_\Theta(X,Y) = O(\OPT)$. Then, in time $\tilde{O}(nk^2/z)$ we can compute a set $C_\fA, |C_\fA|=k$, such that $\tau_\Theta(X,C_\fA) = O(\OPT)$ with positive constant probability. 
\end{theorem}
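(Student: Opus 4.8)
The plan is to instantiate the ``a clustering of a clustering is a clustering'' paradigm of \cite{guha2003streaming}: move every input point to its nearest point of $Y$, obtaining a weighted instance supported on the $O(k)$ points of $Y$, and then run an $O(1)$-approximation for $k$-means with penalty $\Theta$ on that (tiny) instance. The only real work is to perform the ``move'' step \emph{without} spending $\Theta(nk)$ time, which forces us to estimate the weights by uniform subsampling.

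First, the exact reduction. For $x\in X$ let $y(x)\in Y$ be a closest point of $Y$, so $\tau_\Theta(x,y(x))=\tau_\Theta(x,Y)$, and let $X'$ denote the weighted set supported on $Y$ with weight $w(y)=|\{x:y(x)=y\}|$, writing $\tau_\Theta(X',C)=\sum_{y\in Y}w(y)\tau_\Theta(y,C)$. Combining \cref{fact:generalised_triangle_inequality} (with $\eps=1$) and \cref{fact:thresholded_triangle_inequality} gives $\tau_\Theta(x,C)\le 2\tau_\Theta(x,y(x))+2\tau_\Theta(y(x),C)$ for every $x$ and every center set $C$; summing over $x\in X$,
\[
\tau_\Theta(X,C)\;\le\;2\tau_\Theta(X,Y)+2\tau_\Theta(X',C)\;=\;O(\OPT)+2\tau_\Theta(X',C).
\]
The symmetric computation applied to $C=C^*$, together with $\tau_\Theta(X,C^*)\le\phi(X\setminus\Xout^*,C^*)+z\Theta\le\OPT+z\Theta\le 2\OPT$ (using $\Theta\le\OPT/z$, as in \cref{lem:penalties_to_outliers}), yields $\tau_\Theta(X',C^*)=O(\OPT)$; hence the optimum of $k$-means with penalty $\Theta$ on $X'$ is $O(\OPT)$. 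So it would suffice to find $C_\fA$ with $|C_\fA|=k$ and $\tau_\Theta(X',C_\fA)=O(\OPT)$, but building $X'$ exactly costs $\Theta(nk)$ time, which is too slow.

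Next, a fast surrogate. Sample $N=\tilde{O}(nk/z)$ points of $X$ uniformly and independently, assign each to its nearest point of $Y$ in $O(k)$ time (total $O(Nk)=\tilde{O}(nk^2/z)$), and set $\widehat w(y)=\frac nN\cdot|\{\text{sampled }x:y(x)=y\}|$; let $\widehat X'$ be the resulting weighted instance on $Y$ with $\widehat\tau_\Theta(\widehat X',C)=\sum_y\widehat w(y)\tau_\Theta(y,C)$. The key claim is that for every \emph{fixed} center set $C$, with probability $1-2^{-\Omega(k)}n^{-\Omega(1)}$,
\[
\tfrac14\,\tau_\Theta(X',C)-O(\OPT)\;\le\;\widehat\tau_\Theta(\widehat X',C)\;\le\;4\,\tau_\Theta(X',C)+O(\OPT).
\]
To prove it, bucket the points of $Y$ by $\lfloor\log_2\tau_\Theta(y,C)\rfloor$, which gives $O(\log(n\Delta))$ buckets since $1\le\tau_\Theta(y,C)\le\Theta\le\poly(n,\Delta)$; bucket $B$ contributes $\Theta(2^{\mathrm{level}(B)}W_B)$ to $\tau_\Theta(X',C)$, where $W_B=\sum_{y\in B}w(y)$. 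Call $B$ \emph{frequent} if $W_B\ge z/\log(n\Delta)$ and \emph{rare} otherwise. The rare buckets have combined weight $<z$, so their total contribution to $\tau_\Theta(X',C)$ is $<z\Theta=O(\OPT)$, and (by a \cref{thm:chernoff} bound on the number of samples landing in them) so is their contribution to $\widehat\tau_\Theta(\widehat X',C)$ --- this is precisely the frequent/rare dichotomy behind the estimator $\zeta_C$ in the proof of \cref{thm:fast_algorithm}, and it is what lets us use only $\tilde{O}(n/z)$ samples (times a union-bound factor $k$) instead of $\tilde{O}((n/z)^2)$. For each frequent bucket $\mathbf{E}[\#\text{samples in }B]=NW_B/n=\tilde{\Omega}(k)$, so by \cref{thm:chernoff} its sampled count, hence $\widehat w(B)$, is within a factor $2$ of $W_B$ with probability $1-2^{-\Omega(k)}n^{-\Omega(1)}$; summing over the $O(\log(n\Delta))$ buckets proves the claim.

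Finally, putting it together. On $\widehat X'$ --- a weighted instance with only $O(k)$ distinct points --- run a $\poly(k)$-time $O(1)$-approximation algorithm for weighted $k$-means with penalty $\Theta$ that outputs a set $C_\fA$ of $k$ centers among the support points, i.e.\ $C_\fA\subseteq Y$; for concreteness, run \kmpp with penalties to seed followed by $\tilde{O}(k)$ \lsp steps, whose analysis generalizes to $\tau_\Theta$ (cf.\ \cref{sec:paper_localsearch} and \cite{choo2020kmeans}) and which only ever use support points as centers. This $\poly(k)$ cost is dominated by $\tilde{O}(nk^2/z)$ in the regime $k\ll z\ll n$. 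Now apply the concentration claim simultaneously to $C^*$ and to all $\binom{|Y|}{k}=2^{O(k)}$ candidate outputs $C\subseteq Y$ by a union bound --- this is exactly why $N$ carries the extra factor $k$. Then, with positive constant probability, $\widehat\tau_\Theta(\widehat X',C^*)=O(\OPT)$, so $C_\fA$ satisfies $\widehat\tau_\Theta(\widehat X',C_\fA)=O(1)\cdot\opt_{\mathrm{pen}}(\widehat X')=O(\OPT)$; feeding $C_\fA$ back into the claim gives $\tau_\Theta(X',C_\fA)=O(\OPT)$; and hence $\tau_\Theta(X,C_\fA)\le 2\tau_\Theta(X,Y)+2\tau_\Theta(X',C_\fA)=O(\OPT)$, as required. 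The main obstacle is the concentration step: a sample of size $\tilde{O}(nk/z)$ is far too small to estimate the contributions of the up to $z$ points of $Y$ whose cost with respect to a candidate $C$ is near the cap $\Theta$, and the frequent/rare bucketing --- mirroring the $\zeta_C$ argument of \cref{thm:fast_algorithm} --- is what absorbs all of them into a single $O(z\Theta)=O(\OPT)$ additive slack, while the extra factor $k$ in $N$ is exactly what the union bound over the $2^{O(k)}$ possible outputs $C_\fA\subseteq Y$ demands.
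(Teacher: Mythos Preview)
Your argument is correct, but the concentration step takes a noticeably different (and more involved) route than the paper's.

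The paper partitions $Y$ into \emph{big} points ($w(y)\ge z/k$) and \emph{small} points ($w(y)<z/k$) based on the true weights alone---a split that is independent of any candidate center set $C$. A single Chernoff bound per big point (so only $|Y|=O(k)$ events in total) pins down $w(y)/2\le\hat w(y)\le 2w(y)$ with probability $1-1/\poly(n)$, while the small points have combined weight at most $z$ and hence contribute at most $z\Theta=O(\OPT)$ to $\tau_\Theta((Y,w),C)$ for \emph{every} $C$ simultaneously; for the estimated weights the paper just applies Markov's inequality to $\sum_{\text{small}}\hat w(y)$, which already suffices for the constant-probability guarantee. Two short triangle-inequality chains then give $\tau_\Theta((Y,\hat w),C^*)=O(\OPT)$ and $\tau_\Theta(X,C_\fA)=O(\OPT)$. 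No restriction $C_\fA\subseteq Y$ is needed, and no union bound over exponentially many candidate outputs is taken.

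Your route instead buckets $Y$ by $\lfloor\log_2\tau_\Theta(y,C)\rfloor$, so the bucketing depends on $C$; this forces you to union-bound over the $2^{O(k)}$ possible outputs $C\subseteq Y$ and to insist that the inner algorithm return centers from $Y$. It is heavier machinery, but it does buy you a high-probability guarantee (the paper's Markov step only yields constant probability), so each route has something to recommend it. One small technicality: your range assumption $\tau_\Theta(y,C)\ge 1$ relies on all pairwise distances in $X$ being at least $1$, which is fine for $C\subseteq Y\subseteq X$, but in the Euclidean setting $C^*$ need not lie in $X$ and $\tau_\Theta(y,C^*)$ can be arbitrarily small. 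Since you only need the \emph{upper} tail for $C^*$, this is easily patched by a direct multiplicative Chernoff bound on the $[0,1]$-valued summands $\tau_\Theta(y(s_i),C^*)/\Theta$ (whose expected sum is $N\tau_\Theta(X',C^*)/(n\Theta)=\tilde\Omega(k)$), bypassing the bucketing for that one set.
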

\begin{proof}
Let $C^*$ be the clustering of an optimal solution. Recall that $\tau_\Theta(X, C^*) \le \OPT + z\Theta = O(\OPT)$. 
We subsample $N = \Theta(\frac{nk}{z}\log n)$ points uniformly and independently with replacement from $X$, thus obtaining a multiset of points $X'$. 
For each point $x \in X'$ we compute its distance to $Y$. This is done in time $\tilde{O}(nk^2/z)$. For $y \in Y$ we define $B'(y)$ as the set of points $x \in X'$ for which $y$ is the closest center and $B(y)$ as the set of points $x \in X$ for which $y$ is the closest center. Moreover, we define $\hat{w}(y) = (n/N)|B'(y)|$ and $w(y) = |B(y)|$. Note that $\E[\hat{w}(y)] = w(y)$. We denote with $y_x$ the point $y \in Y$ for which $x \in B(y)$. 
For each $y \in Y$ we either have $w(y) < z/k$, and then we call the respective set $B(y)$ \emph{small}, or $w(y) \ge z/k$ and the respective set $B(y)$ is called \emph{big}. For big sets we have  $\E[|B'(y)|] \geq (N/n) \cdot z/k = \Omega(\log n)$. 
In this case, an application of Chernoff bounds (\cref{thm:chernoff}) gives that
\begin{align}
\label{eq:avicii}
w(y)/2 \le \hat{w}(y) \le 2w(y)
\end{align}
with probability $1 - \e^{-\Omega(\E[|B'(y)|])} = 1 - 1/\poly(n)$. We condition that this event happens for all $y \in Y$. 
We know that $\sum_{\text{$y$ small}} w(y) = O(z)$, so $\E[\sum_{\text{$y$ small}}\hat{w}(y)] = O(z)$ and by Markov's inequality $\sum_{\text{$y$ small}}\hat{w}(y) = O(z)$ with constant probability. Again, we condition on this event.

We now run some $O(1)$-approximate algorithm for $k$-means with penalties on the weighted instance $(Y, \hat{w}(y))$ that runs in $\tilde{O}(n'k)$ time for input of size $n'$ (e.g., \cref{alg:noisylocalsearch} can be interpreted as such an algorithm by the easy part of the analysis captured by  \cref{lem:phase1};  the hard part of the analysis -- \cref{lem:phase2} -- then talks about the guarantees of the algorithm for $k$-means with outliers). 
As our instance has $n' = \tilde{O}(nk/z)$ points, the time complexity of the algorithm will be $\tilde{O}(nk^2/z)$. The algorithm returns a set $C_\fA$ of $k$ centers that we return. 
We have
\begin{align}
\label{eq:elmundo}
    &\tau_\Theta((Y, \hat{w}), C^*)\\
    &\le \sum_{\text{$y$ small}} \hat{w}(y) \Theta + \sum_{\text{$y$ big}} \hat{w}(y) \tau_\Theta(y, C^*)\nonumber\\
    &\le O(z)\cdot \Theta + 2\sum_{\text{$y$ big}} w(y) \tau_\Theta(y, C^*)&& \text{\cref{eq:avicii}}\nonumber\\
    &\le O(\OPT) + 4 \sum_{\text{$y$ big}} \sum_{x\in B(y)} \tau_\Theta(y, x) + \tau_\Theta(x, C^*)&&\text{\cref{fact:generalised_triangle_inequality}}\nonumber\\
    &\le O(\OPT) + O(\tau_\Theta(X,Y)) + O(\tau_\Theta(X, C^*))
    = O(\OPT).\nonumber
\end{align}
Hence, 
\begin{align*}
    \tau_\Theta(X, C_\fA) 
    &= \sum_{x \in X} \tau_\Theta(x, C_\fA)\\
    &\le 2 \sum_{x \in X} (\tau_\Theta(x, y_x) + \tau_\Theta(y_x, C_\fA) ) &&\text{\cref{fact:generalised_triangle_inequality}} \\
    &= 2\tau_\Theta(X, Y) + 2\tau_\Theta((Y,w), C_\fA)\\
    &\le O(\OPT) + 2\sum_{\text{$y$ small}} w(y) \tau_\Theta(y, C_\fA) + 2\sum_{\text{$y$ big}} w(y) \tau_\Theta(y, C_\fA) \\
    &\le O(\OPT) + O(z)\cdot \Theta + 4\sum_{\text{$y$ big}} \hat{w}(y) \tau_\Theta(y, C_\fA)&& \text{\cref{eq:avicii}}\\
    &= O(\OPT) + O(\tau_\Theta((Y,\hat{w}), C^*)) = O(\OPT),&&\text{\cref{eq:elmundo}}
\end{align*}
as needed. 
\end{proof}

\begin{remark}
We believe, but do not prove, that refining the above proofs and using the results from \cref{sec:appendix_localsearch,sec:appendix_distributed}, we can refine \cref{thm:fast_algorithm} to achieve $(O(1/\poly(\eps)), 1+\eps)$-approximation algorithm with running time $\tilde{O}(nk^2/(z\poly(\eps)))$. 
\end{remark}

\section{Lower bounds}
\label{sec:appendix_lower_bounds}

	In the following, we consider algorithms for the $k$-means/$k$-median/$k$-center with outliers problems in the general metric space query model. The algorithm is only required to output a set of $k$ centers and does not need to explicitly declare points as outliers.

\begin{theorem}[Formal version of \cref{thm:lower_bound_informal}]
	\label{thm:lower_bound_formal}
	Let $\mathcal{A}$ denote a (randomized) algorithm for the $k$-means/$k$-median/$k$-center problem with outliers with a query complexity of $o(nk^2/z)$.  Then, for large enough values $k$, $z$, and $n$ such that $10000k \log k \leq z \leq \frac{1}{10000}n$, there exists a concrete input instance for the $k$-means/$k$-median/$k$-center problem with parameters $k$, $z$ and $n$ such that $\mathcal{A}$ outputs an $(\alpha, \beta)$-approximation with a probability  less than $0.5$ for every $\alpha,\beta \in O(1)$.
\end{theorem}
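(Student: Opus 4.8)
The plan is to prove Theorem~\ref{thm:lower_bound_formal} via Yao's minimax principle: I construct a distribution over hard input instances and show that any \emph{deterministic} algorithm making $o(nk^2/z)$ queries fails to produce an $(O(1),O(1))$-approximation with probability at least $1/2$ over the random instance. The instance family is the one sketched after Theorem~\ref{thm:lower_bound_informal}: take $k/2$ ``big'' clusters each of size roughly $2z/k$ and $k/2$ ``small'' clusters each of size roughly $2z/k$ as well, wait — more precisely, small clusters together hold $\Theta(z)$ points so each small cluster has $\Theta(z/k)$ points, while big clusters hold the remaining $\Theta(n)$ points so each has $\Theta(n/k)$ points. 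All $k$ cluster ``cores'' are mutually at distance $1$ (for $k$-center) or the metric is set up so that inter-cluster distances dwarf intra-cluster distances (points within a cluster at distance $0$, or a tiny $\epsilon$); the points inside a cluster sit on top of each other. The optimal solution places one center in each of the $k$ clusters and declares... no: with $z$ outliers one cannot cover all small clusters if small clusters contain more than $z$ points total — so instead, calibrate so that $\OPT$ with $k$ centers and $z$ outliers is $0$ (or tiny) by placing a center in every big cluster and every small cluster. An $(O(1),O(1))$-approximate solution with $O(z)$ outliers must therefore still ``hit'' all but $O(z)$ of the small-cluster points, hence must place centers in $\Omega(k)$ distinct small clusters.

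\textbf{Key steps, in order.} First, fix the metric: define a random instance by randomly partitioning the $n$ point-labels into $k$ blocks of the appropriate sizes and randomly designating $k/2$ of them small and $k/2$ big; the distance oracle returns $0$ if two queried labels lie in the same block and $1$ otherwise. Since the algorithm only sees labels, each query reveals only whether two points are co-clustered. Second, argue the \emph{information-theoretic core}: to ``certify'' that a given point lies in a small cluster (which is what the algorithm needs in order to safely place a center there), the algorithm essentially must query that point against representatives of $\Omega(k)$ already-seen clusters — because a point could belong to any of the $\Theta(k)$ clusters not yet ruled out, and a random such point is small with probability only $O(z/n)$. Formalize this as: the subgraph of ``queried pairs'' restricted to a fixed block must, for the algorithm to know the block is small rather than big, contain enough edges; more carefully, run a potential/adversary argument where I maintain the set of query answers consistent with many completions. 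Third, the counting: to discover $\Omega(k)$ small clusters, by a coupon-collector-type argument the algorithm must ``probe'' $\Omega(k)$ points that turn out small; to find each such point it probes $\Omega(n/z)$ random points in expectation (fraction $O(z/n)$ are small); and to resolve each probed point as small-vs-big costs $\Omega(k)$ queries. Multiplying, $\Omega(k)\cdot\Omega(n/z)\cdot\Omega(k)=\Omega(nk^2/z)$. Fourth, convert expectation bounds to a ``fails with probability $\ge 1/2$'' statement via Markov/Chernoff, using the slack $z \ge 10000 k\log k$ (to make the coupon-collector and Chernoff steps concentrate, the $\log k$ ensures each small cluster has $\gg \log k$ points so losing $O(z)$ points still leaves $\Omega(k)$ small clusters wholly untouched) and $z \le n/10000$ (to ensure big clusters genuinely dominate the point count, so a random point is big w.h.p.). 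Finally, note the argument is metric-agnostic among $k$-means/$k$-median/$k$-center since all intra-cluster distances are $0$ and all inter-cluster distances are $1$, so the three objectives coincide on these instances up to constants; then invoke Yao to pass from the hard distribution to the existence of one fixed hard instance for the randomized $\mathcal{A}$.

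\textbf{The main obstacle} I expect is making rigorous the step ``resolving whether a probed point is small costs $\Omega(k)$ queries,'' because the algorithm is adaptive and might be clever: e.g., it could query a point against a carefully chosen small set of ``landmark'' points and, if all answers are $1$ (different cluster), gain partial information. The fix is a careful adversary/coupling argument: I fix the random choice of which labels are small/big \emph{last}, reveal only the co-clustering partition gradually, and show that until a block accumulates $\Omega(k)$ internal query-edges (equivalently, until $\Omega(k)$ of its points have been ``linked'' together) the conditional probability that the block is small versus big is bounded away from $0$ and $1$, so the algorithm cannot safely commit a center there without risking a large multiplicative cost blowup. Quantitatively, if the algorithm places a center in a block it has not certified, with constant probability that block is big and contains a $1$-fraction of the mass of some uncovered small cluster's worth of... — I make this precise by a direct union bound over blocks. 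A secondary technical nuisance is handling the algorithm's budget for \emph{wrong guesses}: it may output up to $O(z)$ outliers, so it is allowed to ``miss'' $O(z)$ points; I absorb this by choosing constants so that $\Omega(k)$ small clusters each of size $\Omega(z/k)\gg$ the per-cluster miss budget must be hit, which is exactly where $z\ge 10000k\log k$ is used. I will organize the proof as: (i) instance construction and reduction to deterministic algorithms; (ii) the ``certification costs $\Omega(k)$ queries'' lemma; (iii) the ``must certify $\Omega(k)$ small clusters'' lemma; (iv) the counting and concentration to conclude.
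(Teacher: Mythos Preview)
Your proposal is correct and follows essentially the same route as the paper: Yao's principle, the same hard-instance family ($k/2$ big clusters sharing $\Theta(n)$ points and $k/2$ small clusters sharing $\Theta(z)$ points, all at pairwise inter-cluster distance $1$ and intra-cluster distance $0$), and the same core insight that a point queried fewer than $\Omega(k)$ times with all answers ``different cluster'' is, conditionally, far more likely to lie in a big cluster than a small one.

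The implementations diverge in two places, and the paper's choices are somewhat cleaner. First, rather than randomly partitioning labels into blocks, the paper samples each of the $n$ points \emph{i.i.d.} from a fixed distribution over the $k$ locations (small with probability $\Theta(z/(nk))$, big with probability $\Theta(1/k)$); independence makes the conditional-probability lemma a three-line Bayes computation instead of an adversary/coupling argument. Second, instead of your multiplicative accounting $\Omega(k)\cdot\Omega(n/z)\cdot\Omega(k)$, the paper directly upper-bounds $\E[\#\text{small centers output}]$ by splitting the output centers into three buckets: (i) those queried $<0.1k$ times, all uninformative (at most $k$ of them, each small with probability $O(z/n)$); (ii) those queried $\ge 0.1k$ times with the first $0.1k$ uninformative (at most $O(T/k)$ such points exist at all, each small with probability $O(z/n)$ at the moment of the $0.1k$-th query); (iii) those hitting an informative query within their first $0.1k$ queries (bounded by the total number of such informative queries, whose expectation is $O(T\cdot z/(nk))$). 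With $T=o(nk^2/z)$ each bucket contributes $o(k)$, Markov gives $<0.3k$ small centers with probability $>1/2$, and then $0.2k$ missed small clusters of size $\Omega(z/k)$ each force $>\beta z$ uncovered points. This sidesteps the need to formalize ``certification'' or to argue about what the algorithm must \emph{do}; it only analyzes what the algorithm \emph{outputs}. Your plan would work, but the bucket decomposition is shorter and avoids the adaptive-adversary subtlety you flagged as the main obstacle.
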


If one does not assume that the ratio between the maximum and minimum distance is bounded, one can show that the statement even holds for an arbitrary multiplicative factor $\alpha$. For simplicity, we assume in the proof that $\beta = 2$, but the proof works for an arbitrary constant $\beta$.

\begin{proof}
We roughly follow the proof from \cite{mettu2002approximation}. 
Applying Yao's principle\cite{yao1977probabilistic}, it suffices to provide a distribution over metric spaces such that any deterministic algorithm needs $\Omega(nk^2/z)$ queries to succeed with probability $0.5$. 
	
To that end, let $k,z$ and $n$ be given such that $10000k\log k \leq z \leq \frac{1}{10000}n$ and we assume that $k$ is sufficiently large. Let $C = \{c_1,\ldots,c_k\}$ be a set of points such that the distance between any two points in $C$ is $1$. 
Next, we define a probability distribution $\fD$ over $C$. Our  random instance then consists of a multiset of $n$ points $X = \{x_1,\ldots,x_n\}$ with each point being sampled independently according to $\fD$. More precisely, we call $S = \{c_1, \dots, c_{k/2}\}$ the set of \emph{small} points and $B = \{c_{k/2+1}, \dots, c_k\}$ the set of \emph{big} points (for simplicity, we assume that $k$ is even). 
The probability of sampling each small point is set to $\frac{100z}{n\cdot k/2}$ and the probability of sampling each big point is set to $\frac{n-100z}{n\cdot k/2}$. Note that $\frac{k}{2} \cdot \frac{100z}{n \cdot k/2} + \frac{k}{2} \cdot \frac{n-100z}{n\cdot k/2} = 1$. Note that it can (and will) happen that we sample the same point twice, but the algorithm does not know it unless it asks the oracle $\fM$ that for a query $(x,y)$ returns the distance $d(x,y)$ of the two points in the constructed metric space. 
One could perturb this instance slightly in order to get an instance with a lower bound on the minimum distance between two points. We refer to the multiset of points that are a clone of $c_j$ as a small cluster if $j \in [k/2]$ and as a big cluster otherwise. We denote with $E_{i,j}$ the event that the $i$-th point is a clone of $c_j$.\\
	
We consider an arbitrary deterministic algorithm $\fA$ that asks at most $T = \frac{nk^2}{100000z}$ queries to the distance oracle $\fM$. We refer to a query as \emph{informative} if the distance between the two queried points is $0$ and \emph{uninformative} otherwise. We first start by proving a helper lemma.

\begin{lemma}
\label{lem:helper}
	 Assume that among the first $t$ queries of $\fA$, at most $0.1k$ queries involved the $i$-th point and all of those queries were uninformative. Then, conditioned on the first $t$ queries and the answer to those queries, the probability that the $i$-th point belongs to some arbitrary \emph{fixed} small cluster with index $j_{small} \in [k/2]$ is at most $\frac{500z}{kn}$. Moreover, the statement still holds if we additionally condition on an arbitrary cluster assignment of all points, except the $i$-th one, that is consistent with the oracle answers to the first $t$ queries.
\end{lemma}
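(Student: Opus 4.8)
The plan is to prove a stronger statement by conditioning on \emph{more} information than the lemma asks for, and then average. For $i\in[n]$ let $A_i\in[k]$ be the random archetype index, i.e.\ the $j$ for which $E_{i,j}$ occurs, so the $A_i$ are i.i.d.\ with $\Pr[A_i=j]=\tfrac{200z}{nk}$ for small $j\le k/2$ and $\Pr[A_i=j]=\tfrac{2(n-100z)}{nk}\le\tfrac2k$ for big $j>k/2$. Since $\fA$ is deterministic, conditioning on "the first $t$ queries together with their answers" is the same as conditioning on the event $\fE$ that, along $\fA$'s forced query path, the observed answer string occurs; under $\fE$ the set $\{(i,w_1),\dots,(i,w_m)\}$ of those first $t$ queries that involve the $i$-th point is fixed, $m\le 0.1k$, and all $m$ of these queries are answered $1$ (uninformative).

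\textbf{Over-conditioning.} Fix any archetype assignment $\mathbf a_{-i}=(a_j)_{j\ne i}$ of all points other than the $i$-th that is consistent with the oracle's answers on the queries not involving $i$. The key step is the set identity
\[
\fE\cap\{A_j=a_j\ \forall j\ne i\}\;=\;\{A_i\notin\{a_{w_1},\dots,a_{w_m}\}\}\cap\{A_j=a_j\ \forall j\ne i\}.
\]
For "$\supseteq$": if $A_i$ avoids $\{a_{w_1},\dots,a_{w_m}\}$, every answer on the realized path agrees with the observed one (answers of queries not involving $i$ are determined by $\mathbf a_{-i}$ and agree by consistency; each query touching $i$ answers $1$), so the adaptive algorithm reproduces exactly that path and $\fE$ holds. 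For "$\subseteq$": if instead $A_i=a_{w_l}$ for some $l$, then walking down the realized query path, the first query touching $i$ whose other endpoint is a clone of $A_i$ gets answer $0$ rather than the observed $1$, so $\fA$ branches away and the realized path does not occur, i.e.\ $\fE$ fails. Since $A_i$ is independent of $(A_j)_{j\ne i}$, this yields
\[
\Pr\bigl[A_i=j_{small}\ \big|\ \fE,\ A_j=a_j\ \forall j\ne i\bigr]
=\Pr\bigl[A_i=j_{small}\ \big|\ A_i\notin\{a_{w_1},\dots,a_{w_m}\}\bigr]
\le\frac{200z/(nk)}{1-\sum_{l=1}^m\Pr[A_i=a_{w_l}]}.
\]
Each term in the sum is at most $\tfrac2k$ and $m\le 0.1k$, so the denominator is at least $0.8$ and the right-hand side is at most $\tfrac{250z}{nk}\le\tfrac{500z}{nk}$; if $j_{small}$ itself lies in $\{a_{w_1},\dots,a_{w_m}\}$ the left-hand side is $0$ and there is nothing to prove.

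\textbf{Finishing, and the main obstacle.} The displayed inequality is precisely the "moreover" part of the lemma. The main part follows by averaging over the extra conditioning:
\[
\Pr[A_i=j_{small}\mid\fE]=\sum_{\mathbf a_{-i}}\Pr\bigl[(A_j)_{j\ne i}=\mathbf a_{-i}\mid\fE\bigr]\cdot\Pr\bigl[A_i=j_{small}\mid\fE,\,(A_j)_{j\ne i}=\mathbf a_{-i}\bigr]\le\frac{500z}{nk},
\]
since only $\mathbf a_{-i}$ consistent with the answers carry positive weight and each conditional probability is bounded as above. I expect the set identity of the second paragraph to be the step requiring the most care in a full write-up: one must argue that adaptivity of $\fA$ cannot let a "forbidden" value of $A_i$ sneak back onto the realized path, and this is exactly where both hypotheses are used --- without "all uninformative" a single informative query touching $i$ could pin $A_i$ down, and without "at most $0.1k$" the forbidden set $\{a_{w_1},\dots,a_{w_m}\}$ could carry total mass close to $1$, ruining the denominator bound.
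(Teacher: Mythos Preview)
Your proposal is correct and follows essentially the same approach as the paper: both arguments over-condition on the full assignment $R_{-i}$ of all points except the $i$-th, observe that under this conditioning the only constraint on $A_i$ is that it avoid the (at most $0.1k$) archetypes revealed by the uninformative queries touching $i$, and then bound the resulting conditional probability using the prior on small clusters. The paper phrases the final bound as a Bayes-rule ratio against a surviving big cluster (yielding denominator $\ge 0.4k\cdot\P[E_{i,j_{big}}]$), whereas you directly bound the mass of the forbidden set (yielding denominator $\ge 0.8$); these are two ways of writing the same estimate, and your treatment of the adaptive query path via the set identity is if anything more carefully justified than the paper's.
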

\begin{proof}
	Let $Q_t$ denote the event corresponding to the observed interaction between $\fA$ and the distance oracle restricted to the first $t$ queries. To prove the lemma, we will not only condition on $Q_t$, but we additionally fix the randomness of all points except the $i$-th one in an arbitrary way. We denote the corresponding event by $R_{-i}$ and we assume that $\P[Q_t \cap R_{-i}] > 0$. Among the first $t$ queries, the $i$-th point was queried at most $0.1k$ times and all queries were uninformative. Hence, conditioned on $Q_t$ and $R_{-i}$, there are at least $0.5k - 0.1k = 0.4k$ big clusters that the $i$-th point could still be contained in. Let $j_{big}$ denote an arbitrary index of one of those big clusters. Using Bayes rule, we have

	\begin{align*}
		\P[E_{i,j_{small}}|Q_t \cap R_{-i}] & \leq \frac{\P[E_{i,j_{small}}|Q_t \cap R_{-i}]}{0.4k \cdot \P[E_{i,j_{big}}|Q_t \cap R_{-i}]} \\
		&= \frac{\P[Q_t \cap R_{-i}|E_{i,j_{small}}]\P[E_{i,j_{small}}]}{0.4k \cdot \P[Q_t \cap R_{-i}|E_{i,j_{big}}]\P[E_{i,j_{big}}]} \\
		&= \frac{\P[R_{-i}|E_{i,j_{small}}]\P[E_{i,j_{small}}]}{0.4k \cdot \P[R_{-i}|E_{i,j_{big}}]\P[E_{i,j_{big}}]} \\
		&= \frac{\P[E_{i,j_{small}}]}{0.4k\P[E_{i,j_{big}}]} \\
		&= \frac{100z}{0.4k \cdot(n-10z)} \\
		&\leq \frac{500z}{k n}.
	\end{align*}

\end{proof}

Let $\fC_{\fA}$ denote the set of centers that $\fA$ outputs. To simplify the notation, we assume that $\fC_{\fA}$ is a set of $k$ indices between $1$ and $n$ instead of $k$ points. We define $\fC_{\fA,small} = \fC_{\fA} \cap [k/2]$ as the set of indices corresponding to small clusters. The main ingredient of our lower bound argument is to show that $\E[|\fC_{\fA,small}|]$ is small, namely at most $0.1k$. To that end, we partition $\fC_{\fA,small} = \fC_1 \sqcup \fC_2 \sqcup \fC_3$ into three different sets

\begin{align*}
	&\fC_1 := \{i \in \fC_{\fA,small} : \text{the $i$-th point was involved in less than $0.1k$ queries,}\\
	&\text{all being uninformative}\}, \\
	&\fC_2 := \{i \in \fC_{\fA,small} : \text{the $i$-th point was involved in at least $0.1k$ queries,}\\
	&\text{ the first $0.1k$ of those being uninformative}\}, \\
	&\fC_3 := \{i \in \fC_{\fA,small} : \text{Among the first $0.1k$ queries the $i$-th point was involved with,}\\
	&\text{ at least one was informative}\}.
\end{align*}

We prove that the expected size of all three sets is small. We start by bounding $\E[|\fC_1|]$. To that end, let $Q$ denote the event corresponding to the complete observed interaction between $\fA$ and the oracle. As $\fA$ is deterministic, conditioned on $Q$, the set $\fC_{\fA}$ is completely determined. Consider some arbitrary $i \in \fC_{\fA}$ such that the $i$-th point was involved in at most $0.1k$ queries, all being uninformative. \cref{lem:helper} together with a union bound over all small clusters implies that 
\begin{align}
\label{eq:bob_marley}
\P[\text{the $i$-th point belongs to a small cluster} \mid  Q] \leq 0.5k \cdot \frac{500z}{  kn} = \frac{250z}{n}.
\end{align}
Hence, using linearity of expectation, we obtain $\E[|\fC_1|] \leq k \cdot \frac{250z}{n} \leq 0.025k$.

Next, we bound the expected size of $\fC_2$. Note that there can be at most $2 \cdot \frac{nk^2/(100000z)}{0.1k} = \frac{nk}{5000z}$ points that are involved in at least $0.1k$ queries, the first $0.1k$ of those being uninformative. Among those points, the expected fraction  of points belonging to small clusters is at most $\frac{250z}{n}$. This is again a consequence of \cref{eq:bob_marley}, as at the moment the $i$-th point was queried exactly $0.1k$ times, all queries being uninformative, the probability that the point belongs to a small cluster is at most $\frac{250z}{n}$. Hence, we can conclude that $\E[|\fC_2|] \leq \frac{250z}{n} \cdot \frac{nk}{5000z} = 0.05k$. 

Thus, it remains to bound $\E[|\fC_3|]$. To that end, for $t \in [T]$, let $X_t$ denote the indicator variable for the following event: The $t$-th query of $\fA$ is informative, involves a point that was queried less than $0.1k$ times before and all those previous queries were uninformative. 
Again, we use \cref{lem:helper} to obtain that $\E[X_t] \leq \frac{500z}{k \cdot n}$. As $|\fC_3| \leq 2 \cdot \sum_{t \in T} X_t$, linearity of expectation implies that $$\E[|\fC_3|] \leq 2 \frac{nk^2}{100000z} \cdot \frac{500z}{k \cdot n} \leq 0.01k.$$ 

Putting things together, we obtain that $\E[|\fC_{\fA,small}|] \leq 0.1k$. By Markov's inequality, this implies that $\P(|\fC_{\fA,small}| \geq 0.3k) \leq 1/3$. As the expected size of each small cluster is $\frac{200z}{k} \geq 200\log k$, a Chernoff Bound followed by a Union Bound implies that each small cluster contains at least $100z/k$ points with a probability of at least $0.9$. Hence, with a probability of at least $0.9 - 1/3 > 0.5$, each small cluster contains at least $100z/k$ points and $\fA$ outputs at most $0.3k$ centers belonging to small clusters. As $(0.5k - 0.3k) \cdot \frac{100z}{k} > 2z$, this implies that the clustering cost with respect to $\fC_{\fA}$ is not zero, unlike the clustering cost of an optimal solution. 
This finishes the proof.
\end{proof}

\begin{remark}
Note that the same construction can be used to give an $\Omega(k^2/\eps)$ lower bound for $k$-means algorithms that output a solution $C$ with $\phi(X,C) = O(\phi(X, C^*) + \eps\phi(X, \mu(X))$. This shows that the complexity of the \kmc algorithm \cite{bachem2016fast} is essentially the best possible. We omit details. 
\end{remark}

\section{Distributed algorithms}
\label{sec:appendix_distributed}

In this section, we show how to adapt classical streaming algorithm of \cite{guha2003streaming} and popular distributed algorithm \kmpipe to the setting with outliers by extension of \cref{thm:aggarwal}. 
In \cref{sec:appendix_distributed_sites}, we show two ways of, roughly speaking, distributing \cref{thm:aggarwal}. Then, in \cref{sec:appendix_distributed_coordinator}, we show how this set can be used to get distributed algorithms that output $(1+\eps)z$ outliers and are $O(1/\eps)$ or even $O(1)$-approximation. 

The most natural distributed model for the following algorithms is the coordinator model. In this model, the data $X = X_1 \sqcup X_2 \sqcup \dots \sqcup X_m$ is evenly split across $m$ machines. In 1-round distributed algorithm such as \cref{alg:streaming}, the machines perform some computation on their data, and send the result to the coordinator who computes the final clustering. In $t$-round distributed algorithm such as \cref{alg:kmpipe_seeding}, there are $t$ rounds of communication between machines and the coordinator. 

\subsection{Distributed algorithms with bicriteria guarantee}
\label{sec:appendix_distributed_sites}

In this section, we start by presenting \cref{alg:streaming}, a simple variant of the algorithm of \cite{guha2003streaming}. In this algorithm, each machine runs \cref{alg:kmp} with oversampling and sends the resulting clustering, together with some additional information, to the coordinator. Analysis of this algorithm follows from our \cref{thm:aggarwal}. 

\begin{algorithm}[htb]
    \caption{Overseeding from Guha et al. \cite{guha2003streaming}}
   \label{alg:streaming}
   {\bfseries Require} data $X$, threshold $\OPT/(2\eps) \le \Theta \le \OPT/\eps$
\begin{algorithmic}[1]
    \STATE split $X$ arbitrarily into sets $X_1, X_2, \dots, X_{m}$. 
    \FOR{$j \leftarrow 1, 2, \dots, m$ in parallel}
        \STATE $Y_j \leftarrow $ \cref{alg:kmp} with $X_{A\text{\ref{alg:kmp}}} = X_j$, $\ell_{A\text{\ref{alg:kmp}}} = \tilde{O}(k/\eps)$, and $\Theta_{A\text{\ref{alg:kmp}}} = \Theta$. 
        \STATE For each $y \in Y_j$, let $w(y)$ be the number of points $x \in X_j, \tau_\Theta(x, Y_j)< \Theta$, such that $y = \argmin_{y' \in Y_j} \tau_\Theta(x, y')$ with ties handled arbitrarily. Let $X_{out, j}$ be the set of points $x\in X_j$ with $\tau(x, Y_j) = \Theta$.  
    \ENDFOR
    \STATE Each site $j$ sends $(Y_j, w), |\bigcup_j X_{out, j}|$ to the coordinator, who computes the weighted set $(Y, w) = \bigcup_{j = 1}^{m} (Y_j, w)$ and the cardinality of $X_{out} = \bigcup_j X_{out, j}$ as $|X_{out}| = \sum_j |X_{out, j}|$.  
\end{algorithmic}
\end{algorithm}

Here, by weighted set $(Y, w)$ we mean a set $Y = \{y_1, y_2, \dots, y_{|Y|}\}$ together with a weighting function $w: \{1, 2, \dots, |Y|\} \rightarrow \mathbb{N}$. Set operations can be naturally extended to the weighted setting and in the next subsection we observe that $k$-means algorithms too.

\begin{theorem}
\label{thm:streaming}
Let $\OPT/(2\eps z) \le \Theta \le \OPT/(\eps z)$. 
Suppose we run \cref{alg:streaming} on $m$ machines with $\ell = O(k/\eps \log(m/\delta))$ and let $Y := \bigcup_j Y_j$. Then with probability $1 - \delta$ we have
\[
\sum_{j=1}^{m} \tau(X_j \cap X^*_{in}, Y_j) =  O(\OPT). 
\]
\end{theorem}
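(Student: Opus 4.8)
The plan is to run the analysis of \cref{thm:aggarwal} on each machine separately, but to distribute the outlier budget $z$ and the optimal cost $\OPT$ across the machines so that the per‑machine error terms sum to $O(\OPT)$ rather than blowing up by a factor of $m$.

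Fix a global optimal solution $(C^*, X^*_{out})$ with clusters $X^*_1, \dots, X^*_k$. For each machine $j$, set $X_{j,i}^* := X_j \cap X_i^*$, let $z_j := |X_j \cap X^*_{out}|$, and let $\OPT_j := \phi(X_j \cap X^*_{in}, C^*) = \sum_i \phi(X_{j,i}^*, c_i^*)$. Since $X_1, \dots, X_m$ partition $X$ we have $\sum_j z_j = z$ and $\sum_j \OPT_j = \OPT$. I would then observe that the proof of \cref{thm:aggarwal} applies to the run of \cref{alg:kmp} on $X_j$ essentially verbatim, with $X_i^*$ replaced by $X_{j,i}^*$, the optimal outliers replaced by $X_j \cap X^*_{out}$ (of which there are $z_j$), and $\OPT$ replaced by $\OPT_j$. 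The only point worth checking is the proof's closing line, where it bounds the cost of an all‑settled configuration by $10\sum_i\phi(X_i^*,\mu(X_i^*)) = 10\OPT$; in our setting this becomes $10\sum_i\phi(X_{j,i}^*,\mu(X_{j,i}^*)) \le 10\sum_i\phi(X_{j,i}^*,c_i^*) = 10\OPT_j$ using \cref{fact:mean_minimizes}, an inequality rather than an equality, which is exactly what we need. All the remaining ingredients — sampling from $X_j\cap X^*_{in}$ with probability $\Omega(\eps)$ using $\tau_\Theta(X_j\cap X^*_{out}, Y_j) \le z_j\Theta$, hitting an unsettled cluster with probability at least $1/2$, settling it via \cref{cor:10apx} with probability at least $1/5$, and the Chernoff concentration (\cref{thm:chernoff}) over the $\ell$ iterations — carry over unchanged. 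Thus running \cref{alg:kmp} on $X_j$ for $\ell = O\!\left(\frac{k}{\eps}\log\frac{m}{\delta}\right)$ steps yields, with probability at least $1-\delta/m$,
\[
\tau_\Theta(X_j \cap X^*_{in}, Y_j) = O(\OPT_j + \eps z_j \Theta).
\]

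Taking a union bound over the $m$ machines makes this hold simultaneously for all $j$ with probability at least $1-\delta$; summing and using $\sum_j\OPT_j=\OPT$, $\sum_j z_j=z$, and $\Theta\le\OPT/(\eps z)$ gives
\[
\sum_{j=1}^m \tau_\Theta(X_j\cap X^*_{in}, Y_j) = O\!\left(\sum_j \OPT_j + \eps\Theta\sum_j z_j\right) = O(\OPT + \eps z\Theta) = O(\OPT),
\]
which is the claim.

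I do not anticipate a real obstacle — the statement is a "distributed bookkeeping" wrapper around \cref{thm:aggarwal}. The one thing one must be careful not to do is to bound the number of outliers on machine $j$ by the global bound $z$: that would contribute $\eps z\Theta=\Theta(\OPT)$ per machine, hence $\Theta(m\cdot\OPT)$ in total, and destroy the bound. Tracking machine $j$'s actual share $z_j$ of the outlier budget (and, symmetrically, its share $\OPT_j$ of the optimum) is precisely what makes the per‑machine error terms telescope to $O(\OPT)$.
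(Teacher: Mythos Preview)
Your proposal is correct and follows the same route as the paper: apply \cref{thm:aggarwal} to each machine with the local outlier count $z_j = |X_j \cap X^*_{out}|$ and local optimum $\OPT_j$, take a union bound over the $m$ machines (hence the $\log(m/\delta)$ in $\ell$), and sum the per-machine bounds so that $\sum_j \OPT_j = \OPT$ and $\sum_j z_j = z$ telescope to $O(\OPT)$. Your explicit check that the all-settled bound becomes $10\sum_i \phi(X_{j,i}^*, \mu(X_{j,i}^*)) \le 10\,\OPT_j$ via \cref{fact:mean_minimizes}, and your warning against using the global $z$ on each machine, are exactly the two points that need attention; the paper's proof is terser (it invokes \cref{thm:aggarwal} as a black box and in fact contains the typo $z_j = |X_j|$ where $z_j = |X_j \cap X^*_{out}|$ is meant), but the argument is the same.
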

\begin{proof}
Let $z_j = |X_j|$ and recall $C^*$ is the optimal solution for the whole instance $X$. 
By \cref{thm:aggarwal} we have, with probability at least $1 - m\cdot \delta/m = 1 - \delta$ that for every $j$
\[
\tau(X_j \cap X^*_{in}, Y_j)
= O(\inf_{C, |C| = k} \tau(X_j \cap X^*_{in}, C) + \eps z_j\Theta)
= O(\tau(X_j \cap X^*_{in}, C^*) + \eps z_j\Theta). 
\]

This implies that we have
\begin{align*}
&\sum_{j=1}^{m} \tau(X_j \cap X^*_{in}, Y_j)
= \sum_{j=1}^{m} O(\tau(X_j \cap X^*_{in}, C^*) + \eps z_j\Theta)\\
&= O(\tau(X^*_{in}, C^*) + \eps z \Theta) = O(\OPT).
\end{align*}

\end{proof}

We leave the description of the algorithm run by the coordinator to \cref{sec:appendix_distributed_coordinator} and now we show a different way of getting the weighted set $(Y, w)$ with essentially the same guarantees, by adapting the \kmpipe algorithm. 

\paragraph{The \kmpipe algorithm} 

We now show a simple adaptation of the \kmpipe algorithm \cite{bahmani2012scalable}: a popular distributed variant of \kmpp. In \kmpipe, the goal is to get $\tilde{O}(k)$ centers in few distributed steps so as to achieve constant approximation guarantee. This is done by changing \cref{alg:kmp} as follows: in each round we sample $k$ points instead of just one point proportional to the cost, but the algorithm is run only for $O(\log n)$ steps. Our adaptation of \kmpipe algorithm to outliers is below.  

\begin{algorithm}[htb]
    \caption{\kmpipe overseeding}
   \label{alg:kmpipe_seeding}
   {\bfseries Require}~data~$X$,\#~rounds~$t$,~sampling~factor~$\ell$, threshold $\Theta$
\begin{algorithmic}[1]
    \STATE Uniformly sample $x \in X$ and set $Y = \{ x \}$.
    \FOR{$i \leftarrow 1, 2, \dots, t$}
        \STATE $Y' \leftarrow \emptyset$
        \FOR{$x \in X$}
            \STATE Add $x$ to $Y'$ with probability $\min\left( 1, \frac{\ell \tau_\Theta(x, Y)}{\tau_\Theta(X, Y)} \right)$
        \ENDFOR
        \STATE $Y \leftarrow Y \cup Y'$
    \ENDFOR
     \STATE For each $y \in Y$, let $w(y)$ be the number of points $x \in X_j$ with $\tau_\Theta(x,Y) < \Theta$ and such that $y = \argmin_{y' \in Y} \tau_\Theta(x, y')$ with ties handled arbitrarily.  
     Let $X_{out}$ be the set of points $x$ with $\tau(x, Y) = \Theta$. 
    \STATE {\bfseries Return } $(Y, w), |X_{out}|$
\end{algorithmic}
\end{algorithm}

We now prove an analogue of \cref{thm:streaming}. Again, it can be seen as extension of our basic result \cref{thm:aggarwal}. 

\begin{theorem}
\label{thm:kmpipe}
Let $\Theta \in [\OPT/(2\eps),\OPT/\eps]$ be arbitrary.
Suppose we run \cref{alg:kmpipe_seeding} for $t = O(\log (n/\eps))$ rounds with $\ell = O((k/\eps) \log n)$ to obtain output $Y$. Then, with probability at least $1 - 1/\poly(n)$, we have that
\[
\tau_\Theta(X^*_{in}, Y) = O(\OPT). 
\]
\end{theorem}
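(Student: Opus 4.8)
The plan is to lift the ``settled cluster'' analysis of \cref{thm:aggarwal} to the multi-sample setting of \cref{alg:kmpipe_seeding}, exactly in the spirit in which \cref{thm:streaming} lifts it to the distributed setting. The only structural difference is that one round of \cref{alg:kmpipe_seeding} adds roughly $\ell$ new centers at once rather than a single one, so in place of ``one step makes one more cluster settled with probability $\Omega(\eps)$'' I would prove ``one round makes \emph{every} not-yet-settled cluster carrying non-negligible cost settled, except with probability $1/\poly(n)$.'' As in \cref{thm:aggarwal}, call an optimal cluster $X_j^*$ \emph{settled} with respect to a center set $Y$ if $\tau_\Theta(X_j^*,Y) < 10\,\tau_\Theta(X_j^*,\mu(X_j^*))$. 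Since \cref{alg:kmpipe_seeding} only adds centers ($Y \leftarrow Y\cup Y'$), a settled cluster stays settled and, at any time, the total cost of settled clusters is at most $10\sum_j\tau_\Theta(X_j^*,\mu(X_j^*)) \le 10\sum_j\phi(X_j^*,\mu(X_j^*)) \le 10\OPT$ by \cref{fact:mean_minimizes}. Let $Y_0=\{x\}$ be the initial uniform sample, $Y_t$ the centers after $t$ rounds, and $a_t := \tau_\Theta(\Xin^*,Y_t)$.

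The core step is a one-round recursion that I would establish by conditioning on $Y_t$: with probability $1-1/\poly(n)$ one has $a_{t+1} \le \tfrac12 a_t + O(\OPT)$. Fix a round and an unsettled cluster $X_j^*$. By \cref{cor:10apx}, a point $c$ drawn from $X_j^*$ proportionally to $\tau_\Theta(c,Y_t)$ satisfies $\tau_\Theta(X_j^*,Y_t\cup\{c\}) \le 10\,\tau_\Theta(X_j^*,\mu(X_j^*))$ with probability at least $1/5$; hence the ``good'' points of $X_j^*$ carry $\tau_\Theta(\cdot,Y_t)$-mass $g_j \ge \tfrac15\tau_\Theta(X_j^*,Y_t)$. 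In one round each $x$ is included independently with probability $p_x = \min(1,\ell\,\tau_\Theta(x,Y_t)/\tau_\Theta(X,Y_t))$, so the probability that no good point of $X_j^*$ is chosen is at most $\prod_{x\text{ good}}(1-p_x) \le \exp(-\sum_{x\text{ good}}p_x)$. If some good point has $p_x=1$ the cluster settles for sure; otherwise $\sum_{x\text{ good}}p_x = \ell\, g_j/\tau_\Theta(X,Y_t)$, and taking $\ell$ a large enough constant times $(k/\eps)\log n$ forces the dichotomy: either this sum is at least $C\log n$, so $X_j^*$ settles this round except with probability $n^{-C}$, or $g_j = O(\eps\,\tau_\Theta(X,Y_t)/k)$, in which case I call $X_j^*$ \emph{cheap}.

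To close the recursion I would union-bound over the at most $k$ non-cheap unsettled clusters: with probability $1-1/\poly(n)$ all of them become settled during the round, so the surviving unsettled clusters are all cheap. Using $Y_t\subseteq Y_{t+1}$ and $\tau_\Theta(X_j^*,Y_t)\le 5 g_j$, the total cost of the surviving clusters is $\sum_{\text{cheap }j}\tau_\Theta(X_j^*,Y_{t+1}) \le \sum_{\text{cheap }j}\tau_\Theta(X_j^*,Y_t) \le 5\sum_{\text{cheap }j}g_j = O(\eps\,\tau_\Theta(X,Y_t))$; adding the $\le 10\OPT$ from settled clusters and bounding $\tau_\Theta(X,Y_t)\le a_t + |\Xout^*|\Theta \le a_t + \OPT/\eps$ (the only place the hypothesis $\Theta \le \OPT/(\eps z)$ is used) the factor $\eps$ cancels the $1/\eps$ and gives $a_{t+1} \le \tfrac12 a_t + O(\OPT)$. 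Iterating over $t = O(\log(n/\eps))$ rounds and taking a union bound over rounds (the total failure probability stays $1/\poly(n)$) yields $a_t \le 2^{-t}a_0 + O(\OPT)$; since $Y_0$ is a single point, $a_0 \le n\Theta \le n\OPT/(\eps z) \le n\OPT/\eps$, so $2^{-t}a_0 \le \OPT$ once $2^t \ge n/\eps$, i.e. $t = \lceil\log_2(n/\eps)\rceil$ rounds suffice and $\tau_\Theta(\Xin^*,Y)=O(\OPT)$, as claimed. (If $\tau_\Theta(X,Y_t)=0$ at some point, the cost is already $0$ and there is nothing to prove.)

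The hard part will be the constant-chasing in the one-round recursion, specifically making the per-round additive error $O(\OPT)$ rather than $O(\OPT/\eps)$: the cheap clusters, the outlier term $|\Xout^*|\Theta$, and the oversampling rate $\ell$ each inject a $1/\eps$, and one has to verify that the $\eps$ coming from $\ell \asymp (k/\eps)\log n$ exactly offsets them (with a little slack so that $\tfrac12$ can serve as the contraction factor). The remaining points --- that ``good for $X_j^*$'' is defined relative to the conditioned-on $Y_t$, that adding sampled centers besides a good one only lowers $\tau_\Theta(X_j^*,\cdot)$, and the $1-p\le e^{-p}$/Chernoff/union-bound steps --- are routine given \cref{sec:appendix_lemmas}.
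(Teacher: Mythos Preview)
Your proposal is correct and follows essentially the same route as the paper's proof: both decompose optimal clusters into settled versus unsettled, split the unsettled ones into heavy/light (paper) or non-cheap/cheap (you) via a threshold tied to $\ell$, use \cref{cor:10apx} to identify a good subset carrying a $1/5$ fraction of the cluster's $\tau_\Theta$-mass, bound the per-round failure probability by $\prod(1-p_x)\le \exp(-\sum p_x)$, and iterate the resulting geometric recursion over $O(\log(n/\eps))$ rounds. The only cosmetic difference is that the paper tracks the unsettled cost $\tau_U$ directly (and separately disposes of the case $\tau_U\le\OPT$), whereas you track the full inlier cost $a_t$ and fold that case into the cheap/non-cheap dichotomy; the arithmetic and the use of $z\Theta\le\OPT/\eps$ to cancel the $1/\eps$ against the oversampling factor are identical.
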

The following proof is a simple adaptation of the analysis of \kmpipe by Rozhon \cite{rozho2020simple}. 
\begin{proof}
Let $Y_0 \subseteq Y_1 \subseteq \dots \subseteq Y_t$ be the cluster centers that are gradually built by \cref{alg:kmpipe_seeding}. 
We have $Y_0 = \{ y \}$ for a point $y$ picked uniformly at random. Whatever point we picked, we have $\tau(X, Y_0) \le n \Theta = \poly(n/\eps)$, as we assume $\Delta = \poly(n)$. 
We call a cluster $X^*_i$ \emph{unsettled} in iteration $j$ if $\tau(X^*_i, Y_j) > 10\tau(X^*_i, C^*)$. We define $\tau_U(X, Y_j)$ as  
\[
\tau_U(Y, Y_j) = \sum_{\text{$i$, $X^*_i$ unsettled in iter. $j$}} \tau(X^*_i, Y_j). 
\]

We will now prove that with probability $1 - 1/\poly(n)$ we have
\begin{align}
\label{eq:piano_guys}
\tau_U(X, Y_{j+1}) \le \frac{1}{2} \tau_U(X, Y_{j}) + \OPT.
\end{align}

To prove \cref{eq:piano_guys}, fix an iteration $j$. If $\tau_U(X, Y_j) \le \OPT$, the claim certainly holds, so assume the opposite. 
We split unsettled clusters into two groups: a cluster $X^*_i$ is called \emph{heavy} if $\tau(X^*_i, Y_j) \ge \tau_U(X, Y_j) / (2k)$ and \emph{light} otherwise. 
The probability that we make a heavy cluster $X^*_i$ settled can be bounded as follows. 
Sampling a random point from $X^*_i$ proportional to $\tau(\cdot, Y_j)$ makes $X^*_i$ settled with probability at least $1/5$ by \cref{cor:10apx}. 
Hence, there is a subset $Z_i \subseteq X^*_i$ with $\tau(Z_i, Y_j) \ge \tau(X^*_i, Y_j) / 5$ such that sampling a point from $Z_i$ makes $X^*_i$ settled. 
If $Z_i$ contains a point $x$ with $\tau(x, Y_j) \ge \frac{\tau(X, Y_j)}{\ell}$, we sample $x$ with probability $1$ and this also makes $X^*_i$ settled. Otherwise,
\begin{align*}
&\P(X^*_i\text{ does not get settled}) \\
&\le \prod_{x \in Z_i}(1 - \ell \cdot \tau(x, Y_j)/\tau(X, Y_j))\\
&\le \exp(- O(k/\eps \log n) \sum_{x \in Z_i} \tau(x, Y_j)/\tau(X, Y_j) ) && \text{$1+x \le \e^{x}$}\\
&\le \exp(- O(k(\log n)/\eps)  \cdot \frac1{5} \tau(X^*_i, Y_j)/\tau(X, Y_j) )\\
&\leq \exp(- O((\log n)/\eps)  \cdot \frac1{10} \tau_U(X, Y_j)/\tau(X, Y_j) )&&\text{$X_i^*$ is heavy}\\
&\leq \exp \left( - O((\log n)/\eps) \cdot \frac1{10}\frac{ \tau_U(X, Y_j)}{\tau_U(X, Y_j) + 10\OPT + z\Theta} \right) &&\text{each point is unsettled, settled, or an outlier}\\
&\leq \exp \left(- O((\log n)/\eps) \cdot \frac1{10}\frac{ \OPT}{11\OPT + \OPT/\eps} \right)&&\tau_U(X, Y_j)\ge \OPT\\
&= 1 / \poly(n).\\
\end{align*}
Hence, every heavy cluster $X^*_i$ gets settled with probability $1 - 1/\poly(n)$. We now condition on that event.

Note that $\sum_{X_i^*\text{ light}} \tau(X_i^*, Y_j) \le k \cdot \tau_U(X, Y_j)/(2k) = \tau_U(X, Y_j)/2$. 
Hence, we get
\begin{align*}
\tau_U(X, Y_j) - \tau_U(X, Y_{j+1})
\ge \sum_{\text{$X_i^*$ heavy}} \tau(X_i^*, Y_j) 
\ge \tau_U(X, Y_j)/2 
\end{align*}
as needed to finish the proof of \cref{eq:piano_guys}. 
Now we can apply \cref{eq:piano_guys} $t = O(\log (n/\eps))$ times together with $\tau(X, Y_0) \le \poly(n/\eps)$ to conclude that 
\begin{align*}
\tau(X, Y_t)
&\le (1/2)^t \poly(n/\eps) + \OPT \cdot \sum_{j = 1}^t (1/2)^j
= O(\OPT), 
\end{align*}
with probability $1 - 1/\poly(n)$. 
\end{proof}

\subsection{Constructing final clustering}
\label{sec:appendix_distributed_coordinator}

In this subsection, we show how the output of \cref{alg:streaming,alg:kmpipe_seeding}, that is, a weighted set of centers $(Y, w)$, together with the number of found outliers $|X_{out}|$, can be used to get $(O(1), 1+\eps)$-approximation algorithms, via \cref{thm:streaming,thm:kmpipe}. First, in \cref{thm:coreset_formal}, we only show how to get $(O(1/\eps), 1+\eps)$-approximation guarantee by simply running \cref{alg:noisylocalsearch} on the weighted data with the parameter of number of outliers set to roughly $z - |X_{out}|$. The advantage of \cref{thm:coreset_formal} is its simplicity (we implement a variant of it in \cref{sec:paper_experiments}) and speed. 

Next, in \cref{thm:coreset_finer_formal} we refine the reduction to get $(O(1), 1+\eps)$-approximation guarantee. This result is interesting from the theoretical perspective, as we explained in \cref{sec:paper_distributed}. As a subroutine, the coordinator needs to run some $(O(1), 1+\eps)$-approximation algorithm, i.e., linear programming based algorithms \cite{chen2008constant,Krishnaswamy2018}, so the resulting distributed algorithm seems not practical. Also, we need to somewhat refine \cref{alg:streaming,alg:kmpipe_seeding} so that they pass somewhat more information than just $(Y,w)$ and $|X_{out}|$.

\begin{theorem}[Adaptation of \cite{guha2003streaming}]
\label{thm:coreset_formal}
Let $\eps \in [0,1]$ be arbitrary and $\Theta$ be such that $\OPT/(2\eps z) \le \Theta \leq \OPT/(\eps z)$. 
Let $(Y,w)$ denote the weighted point set that \cref{alg:kmpipe_seeding} and \cref{alg:streaming} compute, respectively. Furthermore, assume that $\tau_\Theta(X^*_{in}, Y) \le \alpha \OPT$ in case of \cref{alg:kmpipe_seeding}, and $\sum_j \tau_\Theta(X_j \cap X^*_{in}, Y_j) \le \alpha \OPT, Y = \bigcup Y_j$, in case of $\cref{alg:streaming}$, respectively. 
Now, suppose we then run a $(\beta,1+\eps)$-approximation algorithm $\fA$ for the weighted $k$-means with outliers formulation on the weighted instance $(Y,w)$ with $z^\fA := z - |\Xout| + 2\alpha \eps z$ outliers and let $C$ denote the resulting set of $k$ centers. Then,
\[
\phi^{-(1+5\alpha \eps)z}(X , C) = O((\alpha\beta + 1/\eps) \OPT). 
\]
\end{theorem}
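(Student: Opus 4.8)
The plan is to follow the classical ``a clustering of a clustering is a clustering'' principle of Guha et al. It is cleanest to view the weighted set $(Y,w)$ as the multiset $X'$ obtained from $X$ by (a) removing $\Xout$ (the set $\bigcup_j X_{out,j}$ in the \cref{alg:streaming} case, resp.\ $X_{out}$ in the \kmpipe case) of points whose squared distance to $Y$, resp.\ to the local $Y_j$, is at least $\Theta$, and (b) snapping every remaining point $x$ to its representative $y_x\in Y$ (resp.\ $y_x\in Y_j$). Three quantities will drive the whole argument: the number $|\Xout\cap X^*_{in}|$ of true inliers wrongly removed; the total snapping cost $M:=\sum_{x\in X\setminus\Xout}\phi(x,y_x)$; and the optimum $\OPT^\fA$ of the weighted $k$-means-with-outliers instance $(Y,w)$ with $z^\fA$ outliers.

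First I would collect three estimates. (i) Each $x\in\Xout\cap X^*_{in}$ contributes exactly $\Theta$ to $\tau_\Theta(X^*_{in},Y)$ (resp.\ to $\sum_j\tau_\Theta(X_j\cap X^*_{in},Y_j)$), so $|\Xout\cap X^*_{in}|\cdot\Theta\le\alpha\OPT$; since $\Theta\ge\OPT/(2\eps z)$ this gives $|\Xout\cap X^*_{in}|\le 2\alpha\eps z$. As $\Xout\setminus X^*_{out}=\Xout\cap X^*_{in}$, this bounds how far $|\Xout|$ may exceed $z$, so $z^\fA=z-|\Xout|+2\alpha\eps z\ge 0$ and moreover $|X^*_{out}\setminus\Xout|=z-|X^*_{out}\cap\Xout|\le z^\fA$. (ii) Every surviving point satisfies $\phi(x,y_x)=\tau_\Theta(x,Y)<\Theta$, hence $M\le\tau_\Theta(X,Y)=\tau_\Theta(X^*_{in},Y)+\tau_\Theta(X^*_{out},Y)\le\alpha\OPT+z\Theta\le(\alpha+1/\eps)\OPT$ using $\Theta\le\OPT/(\eps z)$ (and the same bound holds summed over machines in the streaming case). (iii) --- the crucial step --- I would bound $\OPT^\fA$ by plugging the \emph{true} optimal centers $C^*$ into the weighted instance and removing precisely the copies $y_x$ of the points $x\in X^*_{out}\setminus\Xout$, which is legal by (i); the surviving copies then correspond exactly to $X^*_{in}\setminus\Xout$, so by $\phi(y_x,C^*)\le 2\phi(y_x,x)+2\phi(x,C^*)$ (\cref{fact:generalised_triangle_inequality}) and $\sum_{x\in X^*_{in}\setminus\Xout}\phi(y_x,x)\le\tau_\Theta(X^*_{in},Y)\le\alpha\OPT$, we get $\OPT^\fA\le 2\alpha\OPT+2\OPT=O(\alpha\OPT)$ --- with \emph{no} $1/\eps$ term, precisely because the points responsible for the $z\Theta$ part of $M$ have been discarded in this test solution.

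Then I would run $\fA$ on $(Y,w)$ with $z^\fA$ outliers, obtaining centers $C$ and a removed (weighted) set $O_\fA$ of total weight at most $(1+\eps)z^\fA$ with $\phi(X'\setminus O_\fA,C)\le\beta\OPT^\fA=O(\alpha\beta\OPT)$. Matching each weighted removal to concrete points of $X$ mapped to that representative, declare those points together with $\Xout$ as the outlier set. Its size is $|\Xout|+(1+\eps)z^\fA=|\Xout|+(1+\eps)(z-|\Xout|+2\alpha\eps z)\le z+\eps z+4\alpha\eps z\le(1+5\alpha\eps)z$, using $\eps\le1$ and, without loss of generality, $\alpha\ge1$ (enlarging $\alpha$ only weakens the statement). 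For the cost, snap back: each surviving $x$ obeys $\phi(x,C)\le 2\phi(x,y_x)+2\phi(y_x,C)$, so summing over survivors gives $\phi(X\setminus(\text{outliers}),C)\le 2M+2\,\phi(X'\setminus O_\fA,C)\le 2(\alpha+1/\eps)\OPT+O(\alpha\beta\OPT)=O((\alpha\beta+1/\eps)\OPT)$, which is the claim (using $\alpha\le\alpha\beta$). The \cref{alg:streaming} and \kmpipe cases are handled identically once $Y$ is replaced by the $Y_j$'s and the sums run over machines, invoking the corresponding hypothesis $\sum_j\tau_\Theta(X_j\cap X^*_{in},Y_j)\le\alpha\OPT$.

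The main obstacle is step (iii): one must bound $\OPT^\fA$ through a test clustering that \emph{already} removes the undetected true outliers, so that the $z\Theta\le\OPT/\eps$ portion of the movement budget is charged only once, additively, rather than being amplified by the approximation factor $\beta$ of $\fA$ --- the naive bound $\OPT^\fA\le O((\alpha+1/\eps)\OPT)$ would yield only the weaker $O(\beta(\alpha+1/\eps)\OPT)$. A secondary nuisance is the bookkeeping around weighted removals --- matching a removal of weight $w'$ at a representative $y$ with $w'$ concrete original points mapped to $y$ --- where one verifies that, irrespective of the concrete choice, the survivor snapping cost remains $\le M$ and the declared-outlier count remains $\le(1+5\alpha\eps)z$.
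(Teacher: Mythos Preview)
Your proposal is correct and follows essentially the same route as the paper's own proof: both arguments (i) bound $|\Xout\cap X^*_{in}|\le 2\alpha\eps z$ and deduce $z^\fA\ge 0$ and $|X^*_{out}\setminus\Xout|\le z^\fA$; (ii) bound the optimum of the weighted instance by plugging in $C^*$ and removing the copies of $X^*_{out}\setminus\Xout$, yielding $\OPT^\fA=O(\alpha\OPT)$ via \cref{fact:generalised_triangle_inequality} and the hypothesis $\tau_\Theta(X^*_{in},Y)\le\alpha\OPT$; and (iii) snap back with the triangle inequality, paying the movement cost $M=O((\alpha+1/\eps)\OPT)$ once additively and $\beta\OPT^\fA=O(\alpha\beta\OPT)$ for the algorithm's output. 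The only cosmetic difference is that you package the snapping cost as a single quantity $M$, whereas the paper splits it explicitly into the inlier part (bounded by $\alpha\OPT$ via \cref{eq:pipe}/\cref{eq:streaming}) and the true-outlier part (bounded by $z\Theta$); the arithmetic and the final $(O(\alpha\beta+1/\eps),1+5\alpha\eps)$ bound coincide.
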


Here, we define the weighted $k$-means with outliers as the following problem: for the weighted input $(Y, w)$, where each $w(y)$ is an integer, we want to choose a set of $k$ centers $C$ and a weight vector $w_{in}(y)$ for any $y$ such that $w_{in}(y)$ is a non-negative integer, for $w_{out}(y) = w(y) - w_{in}(y)$ we have $\sum_{y \in Y} w_{in}(y) \le z^\fA$ and the sum $\sum_{y \in Y} w_{in}(y) \phi(y, C)$ is minimized. 

Note that algorithms for $k$-means with outliers can be seen as algorithms for weighted $k$-means with outliers by viewing weighted points $(y, w(y))$ as $w(y)$ points on the same location (our assumption that pairwise distances of input points are at least one, is in our case only for simplicity of exposition). Moreover, sampling based algorithms such as \cref{alg:localsearch} can be implemented in time proportional not to $\sum_y w(y)$, but $|Y| = \tilde{O}(k)$ by additionally weighting the sampling distribution by the weight of input points. 

Plugging in our \cref{alg:localsearch} as $\fA$ in \cref{thm:coreset_formal} hence yields a distributed algorithm with approximation factor $O(\alpha\beta + 1/\eps) = O(1/\eps + 1/\eps) = O(1/\eps)$ that outputs $(1+O(\eps))z$ outliers. Moreover, the computational complexity of the coordinator is $\tilde{O}(k^2/\eps^2)$ if we use \cref{alg:kmpipe_seeding} and  $\tilde{O}(mk^2/\eps^2)$ if we use \cref{alg:streaming}. 

\begin{proof}
We start by showing that $|\Xin^* \cap \Xout| \leq 2\alpha\eps z$. That is, the number of points that \cref{alg:streaming} and \cref{alg:kmpipe_seeding}, respectively, declare as outliers in the first phase of the algorithm, but that are actually true inliers with respect to a given optimal solution is small. For \cref{alg:kmpipe_seeding}, we have
$$|\Xin^* \cap \Xout| \leq \frac{\tau_\Theta(\Xin^*,Y)}{\Theta} \leq \frac{\alpha \OPT}{\OPT/(2\eps z)} = 2\alpha \eps z,$$
and for \cref{alg:streaming}, we have
$$|\Xin^* \cap \Xout| \leq \sum_{j}\frac{\tau_\Theta(X_j \cap \Xin^*,Y_j)}{\Theta} \leq \frac{\alpha \OPT}{\OPT/(2\eps z)} = 2\alpha \eps z,$$
as desired.
In particular,  $|X_{out}| = |X_{out} \cap X^*_{out}| + |X_{out} \cap X^*_{in}| \le z + 2\alpha\eps z$. Hence, $z^\fA = z - |X_{out}| + 2\alpha\eps z \ge  0$ and therefore algorithm $\fA$ is run on a legal instance.
Let $\Xin := X \setminus \Xout$. Notice that for \cref{alg:kmpipe_seeding}, we have
\begin{align}
\label{eq:pipe}
    \phi(X_{in}^* \cap X_{in}, Y) = \tau_\Theta(X_{in}^* \cap X_{in}) \le \alpha \OPT,
\end{align}
and for \cref{alg:streaming}, we have
\begin{align}
\label{eq:streaming}
    \sum_j\phi(X_{in}^* \cap X_{in} \cap X_j, Y_j) = \sum_j \tau_\Theta(X_{in}^* \cap X_{in} \cap X_j,Y_j) \le \alpha \OPT.
\end{align}


Let us call each set of points $B(y) \subseteq X$ that \cref{alg:streaming} or \cref{alg:kmpipe_seeding} groups around $y \in Y$ a \emph{blob} and for a given $x \in \Xin$, we denote with $y_x$ the point with $x \in B(y_x)$. We define (as the respective algorithm also does) $w(y) = |B(y)|$. 
Moreover, we define $w^*_{in}(y) = |B(y) \cap X^*_{in}|$ and $w^*_{out}(y) = |B(y) \cap X^*_{out}|$. 



Note that splitting the weights $w$ into $w^*_{in}$ and $w^*_{out}$ gives us a natural upper bound on the cost of the optimal solution on the instance $Y$ with $z^\fA$ outliers: On one hand,  we have 
\begin{align}
\label{eq:u2}
\sum_{y \in Y} w^*_{out}(y) 
&= |X^*_{out} \cap X_{in}|\\
&= |X^*_{out}| - |X^*_{out} \cap X_{out}|\nonumber\\
&= z - |X_{out}| + |X_{out} \cap X^*_{in}| \nonumber\\
&\le z - |X_{out}| + 2\alpha\eps z &&  \nonumber\\
&= z^\fA.\nonumber
\end{align}
Thus, if we label the weighted set $(Y, w^*_{out})$ as outliers, at most $z^\fA$ points are labeled as outliers. On the other hand, we will now bound $\phi((Y, w^*_{in}), C^*)$, where $C^*$ is the optimum clustering for the original problem defined on $X$. 

For any $y \in Y$, $x \in X^*_{in} \cap B(y)$, and $c^*_x = \argmin_{c^* \in C^*} \phi(x, c^*)$ we have 
\begin{align}
\label{eq:justin_bieber}    
\phi(y, C^*) 
\le \phi(y, c^*_x)
\le 2\phi(y, x) + 2\phi(x, c^*_x),
\end{align}
where the second inequality is due to \cref{fact:generalised_triangle_inequality}. 
Hence, 
\begin{align}
\label{eq:imagine_dragons}
\phi((Y, w^*_{in}), C^*)
&= \sum_{\text{$y \in Y$}} w^*_{in}(y) \cdot \phi(y, C^*)\\
&= \sum_{\text{$y \in Y$}} \sum_{x \in X^*_{in} \cap B(y)} \phi(y, C^*)\nonumber\\
&\le  \sum_{\text{$y \in Y$}}  \sum_{x \in X^*_{in} \cap B(y)} 2\phi(y, x) + 2\phi(x, c^*_x)&&\text{\cref{eq:justin_bieber}}\nonumber\\  
&\le 2 \left( \sum_{x \in X^*_{in} \cap X_{in}} \phi(y_x, x) \right) +2  \left( \sum_{x \in X^*_{in} } \phi(x, c^*_x) \right)\nonumber\\
&\le 2 \alpha \OPT + 2\OPT \nonumber && \text{\cref{eq:pipe,eq:streaming}{}}\\
&= O(\alpha \OPT).  \nonumber
\end{align}

Now we consider the output $C^\fA$ and $w_{out}^\fA$  of $\fA$, i.e., for each $y \in Y$ the algorithm $\fA$ decides which integer weight $w_{out}^\fA$ of $y$ is labeled as outlier. Define $w_{in}^\fA(y) = w(y) - w_{out}^\fA(y)$.  
This solution of instance $(Y,w)$ with $\sum_{y\in Y}w^\fA_{out}(y)$ outliers naturally defines a solution of the original instance $X$ as follows. 
We leave the set of centers $C^\fA$ the same and whenever $w^\fA_{out}(y)$ is nonzero, we label $w^\fA_{out}(y)$ arbitrary points in $B(y)$ as outliers and call this set $X^\fA_{out}$ (note that the algorithm itself does not need to explicitly label the outliers, it suffices to prove that there is a labelling). Then we define points from $X_{out} \cup X^\fA_{out}$ as outliers. 
The number of points we label as outliers is bounded by
\begin{align}
|\Xout| + \sum_{y\in Y}w^\fA_{out}(y) 
&\leq |\Xout| + (1+\eps)z_\fA \\
&= |\Xout| + (1+\eps)(z - |\Xout| + 2\alpha \eps z) \nonumber\\
&\leq (1 + 5 \alpha \eps) z    \nonumber
\end{align}
Thus, it remains to bound the cost of the set $X \setminus (X_{out} \cup  X^\fA_{out}) $ with respect to $C^\fA$. 
We have
\begin{align*}
&\phi(X \setminus (X_{out} \cup X_{out}^\fA), C^\fA)\\
&= \sum_{y \in Y} \sum_{x \in B(y)\setminus \Xout^\fA} \phi(x, C^\fA)\\
&\le 2 \sum_{y \in Y} \sum_{x \in B(y) } \phi(x,y_x) 
+ 2 \sum_{y \in Y}\sum_{x \in B(y) \setminus \Xout^A} \phi(y_x, C^\fA) && \text{\cref{fact:generalised_triangle_inequality}} \\
&= 2 \sum_{y \in Y} \sum_{x \in B(y) \cap \Xin^*} \phi(x,y_x) + 2 \sum_{y \in Y} \sum_{x \in B(y) \cap \Xout^*} \phi(x,y_x) 
\\& + 2 \sum_{y \in Y}\sum_{x \in B(y) \setminus \Xout^A} \phi(y_x, C^\fA)  \\
&\leq 2\alpha \OPT + 2\phi(\Xin \cap \Xout^*,Y) + 2 \sum_{y \in Y} w^\fA_{in} \phi(y, C^\fA) && \text{\cref{eq:pipe}, \cref{eq:streaming}}\\
&\leq 2 \alpha \OPT + 2 z \Theta + 2 \sum_{y \in Y} w^\fA_{in} \phi(y, C^\fA) && |X^*_{out}|=z, x \in X_{in} \Rightarrow \phi(x, Y) \le \Theta \\
&\leq  2\alpha \OPT + 2(1 / \eps)\OPT + 2\beta \phi((Y, w^*_{in}), C^*) && \text{$\fA$ is $\beta$ approximation}\\
&\le O((1/\eps + \alpha\beta) \OPT).&& \text{\cref{eq:imagine_dragons}}  
\end{align*}
\end{proof}

\paragraph{Better construction}
Here we sketch a somewhat more elaborate construction that enables us to lose only a constant factor in approximation, instead of $O(1/\eps)$. Below, we assume existence of a $(\beta, 1+\eps)$-approximation algorithm that works in what we call an \emph{almost-metric space}, which is defined as a classical metric space, but without the axiom $d(x,y)=0 \Leftrightarrow x=y$. That is, even when an algorithm chooses some point $x$ as a cluster center, the clustering cost of $x$ might still be non-zero.
Our algorithms work in this more general setting and we believe that the algorithm of \cite{bhattacharya2019noisy} too. 
In any case, note that the problem of $k$-means with outliers in almost metric space can be reduced to the same problem in metric space by 
splitting each vertex $x$ in $2k$ clones with $x_1, \dots, x_{2k}$ and defining $d'(x_i, x_j) = d(x,x)$. This increases the number of points by a $2k$-factor and we lose only a factor of $2$ in our approximation guarantee. 
As in previous construction, we moreover assume a weighted version of the problem, when for integer weight $w(x)$ one is allowed to output an integer $0 \le w_{out}(x) \le w(x)$ and then only $w(x) - w_{out}(x)$ weight is used in the computation of the cost. 

\begin{theorem}
\label{thm:coreset_finer_formal}
Let $\eps \in (0,1]$ be arbitrary. Suppose that there is a $(\beta, 1+\eps)$-approximation algorithm $\fA$ that solves the weighted variant of $k$-means with outliers in an almost-metric space in polynomial time. Then, there is a distributed polynomial-time $(O(\beta), 1+O(\eps))$-approximation algorithm in the coordinator model with communication $\tilde{O}(k/\eps)$ per site that succeeds with positive constant probability.
\end{theorem}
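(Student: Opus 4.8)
The plan is to mimic the proof of \cref{thm:coreset_formal} line by line, replacing its one lossy step --- ``collapse the blob $B(y)$ of points assigned to a center $y$ into the single weighted point $y$'' --- by a finer summary that keeps one weighted point per distance scale. Each site $j$ first runs \cref{alg:kmp} with oversampling exactly as in \cref{alg:streaming}, with $\ell = \tilde{O}(k/\eps)$ and $\OPT/(2\eps z)\le\Theta\le\OPT/(\eps z)$, so that \cref{thm:streaming} yields $\sum_j\tau_\Theta(X_j\cap\Xin^*,Y_j)=O(\OPT)$ with constant probability, and hence --- just as in \cref{thm:coreset_formal} --- only $O(\eps z)$ true inliers get misclassified as outliers in this phase. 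Writing $B(y)\subseteq X_j\setminus X_{out,j}$ for the blob of $y\in Y_j$, site $j$ sends, instead of just $w(y)=|B(y)|$, the refined profile $\big(n_{y,i}\big)_{i=0}^{O(\log\Delta)}$ with $n_{y,i}=|\{x\in B(y):\|x-y\|\in[2^i,2^{i+1})\}|$, together with $|X_{out,j}|$. This is $|Y_j|\cdot O(\log\Delta)=\tilde{O}(k/\eps)$ numbers per site. (If one prefers a smaller round count, the same profiles can be produced via the \kmpipe variant, using \cref{thm:kmpipe}.)

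The coordinator assembles a weighted \emph{almost}-metric instance $Z$ whose points are the real centers $Y=\bigcup_jY_j$ (whose pairwise distances it obtains from the shared oracle, or from the coordinates in $\mathbb{R}^d$) together with a virtual point $q_{y,i}$ of weight $n_{y,i}=\sum_jn_{y,i}^{(j)}$ for every scale $i$ with $n_{y,i}>0$; the virtual point $q_{y,i}$ is attached to $y$ ``by a pendant of length $2^i$'', i.e.\ every edge incident to $q_{y,i}$ carries an extra additive $2^i$: $d(q_{y,i},w)=2^i+d(y,w)$ for $w\in Y$, $d(q_{y,i},q_{y',i'})=2^i+d(y,y')+2^{i'}$ for $(y,i)\neq(y',i')$, and the self-distance is $d(q_{y,i},q_{y,i})=2^i$. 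One checks routinely that $Z$ is an almost-metric: it is a tree metric with nonzero self-loops, every edge out of $q_{y,i}$ has length at least $2^i$, so the only non-standard triangle inequalities --- those through a self-distance --- hold. (If only a metric-space algorithm were available, one could split each point into $2k$ clones as noted before the theorem, at a factor-$2$ loss.) The coordinator runs $\fA$ on $(Z,w)$ with $z^\fA:=z-|\Xout|+O(\eps z)$ outliers, obtaining centers $C^\fA$ and an outlier split $w^\fA_{out}$; it outputs the set $\tilde C$ obtained from $C^\fA$ by replacing every virtual center $q_{y,i}$ with the real point $y$, and declares as outliers $\Xout$ together with, for each $(y,i)$, an arbitrary choice of $w^\fA_{out}(q_{y,i})$ original points from the scale-$i$ bucket of $B(y)$.

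The analysis runs parallel to that of \cref{thm:coreset_formal}. The number of outliers is $(1+O(\eps))z$ by the identical bookkeeping ($|\Xout|\le(1+O(\eps))z$, $\sum_{y,i}w^\fA_{out}(q_{y,i})\le(1+\eps)z^\fA$). For the cost, two estimates are needed. \emph{First}, $\OPT_Z=O(\OPT)$: take $C^*$, keep in each bucket only the weight coming from $\Xin^*$ and mark the rest outliers --- their total weight is $|\Xout^*\cap\Xin|\le z^\fA$ --- and for an original inlier $x$ in the scale-$i$ bucket of $B(y)$ bound $d(q_{y,i},C^*)\le 2^i+d(y,C^*)=O(\|x-y\|+d(x,C^*))$ by the generalized triangle inequality (using $\|x-y\|\ge 2^i$), giving $\phi((Z,w^*_{in}),C^*)=O\big(\sum_j\tau_\Theta(X_j\cap\Xin^*,Y_j)+\OPT\big)=O(\OPT)$; this is exactly where the gain over \cref{thm:coreset_formal} lies, since a true outlier is now represented at its real scale and contributes $0$ instead of $\Theta$. \emph{Second}, the reconstruction: for a kept original point $x$ in the scale-$i$ bucket of $B(y)$, $d(x,\tilde C)\le\|x-y\|+d(y,\tilde C)$, while $d(y,\tilde C)\le d(q_{y,i},C^\fA)$ (replacing $q_{y,i}$ by $y$ only shrinks distances in $Z$) and $\|x-y\|<2^{i+1}\le 2\,d(q_{y,i},C^\fA)$ because every path out of $q_{y,i}$ has length at least $2^i$ --- in particular even when $\fA$ centers at $q_{y,i}$ its cost there is $d(q_{y,i},q_{y,i})^2=4^i$. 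Hence $\phi(x,\tilde C)=O(\phi(q_{y,i},C^\fA))$, and summing over the $w^\fA_{in}$ kept points of each bucket, $\phi\big(X\setminus(\Xout\cup X^\fA_{out}),\tilde C\big)=O\big(\phi((Z,w^\fA_{in}),C^\fA)\big)=O(\beta\,\OPT_Z)=O(\beta\,\OPT)$. Since $|Z|=\tilde{O}(k/\eps)$, $\fA$ runs in $\poly(k/\eps)$ time, and the unknown $\OPT$ is handled as usual by an outer loop over the $O(\log(n\Delta))$ geometric guesses of $\Theta$, with one extra round in which the coordinator broadcasts the $O(\log(n\Delta))$ candidate center sets and each site reports the true cost of its slice so the best guess can be kept; this adds only an $\tilde{O}(k)$ overhead to the communication.

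The step I expect to need the most care is the design of the almost-metric $Z$ together with these two charging arguments: the self-distances $2^i$ must be \emph{small enough} --- within a constant of the bucket radius --- to keep $\OPT_Z=O(\OPT)$, yet \emph{large enough} to prevent an $\fA$-solution from ``cheating'' by collapsing a genuinely spread-out bucket to zero cost (this is precisely why an \emph{almost}-metric, and not merely metric, algorithm is invoked), and one must verify that the resulting $Z$ really satisfies the almost-triangle inequality so that $\fA$'s guarantee applies. Everything else is the same bookkeeping as in the proof of \cref{thm:coreset_formal}.
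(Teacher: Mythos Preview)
Your proposal is correct and follows essentially the same construction as the paper: distance-scale profiles sent to the coordinator, an almost-metric with pendant virtual points at each scale (the paper takes self-distance $2\cdot 2^k$ rather than your $2^i$, but either choice works in both charging arguments), running $\fA$ there, and projecting virtual centers back to $Y$. One small point you gloss over that the paper handles explicitly: the optimal centers $C^*$ need not lie in $Z$, so bounding $\OPT_Z$ by ``take $C^*$'' is not literally valid; the paper fixes this by temporarily extending the almost-metric to include $C^*$, bounding the cost there, and then invoking \cref{lem:2apx_metric} to pull the centers back into $Z$ at a factor-$4$ loss --- and it also selects the best $\Theta$-guess using the coordinator-local cost $\phi'((Y',w^\fA_{in}),C^\fA)$ rather than your extra communication round.
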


\begin{proof}
Assume that $\Theta$ satisfies $\frac{\OPT}{2 \eps z} \leq \Theta \leq \frac{\OPT}{\eps z}$. We later discuss how to "guess" $\OPT$ in this distributed setting.
The algorithm needed is a variant \cref{alg:streaming}, so we only describe the change that needs to be done. 
When the variant of \cref{alg:streaming} aggregates all points to the closest center in $Y_j$, it will not just compute for each $y$ its weight $w(y)$, i.e., for how many points in $X_j \setminus \Xout$ the point $y$ is the closest, but it iterates over all these points $x_1, x_2, \dots, x_{w(y)}$ and rounds the distances $d(x_i, y)$ down to the closest power of two. 
Then, the additional information sent to the leader about $y$ is not just $w(y)$, but also the list $w_0(y), w_1(y), w_2(y), \dots$, where $w_k(y)$ is the number of points $x \in B(y)$ with distance $d(x, y)$ rounded down to $2^k$. 
For any $y \in Y_j$ and $k$, we define $B_k(y)$ as the set of points in $X_j \setminus \Xout$ for which $y$ is the closest point in $Y_j$ and $d(x,y)$ is rounded down to $2^k$. As before, for a given $x \in \Xin := X \setminus \Xout$, $y_x$ denotes the point $y$ for which $x \in B(y)$.

The instance $(Y',w')$ the leader is going to construct will not be just $(Y,w)$, but the leader creates a new almost-metric space $\fM'$ that includes the original input metric space $\fM$ that is only defined for points of $Y$ together with the original metric, but moreover, for each $y$ and $0 \le k \le \log\Delta$, it contains a point $y^k$ with  $d_{\fM'}(y, y^k) = 2^k$. Now, imagine a weighted graph with the set of vertices being equal to the points in the almost-metric space and there is an edge between two points if we explicitly stated the distance between the two points and the weight of the edge is equal to that distance. Then, the distance between each pair of vertices is just equal to the length of the shortest path between the two vertices in the weighted graph. In fact, we also define $d_{\fM'}(y^k, y^k) = 2 \cdot 2^k$ (which is not possible in a metric space) and only then the leader runs the $(\beta, 1+\eps)$-approximation algorithm $\fA$ in $\fM'$. We note that the defined distances satisfy the triangle inequality.

As in \cref{thm:coreset_formal}, we assume that $\sum_j \tau_\Theta(X_j \cap \Xin^*,Y_j) \leq \alpha \OPT$. We run $\fA$ on $\fM'$ (we will use $\phi'$ and $d'$ when talking about (squared) distances in $\fM'$) with the number of outliers being set to $z^\fA = z - |X_{out}| + 2\alpha \eps z$ and get as output a set of centers $\fC^\fA$ and a set of outliers $(Y, w^\fA_{out})$ with $\sum_{y\in Y} w^\fA_{out}(y) \le (1+\eps)z^\fA$. 

As in the proof of \cref{thm:coreset_formal}, we  would now like to argue that the optimal solution in the almost-metric space has a cost of at most $O(\alpha \OPT)$ by considering the optimal set of centers $C^*$ for the original problem. However, there is one small technical issue. Namely, the points in $C^*$ might not be contained in the almost-metric space. To remedy this situation, we do the following: We naturally extend the almost-metric space to also include all the points in $C^*$ and prove that the defined distances still satisfy the triangle inequality. Next, we show that in this extended metric space, the cost of the optimal solution (with outliers defined appropriately) has a cost of $O(\alpha \OPT)$. By using the uniform sampling lemma (\cref{lem:2apx_metric}), this implies that there also exists a solution in the original almost-metric space (without the points in $C^*$) with a cost of at most $4 \cdot O(\alpha \OPT) = O(\alpha \OPT)$, as desired. 

We define $w^*_{in}(y^k) = |B_k(y) \cap X^*_{in}|$ and $w^*_{out}(y^k) = |B_k(y) \cap X^*_{out}|$. 
We start by extending the almost-metric space to points in $C^*$. For each $c^* \in C^*$ and $y \in Y$, we define $d_{\fM'}(y,c^*) = d(y,c^*)$. That is, the distance between points in $Y \cup C^*$ is simply equal to the original distance. 
The distance between $c^* \in C^*$ and $y^k$ is defined as $d_{\fM'}(c^*,y^k) = d_{\fM'}(c^*,y) + d_{\fM'}(y, y^k) = d_{\fM'}(c^*,y) + 2^k$. This extension still satisfies the triangle inequality, which more or less directly follows from the fact that the original metric satisfies the triangle inequality.
Next, we define a solution for the weighted \km with $z^{\fA}$ outliers objective in this extended almost metric-space, whose cost gives us an upper bound of $O(\alpha\OPT)$ for the optimal solution. We consider the optimal set of centers $C^*$ together with the set of outliers $\Xout^*$. We set $w^*_{out}(y^k) = |B_k(y) \cap \Xout^*|$ and $w^*_{in}(y^k) = w_k(y) - w^*_{out}(y^k)$. Splitting the weights $w$ into $w^*_{in}$ and $w^*_{out}$ together with the set of centers $C^*$ gives us a natural upper bound (up to a factor of $4$, as mentioned before) on the cost of the optimal solution on the instance $(Y',w')$ with $z^\fA$ outliers. First, one needs to verify that the number of declared outliers $\sum_{y \in Y}\sum_k w^*_{out}(y^k)$ in the solution is at most $z_{\fA}$. This follows in the exact same way as in the proof of \cref{thm:coreset_formal} (cf. \cref{eq:u2}). Thus, it remains to bound $\phi'((Y', w^*_{in}), C^*)$.

Let $x \in B_k(y)$ be arbitrary for some $k$ and $y \in Y$. Moreover, let $c^*_x := \argmin_{c^* \in C^*} \phi(x, c^*)$. We have
\begin{align}
\label{eq:one_direction}    
\phi'(y^k, C^*) 
&\le 2\phi'(y^k, y) + 2\phi'(y, C^*)\\
&= 2\phi'(y^k, y) + 2\phi(y, C^*)\nonumber\\
&\le 2\phi'(y^k, y) + 4\phi(y, x) + 4\phi(x, c^*_x)\nonumber\\
&\le 6\phi(y, x) + 4\phi(x, c^*_x) && \phi'(y^k, y) \le \phi(y, x),\nonumber
\end{align}
where we repeatedly used \cref{fact:generalised_triangle_inequality}. 
Hence, 
\begin{align}
\label{eq:imagine_wagons}
\phi'((Y', w^*_{in}), C^*)
&= \sum_{\text{$y^k \in Y'$}} w^*_{in}(y^k) \cdot \phi'(y^k, C^*)\\
&= \sum_{\text{$y^k \in Y'$}} \sum_{x \in X^*_{in} \cap B_k(y)} \phi'(y^k, C^*)\nonumber\\
&\le \sum_{\text{$y^k \in Y'$}} \sum_{x \in X^*_{in} \cap B_k(y)}  6\phi(y, x) + 4\phi(x, c^*_x)&&\text{\cref{eq:one_direction}}\nonumber\\
&\le 6 \left( \sum_{x \in X^*_{in} \cap X_{in}} \phi(y_x, x) \right) + 4  \left( \sum_{x \in X^*_{in} } \phi(x, c^*_x) \right)\nonumber \\
&\le 6 O(\alpha \OPT) + 4\OPT \nonumber\\
&= O(\alpha \OPT). &&  \nonumber
\end{align}

Now we consider the output $C^\fA$ and $w_{out}^\fA$  of $\fA$, i.e., for each $y^k \in Y'$ the algorithm $\fA$ decides which integer weight $w_{out}^\fA(y^k)$ of $y^k$ is labeled as an outlier. Define $w_{in}^\fA(y^k) = w_k(y) - w_{out}^\fA(y^k)$.  
This solution of instance $(Y',w)$ in $\fM'$ with $\sum_{y^k\in Y'}w^\fA_{out}(y^k)$ outliers naturally defines a solution of the original instance $X$ with $|X_{out}| + \sum_{y\in Y}w^\fA_{out}(y)$ outliers: for each center $c \in C^\fA$ in $\fM'$ we define $f(c) \in \fM$ as follows: if $c = y^k$ for some $y$ and $k$, we let $f(c) = y$. Otherwise, $f(c) = y$ for some $y$ and we simply set $f(c) = c$. Now, we choose $f(C^\fA) := \{f(c) \colon c \in C^\fA\} \subseteq Y \subseteq X$ as our set of centers. Moreover, whenever $w^\fA_{out}(y^k)$ is nonzero, we label $w^\fA_{out}(y^k)$ arbitrary points in $B_k(y)$ as outliers and we call the resulting ste of outliers $X^\fA_{out}$. Finally,  we define all the points in $X_{out} \cup X^\fA_{out}$ as outliers. 
Note that we have $\sum_{y^j \in Y'}w_{out}^\fA (y^j) \le (1+\eps)z^\fA$, so 
\begin{align*}
|X_{out} \cup X_{out}^\fA| 
&\le |X_{out}| + (1+\eps) z^\fA = |X_{out}| + (1+\eps)(z - |X_{out}| + \alpha\eps z) \\
&= (1+ O(\alpha\eps)) z
\end{align*}
which means that in total we label only $(1+O(\alpha\eps))z$ vertices as outliers, as desired. 
We now bound the cost of the set $X \setminus (X_{out} \cup  X^\fA_{out}) $ with respect to the set of centers $f(C^\fA)$. 

To that end, we note that for an arbitrary $x \in B_k(y)$ and any $c\in \fM'$ we have $\phi'(y^k, c) = (2^k + d'(y,c))^2$ and therefore
\begin{align}
\label{eq:black_sabbath}
    \phi(x, f(c)) 
    &\le (d(x, y) + d(y, f(c)))^2 \\
    &\le (2^{k+1} + d(y, f(c)))^2 && x \in B_k(y)\nonumber\\ 
    &\le 4(2^k + d(y,f(c)))^2\nonumber \\
    &\le 4(2^k + d'(y,c))^2 &&\text{Definition of $f$ and $d'$}\nonumber\\
    &= 4\phi'(y^k, c).\nonumber
\end{align}

Hence, we have
\begin{align}
\label{eq:tomasklus}
&\phi(X \setminus (X_{out} \cup X_{out}^\fA), f(C^\fA))\\
&= \sum_{y^k \in Y'} \sum_{x \in B_k(y)\setminus \Xout^\fA} \phi(x, f(C^\fA))\nonumber\\
&\le 4 \sum_{y^k \in Y'} \sum_{x \in B_k(y)\setminus \Xout^\fA} \phi'(y^k, C^\fA)&&\text{\cref{eq:black_sabbath}}\nonumber\\
&= 4 \sum_{y^k \in Y'} w^\fA_{in}(y^k) \phi'(y^k, C^\fA)\nonumber\\
&\le 4\beta \cdot 4\phi'((Y', w^*_{in}), C^*) && \text{$\fA$ is $\beta$ apx}\nonumber\\
&= O(\alpha\beta\OPT). && \text{\cref{eq:imagine_wagons}} \nonumber
\end{align}
\end{proof}

\end{appendices}

Note that assuming $ \frac{\OPT}{2 \eps z} \leq \Theta \leq \frac{\OPT}{\eps z}$, \cref{thm:streaming} implies that $\sum_{j} \tau_\Theta(X_j \cap \Xin^*,Y_j) = O(\OPT)$ with positive constant probability and therefore we can assume that $\alpha = O(1)$. This implies that our final solution has a cost of $O(\alpha\beta \OPT) = O(\beta)\OPT$ and moreover the solution outputs at most $(1 + O(\alpha \eps))z = (1 + O(\eps))z$ outliers, as desired. What remains to be discussed is how to remove the assumption that the algorithm knows $\OPT$. To that end, we again run the algorithm for $O(\log(n\Delta))$ guesses of $\OPT$, namely for all values $2^e$ with $e \in [\log(n\Delta)]$ in parallel as before. 
Each machine will send all $O(\log n)$ respective weighted sets $(Y', w)$ and $|X_{out, j}|$ to the coordinator that outputs the set of candidate centers that minimizes $\phi'((Y', w^\fA_{in}), C^\fA)$ among those where $X_{out}$ satisfies $|X_{out}| \le z$. \cref{eq:tomasklus} certifies that, with positive constant probability, we get an $O(\beta)$-approximation of $\OPT$, as needed. 

\end{document}